\pgfplotsset{compat=1.14}
\tikzstyle{vertex}=[circle, draw, inner sep=0pt, minimum size=6pt]
\tikzstyle{svertex}=[circle, draw, inner sep=0pt, minimum size=3pt]
\tikzstyle{dvertex}=[circle, draw, inner sep=0pt, minimum size=9pt]
\tikzstyle{vertbox}=[draw, inner sep=0pt, minimum size=8pt]
\newcommand{\vertex}{\node[vertex]}
\newcommand{\svertex}{\node[svertex]}
\newcommand{\dvertex}{\node[dvertex]}
\newcommand{\oset}[3][0ex]{
    \mathrel{\mathop{#3}\limits^{
		\vbox to#1{\kern-2\ex@
		\hbox{$\scriptstyle#2$}\vss}}}}
\newcommand{\cupdot}{\mathbin{\mathaccent\cdot\cup}}
\newcommand{\floor}[1]{\left\lfloor #1 \right\rfloor}
\newcommand{\ceil}[1]{\left\lceil #1 \right\rceil}
\newcommand{\interval}{\mathcal{I}}
\newcommand{\Cost}{\mathcal{C}}
\newcommand{\paths}{\mathcal{P}}
\newcommand{\cA}{\mathcal{A}}
\newcommand{\cB}{\mathcal{B}}
\newcommand{\cF}{\mathcal{F}}
\newcommand{\cX}{\mathcal{X}}
\newcommand{\rev}{\operatorname{rev}}
\newcommand{\Gmn}[2][]{G_{m#1,n,#2}}
\newcommand{\GL}{G^L}
\newcommand{\GM}{G^M}
\newcommand{\GR}{G^R}
\newcommand{\linkgadget}{\mathbf{L}}
\newcommand{\num}{\mathsf{num}}
\newcommand{\den}{\mathsf{den}}
\newcommand{\linkedge}{\mathtt{link}}
\newcommand{\tS}{\mathtt{S}}
\newcommand{\sols}{\mathfrak{L}}
\newcommand{\compl}{\varphi}
\newcommand{\complpoly}[1]{\varphi^{\operatorname{deg}(#1)}}
\newcommand{\complmult}[1]{\varphi^{\operatorname{lin}(#1)}}
\newcommand{\pl}{\operatorname{pl}}
\newcommand{\npl}{\operatorname{npl}}
\newcommand{\gr}{\operatorname{gr}}
\newcommand{\Zn}{\mathbb{Z}_n}
\newcommand{\Znhat}{\hat{\mathbb{Z}}_n}
\newcommand{\Expt}{\mathbb{E}}
\newcommand{\var}{\mathbf{var}}
\newcommand{\conv}{\mathbf{conv}}
\newcommand{\Grid}{\Upsilon}
\newcommand{\pred}{\mathsf{pred}}
\newcommand{\vlambda}{\vec{\lambda}}
\newcommand{\va}{\vec{a}}
\newcommand{\vx}{\vec{x}}
\newcommand{\vy}{\vec{y}}
\newcommand{\vz}{\vec{z}}
\newcommand*\quotefont{\fontfamily{LinuxLibertineT-LF}} 
\newcommand*\quotesize{60}
\newcommand*{\openquote}
    {\tikz[remember picture,overlay,xshift=-4ex,yshift=-2.5ex]
    \node (OQ) {\quotefont\fontsize{\quotesize}{\quotesize}\selectfont``};\kern0pt}
\newcommand*{\closequote}[1]
    {\tikz[remember picture,overlay,xshift=4ex,yshift={#1}]
    \node (CQ) {\quotefont\fontsize{\quotesize}{\quotesize}\selectfont''};}
\colorlet{shadecolor}{cyan!10}
\newcommand*\shadedauthorformat{\emph}
\newcommand*\authoralign[1]{
    \if#1l \def\authorfill{}\def\quotefill{\hfill}
    \else \if#1r \def\authorfill{\hfill}\def\quotefill{}
    \else \if#1c \gdef\authorfill{\hfill}\def\quotefill{\hfill}
    \else\typeout{Invalid option}
    \fi \fi \fi}
{\authoralign{#1}
\ifblank{#2}
    {\def\shadequoteauthor{}\def\yshift{-2ex}\def\quotefill{\hfill}}
    {\def\shadequoteauthor{\par\authorfill\shadedauthorformat{#2}}\def\yshift{2ex}}
\begin{snugshade}\begin{quote}\openquote}
{\shadequoteauthor\quotefill\closequote{\yshift}\end{quote}\end{snugshade}}
\declaretheorem[numberlike=equation]{Theorem}
\declaretheorem[numberlike=equation]{Lemma}
\declaretheorem[numberlike=equation]{Corollary}
\declaretheorem[numberlike=equation]{Conjecture}
\declaretheorem[numberlike=equation]{Proposition}
\declaretheoremstyle[bodyfont=\it,qed=$\lozenge$]{defstyle}
\declaretheorem[numberlike=equation,style=defstyle]{Definition}
\declaretheorem[numberlike=equation]{Claim}
\declaretheorem[numberlike=equation]{Fact}
\patchcmd{\ALG@step}{\addtocounter{ALG@line}{1}}{\refstepcounter{ALG@line}}{}{}
\newcommand{\ALG@lineautorefname}{Line}
\title{Parametric Shortest Paths in Planar Graphs}
\author{Kshitij Gajjar\thanks{\texttt{kshitij.gajjar@tifr.res.in}} }
\author{Jaikumar Radhakrishnan\thanks{\texttt{jaikumar@tifr.res.in}}}
\affil{Tata Institute of Fundamental Research, Mumbai, India}
\begin{document}
	
	%\onehalfspacing
	
	\maketitle
	
	%%%%%%%%%%%%%%%%%%%%%%%%%%%%%%%%%%%%%%%%%%%%%%%%%%%%%%%%%%%%%%%%%%%%%%%%%
	
	%%%%%%%%%%%%%%%%%%%%% A  B  S  T  R  A  C  T %%%%%%%%%%%%%%%%%%%%%%%%%%%%
	
	%%%%%%%%%%%%%%%%%%%%%%%%%%%%%%%%%%%%%%%%%%%%%%%%%%%%%%%%%%%%%%%%%%%%%%%%%
	
	\begin{abstract}
	We construct a family of planar graphs $\{G_n\}_{n\geq 4}$, where $G_n$ has $n$ vertices including a source vertex $s$ and a sink vertex $t$, and edge weights that change linearly with a parameter $\lambda$ such that, as $\lambda$ varies in $(-\infty,+\infty)$, the piece-wise linear cost of the shortest path from $s$ to $t$ has $n^{\Omega(\log n)}$ pieces. This shows that lower bounds obtained earlier by Carstensen (1983) and Mulmuley \& Shah (2001) for general graphs also hold for planar graphs, thereby refuting a conjecture of Nikolova (2009).
	
    Gusfield (1980) and Dean (2009) showed that the number of pieces for every $n$-vertex graph with linear edge weights is $n^{\log n + O(1)}$. We generalize this result in two ways. (i) If the edge weights vary as a polynomial of degree at most $d$, then the number of pieces is $n^{\log n + (\alpha(n)+O(1))^d}$, where $\alpha(n)$ is the slow growing inverse Ackermann function. (ii) If the edge weights are linear forms of three parameters, then the number of pieces, appropriately defined for $\mathbb{R}^3$, is $n^{(\log n)^2+O(\log n)}$.
	\end{abstract}
	
	%%%%%%%%%%%%%%%%%%%%%%%%%%%%%%%%%%%%%%%%%%%%%%%%%%%%%%%%%%%%%%%%%%%%%%%%%
	
	%%%%%%%%%%%%%%%%% I  N  T  R  O  D  U  C  T  I  O  N %%%%%%%%%%%%%%%%%%%%
	
	%%%%%%%%%%%%%%%%%%%%%%%%%%%%%%%%%%%%%%%%%%%%%%%%%%%%%%%%%%%%%%%%%%%%%%%%%
	
	\section{Introduction}
	
	We consider the following \emph{parametric shortest path problem} on graphs. The input is a directed acyclic graph with two special vertices $s$ and $t$. The edges have weights that vary linearly with a real-valued parameter $\lambda$, that is, the weight of each edge $e$ is a function of the form $w_e(\lambda)=a_e \lambda + b_e$, for some real numbers $a_e$ and $b_e$. The cost of an $s$-$t$ path $P$ is the sum of the weights of the edges on it; therefore this cost is also a linear function of $\lambda$ of the form $\Cost(P)(\lambda)= \sum_{e\in P} a_e \lambda + \sum_{e\in P} b_e$. The cost of the shortest $s$-$t$ path is then given by
    \[ \Cost(\lambda) = \min_{P}\ \Cost(P)(\lambda),\]
    where $P$ ranges over all $s$-$t$ paths; this function is the piece-wise linear lower envelope (\autoref{fig:planar3x3}) of the linear costs provided by the $s$-$t$ paths. The main object of our investigation is the number of pieces in this envelope. This quantity is of interest in several applications; in particular, determining this quantity for \emph{planar graphs} has been a subject of several studies.
    
    %We will elaborate on this in the next section.
    
    \begin{figure}
	\begin{center}
	\begin{tikzpicture} [scale=0.90]
        \draw [->, gray!65, line width=1mm] (7,-1) to (14.5,-1);
        \node at (15,-1) {\large{\bf $\lambda$}};
        \draw [->, gray!65, line width=1mm] (7,-1) to (7,5);
        \node at (7,5.5) {\large{\bf cost of path}};
        
        % (7,-0.5)
        \draw[teal] (6.8,-0.7)--(12,4.5); \draw[step=.3, yshift=-1cm, densely dotted] (6,0) grid (6.6,0.6); \draw[very thick, red] (6,-1)--(6.6,-1)--(6.6,-0.4);
        % (8,0.5)
        \draw[teal] (6.8,0.2)--(14,2); \draw[step=.3, yshift=-0.1cm, densely dotted] (6,0) grid (6.6,.6); \draw[very thick, red] (6,-0.1)--(6.3,-0.1)--(6.3,0.2)--(6.6,0.2)--(6.6,0.5);
        % (10,1)
        \draw[teal] (6.8,1.32)--(14,0.6); \draw[step=.3, yshift=+0.12cm, densely dotted] (6,0.9) grid (6.6,1.5); \draw[very thick, red] (6,1.02)--(6,1.32)--(6.6,1.32)--(6.6,1.62);
        % (12,0.8)
        \draw[teal] (6.8,3.4)--(14,-0.2); \draw[step=.3, yshift=+0.1cm, densely dotted] (6,3) grid (6.6,3.6); \draw [very thick, red] (6,3.1)--(6.3,3.1)--(6.3,3.7)--(6.6,3.7);
        % (14,3)
        
        \draw[magenta, very thick] (7,-0.5)--(8,0.5)--(10,1)--(12,0.8)--(14,-0.2);
        
        \draw[teal] (6.8,2.3)--(11,4.5); \draw[step=.3, yshift=-0.1cm, densely dotted] (6,2.1) grid (6.6,2.7); \draw [very thick, red] (6,2)--(6,2.3)--(6.3,2.3)--(6.3,2.6)--(6.6,2.6);
        \draw[teal] (6.8,4.5)--(14,3); \draw[step=.3, densely dotted] (6,4.2) grid (6.6,4.8); \draw [very thick, red] (6,4.2)--(6,4.8)--(6.6,4.8);
        
        \begin{scope} [xshift=-1.5cm]
        {
		    \vertex at (0,0) [fill=blue, label=below left:$s$] (v00) {};
		    \vertex at (0,2) [fill=blue] (v02) {};
		    \vertex at (0,4) [fill=blue] (v04) {};
			
		    \vertex at (2,0) [fill=blue] (v20) {};
		    \vertex at (2,2) [fill=blue] (v22) {};
		    \vertex at (2,4) [fill=blue] (v24) {};
			
		    \vertex at (4,0) [fill=blue] (v40) {};
		    \vertex at (4,2) [fill=blue] (v42) {};
		    \vertex at (4,4) [fill=blue, label=above right:$t$] (v44) {};
		}
		\end{scope}
		
		\begin{scope} [decoration={markings, mark=at position 0.6 with {\arrow[scale=2,>=stealth,gray]{>}}}]
		    \draw [postaction={decorate}] (v00)-- node[below, opacity=0.6] {\scriptsize{$a_1\lambda + b_1$}} (v20);
		    \draw [postaction={decorate}] (v20)-- node[below, opacity=0.6] {\scriptsize{$a_2\lambda + b_2$}}(v40);
		    \draw [postaction={decorate}] (v02)-- node[above, opacity=0.6] {\scriptsize{$a_3\lambda + b_3$}}(v22);
		    \draw [postaction={decorate}] (v22)-- node[below, opacity=0.6] {\scriptsize{$a_4\lambda + b_4$}}(v42);
		    \draw [postaction={decorate}] (v04)-- node[above, opacity=0.6] {\scriptsize{$a_5\lambda + b_5$}}(v24);
		    \draw [postaction={decorate}] (v24)-- node[above, opacity=0.6] {\scriptsize{$a_6\lambda + b_6$}}(v44);
		    
		    \draw [postaction={decorate}] (v00)-- node[opacity=0.6, rotate=90,yshift=0.25cm] {\scriptsize{$c_1\lambda + d_1$}}(v02);
		    \draw [postaction={decorate}] (v02)-- node[opacity=0.6, rotate=90,yshift=0.25cm] {\scriptsize{$c_2\lambda + d_2$}}(v04);
		    \draw [postaction={decorate}] (v20)-- node[opacity=0.6, rotate=90,yshift=0.25cm] {\scriptsize{$c_3\lambda + d_3$}}(v22);
		    \draw [postaction={decorate}] (v22)-- node[opacity=0.6, rotate=90,yshift=-0.25cm] {\scriptsize{$c_4\lambda + d_4$}}(v24);
		    \draw [postaction={decorate}] (v40)-- node[opacity=0.6, rotate=90,yshift=-0.25cm] {\scriptsize{$c_5\lambda + d_5$}}(v42);
		    \draw [postaction={decorate}] (v42)-- node[opacity=0.6, rotate=90,yshift=-0.25cm] {\scriptsize{$c_6\lambda + d_6$}}(v44);
		\end{scope}
	\end{tikzpicture}
	\caption{The figure on the left is a directed acyclic graph (horizontal edges go rightwards and vertical edges go upwards) with edge weights as linear functions of $\lambda$. The figure on the right plots $\lambda$ versus the (linear) costs of the $6$ different $s$-$t$ paths in the graph. The piece-wise linear lower envelope of this plot has $4$ pieces.}
	\label{fig:planar3x3}
	\end{center}
	\end{figure}
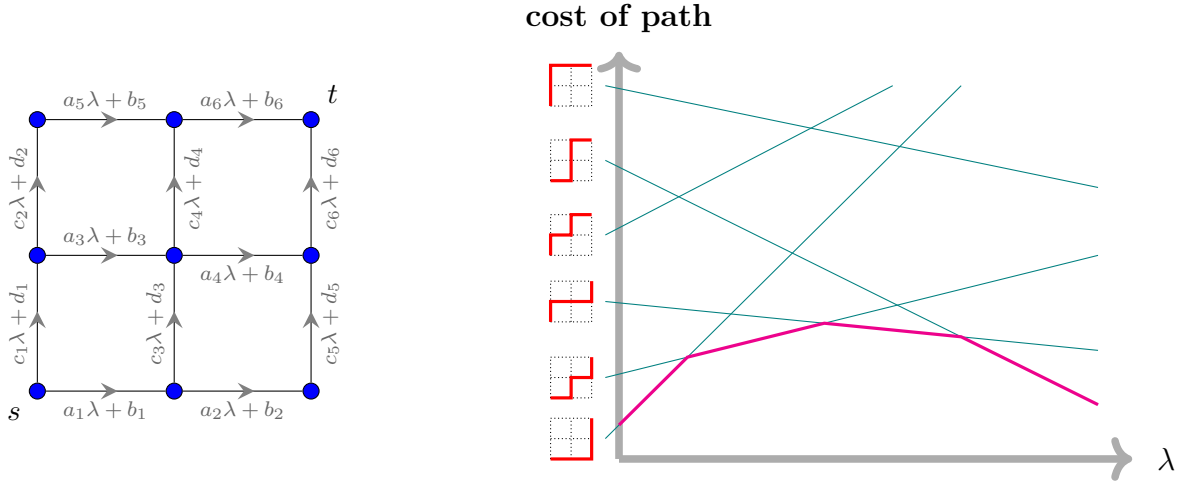
    
    Let the \emph{parametric shortest path complexity}, denoted by $\compl(n,\beta(n))$, be the maximum possible number of pieces in $\Cost(\lambda)$ for graphs on $n$ vertices, where the bit lengths of the coefficients in the weights of the edges are bounded by $\beta(n)$. 
    Let $\compl^{\pl}(n,\beta(n))$ be the parametric complexity when the graphs are restricted to be planar. We show the following.
    
    \begin{restatable}[Main result, lower bound on the parametric complexity of planar graphs]{Theorem}{thmmainresult} \label{thm:mainresult}
    \[ \compl^{\pl}(n, O((\log n)^3)) = n^{\Omega(\log n)}. \]
    \end{restatable}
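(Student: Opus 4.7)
The plan is to adapt the recursive breakpoint-multiplication construction of Carstensen and Mulmuley--Shah, which yields $n^{\Omega(\log n)}$ pieces in general graphs, to the planar setting by designing a planar \emph{link gadget} $\linkgadget$. I would build $G_n$ inductively over $\Theta(\log n)$ levels, where the multiplier on the breakpoint count at level $k$ grows with the size at that level. Concretely, at the $k$-th level one would combine $m_k$ copies of the preceding graph into a new graph with $m_k$ times as many breakpoints but only a constant factor more vertices; choosing $m_k$ in the standard Mulmuley--Shah way (so that $\prod m_k = n^{\Theta(\log n)}$ while $\sum |H_k| = O(n)$) yields the target bound. Throughout, I would maintain three named copies $\GL, \GM, \GR$ at each level as a scaffold for the planar embedding, applying an affine reparametrisation of $\lambda$ inside each copy so that their breakpoints fall into pairwise disjoint $\lambda$-ranges.

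The crucial design question is how to perform one level of the recursion planarly. I would place the copies of the current graph $H$ along a planar strip so that they share only the outer face, and then wire them together using $\linkgadget$, which is sandwiched along the boundaries between consecutive copies. The link gadget introduces extra edges whose slopes and intercepts are chosen so that, for each $\lambda$, the $s$-$t$ shortest path in the combined graph is forced through the unique copy whose breakpoints live in the $\lambda$-interval containing $\lambda$. The essential planarity trick is to preserve at each level a long outer face along which the next level's link gadgets can attach without crossings; this is the role of the distinct gadget types $\GL, \GM, \GR$. The slopes and intercepts used in $\linkgadget$ will be small integers scaled by a factor that grows geometrically in the level index. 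Since there are $O(\log n)$ levels, the coefficient magnitudes grow as $2^{O((\log n)^2)}$; the stated $O((\log n)^3)$ bit-length budget comfortably accommodates an extra logarithmic factor for auxiliary integers inside $\linkgadget$.

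The main obstacle will be the \emph{no-loss lemma}: for every breakpoint $\lambda^{*}$ of the lower envelope of some sub-copy, the corresponding $s$-$t$ path in the combined graph must be strictly cheaper at $\lambda^{*}$ than every path that bypasses that copy. Non-planar constructions satisfy this condition trivially by attaching very steep or very shallow auxiliary slopes freely to all copies, but planarity restricts how edges can be routed between the three copies to those compatible with an outerplanar interface. Establishing the no-loss lemma will require a careful case analysis over all ways an $s$-$t$ path can traverse $\linkgadget$, together with a numerical condition on the affine reparametrisations that separates the three $\lambda$-ranges by a margin exceeding the maximum slope swing produced inside any single copy. Once a single recursion step is verified, the remaining ingredients -- planarity of the inductive construction, the vertex-count recurrence, and the bit-length accounting -- follow routinely, yielding $\compl^{\pl}(n, O((\log n)^3)) = n^{\Omega(\log n)}$.
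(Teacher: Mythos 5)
Your high-level plan---adapt the Carstensen / Mulmuley--Shah inductive breakpoint-multiplication to the planar setting by replacing the dense bipartite linking stage with a planar gadget---is the right one, and the scaffolding you describe (three blocks $\GL,\GM,\GR$ per level, affine reparametrisation of $\lambda$ at each level, $O(\log n)$ levels, bit lengths polynomial in $\log n$) matches the paper's outline. But the proposal does not actually contain the one idea that makes the planarization go through, and it misreads the role of $\GL,\GM,\GR$.

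The central difficulty is not, as you suggest, ``preserving an outer face along which the next level's link gadgets can attach,'' nor is it the case that $\GL,\GM,\GR$ are ``distinct gadget types'' introduced for planar routing. In the Mulmuley--Shah scheme, $\GL$ and its mirror $\GM$ are a fold-back that returns the path to a narrow exit layer of $B$ vertices, and $\GR$ is a copy with $B+n$ entry vertices; the bipartite linking stage $\linkgadget(B,n)$, which connects $B$ exit vertices to $B+n$ entry vertices with all admissible edges, is what multiplies the piece count. That gadget is dense and non-planar, and composing planar $\GL,\GM,\GR$ in series does nothing to planarize it. The problem your proposal must solve---and does not---is how to realise $\linkgadget(B,n)$ planarly without losing its shortest-path behaviour. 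The paper does this by literally drawing the bipartite graph with straight segments in a strip, promoting each crossing to a new vertex, and distributing each original edge weight proportionally along the segment's sub-edges. This creates spurious paths that hop between segments at crossing vertices, and the crux is to show these hops are always strictly costly. That requires a concrete numerical design: the weights $w_{b,b+r}$ are chosen to contain a convex-in-$r$ term $K\cdot r(r+1)/2$ (plus a linear $\lambda$ term whose coefficient is proportional to the segment's slope $r$), and a second-moment argument shows that any path that deviates from a single segment picks up at least $\frac{K}{2}\var[r_{\mathbf{i}}] \geq K n^{-2}$ extra cost, where $r_{\mathbf{i}}$ is the slope of a random sub-edge weighted by horizontal extent. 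Without this convexity/variance mechanism (or an explicit substitute), there is no reason the spurious crossing paths cannot undercut the intended ones, and the ``no-loss lemma'' you flag as the main obstacle remains exactly that---an obstacle your proposal names but does not overcome.

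Two smaller issues. First, the ``$m_k$ copies per level with $\prod m_k = n^{\Theta(\log n)}$'' framing is not what Mulmuley--Shah (or this paper) does: each level multiplies the piece count by a fixed factor $n$ and roughly triples the vertex count, and that is what makes the vertex budget work out. Second, the bit-length accounting needs more care than ``geometric growth in the level index'': the paper's coefficients are bounded by $(400NB)^{5m^2}$ with $m = \Theta(\log n)$ and $N = n^2$, which gives bit length $O((\log n)^3)$ precisely because the exponent is quadratic in $m$; a purely geometric per-level blow-up would give only $O((\log n)^2)$ and would in fact be too optimistic about what the recursive scaling forces on you.
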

    
    Similar results were known for general graphs. Carstensen~\cite{carstensen,carstensenthesis} showed that $\compl(n,\infty) = n^{\Omega(\log n)}$; her result was simplified and extended by Mulmuley \& Shah~\cite{mulmuleyshah}, who showed that $\compl(n, O((\log n)^3)) = n^{\Omega(\log n)}$.  For \emph{planar graphs,}  Nikolova~\cite[Conjecture 6.1.6]{nikolovathesis} conjectured that the complexity is bounded by a polynomial in $n$, that is, $\compl^{\pl}(n,\infty) = n^{O(1)}$. Our main result provides a strong (with bit length $O((\log n)^3)$) refutation of this conjecture.
    
    The above lower bounds are tight. Carstensen~\cite[Page 100]{carstensenthesis} presented a matching upper bound, $\compl(n,\infty) = n^{\log n+O(1)}$, which she attributed to Daniel Gusfield~\cite{gusfieldthesis,gusfield} (a similar argument, attributed to Brian Dean, was presented by Nikolova~\cite[Page 86]{nikolovathesis}). We generalize these upper bounds in two ways based on how the edge weights $w_e$ vary: (i) $w_e(\lambda)$ is a polynomial of degree at most $d$ in  $\lambda \in \mathbb{R}$, and (ii) $w_e(\lambda_1,\lambda_2,\lambda_3)=a_e\lambda_1+b_e\lambda_2+c_e\lambda_3$, where $(\lambda_1,\lambda_2,\lambda_3)\in\mathbb{R}^3$. A slightly different definition of $\compl$ is required for these generalizations.
    
    %We consider graphs whose edge weights vary as a polynomial of degree $d$ in the parameter $\lambda$. Fix a graph $G$ whose edge weights vary as degree $d$ polynomials. We say that a sequence of paths $P_1,P_2,\ldots,$ is admissible for $G$ if there exists values $\lambda_1 < \lambda_2 < \cdots < $ such that $P_i$ is the shortest path in $G$ for $\lambda=\lambda_i$ and $\Cost(P_{i+1})(\lambda_{i+1}) < \Cost(P_i)(\lambda_{i+1})$. The parametric complexity $\compl_d(G)$ is the length of the longest sequence of admissible paths for $G$. Let $\compl_d(n,\beta(n))$ be the maximum value of $\compl_d(G)$ over all graphs with $n$ vertices. 
    
    Consider a graph $G$ whose edge weights depend on a parameter $\lambda$ taking values in a set $\Lambda$. As $\lambda$ varies, the minimum cost $s$-$t$ path might vary. We say that a set $\paths$ of $s$-$t$ paths \emph{covers} $G$ if for every value $\lambda \in \Lambda$, the set $\paths$ contains a minimum cost $s$-$t$ path of $G$. We define the parametric shortest path complexity of $G$ as
    \[ \compl(G) = \min_{\paths:\,\paths \text{ covers } G} |\paths|.\]
    
    For the generalization (i), let $\complpoly{d}(n)$ be the maximum value of $\compl(G)$ as $G$ varies over $n$-vertex graphs whose edge weights are polynomials of degree at most $d$ in a parameter $\lambda \in \mathbb{R}$. We use the inverse Ackermann function (\autoref{def:ackermann}) to upper bound $\complpoly{d}(n)$.
    \begin{restatable}[Upper bound with univariate polynomial edge weights]{Theorem}{thmcomplpoly} \label{thm:complpoly}
    \[\complpoly{d}(n) = n^{\log n + (\alpha(n)+O(1))^d},\] where $\alpha(n)$ is the extremely slow growing inverse Ackermann function.
    \end{restatable}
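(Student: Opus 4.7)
My plan is to extend the classical Gusfield--Dean upper bound argument, which yields $\complpoly{1}(n) = n^{\log n + O(1)}$ in the linear case, to polynomial edge weights of degree at most~$d$. The central combinatorial ingredient is the substitution of ``the lower envelope of $m$ lines has at most $m$ pieces'' by the Davenport--Schinzel (DS) bound ``the lower envelope of $m$ univariate polynomials of degree $\leq d$ has at most $\lambda_d(m)$ pieces,'' which is valid because any two such polynomials intersect in at most $d$ points. Crucially, sums of degree-$\leq d$ polynomials remain of degree $\leq d$, so path-cost functions throughout the argument remain polynomials of degree at most $d$ and the DS estimate is applicable at every recursive step.

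\textbf{Main steps.} First, I would replay the Gusfield--Dean divide-and-conquer on the DAG: take a balanced bisection of the topological order into halves $V_L, V_R$ of size $\leq \lceil n/2 \rceil$, so that every $s$-$t$ path crosses the cut via a unique ``bridging'' edge. For each bridging edge $e=(u,v)$, the set of $s$-$t$ paths that are optimal for some $\lambda$ and pass through $e$ is in bijection with joint $(P_L,P_R)$ optima, where $P_L$ is an $s$-$u$ path optimal for some $\lambda$ in the left subgraph (at most $\complpoly{d}(\lceil n/2\rceil)$ choices) and $P_R$ similarly in the right subgraph. Summing over bridging edges and applying the DS envelope estimate to the resulting collection of degree-$\leq d$ cost polynomials yields the recurrence (schematically)
\[ \complpoly{d}(n) \;\leq\; \lambda_d\bigl(c \cdot n^2 \cdot \complpoly{d}(\lceil n/2 \rceil)\bigr). \]
Second, I would unroll this recurrence over $\log n$ levels. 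The backbone contributes the $n^{\log n}$ term exactly as in Gusfield--Dean (since for $d=1$ the DS factor $\lambda_1(m)/m=1$ disappears and one recovers $\complpoly{1}(n) \le n^{\log n + O(1)}$). For $d \geq 2$, the per-level overhead $\log(\lambda_d(m)/m)$ is controlled and, summed across all $\log n$ levels, adds an additional term in the exponent that fits the claimed form $(\alpha(n)+O(1))^d$, using that $\alpha$ is essentially constant across polynomial scales of its argument.

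\textbf{Main obstacle.} The most delicate step is matching the accumulated DS overhead to the precise exponent $(\alpha(n)+O(1))^d$ rather than a weaker expression. This requires careful control of the DS rate $\log(\lambda_d(m)/m)$---which, by refined DS estimates, grows like $(\alpha(m))^{\Theta(d)}/\text{poly}(d)$---and of its interaction with the $\log n$-deep recursion; the ``inner'' arguments fed to $\alpha$ grow polynomially at each level, but the scarce growth of $\alpha$ at polynomial scales lets a single outer value $\alpha(n)$ absorb all the contributions. The exponent~$d$ ultimately emerges from the $d$-fold intersection structure of degree-$d$ polynomials encoded in the DS bound. Should this direct accounting yield only a weaker exponent, a fallback is to induct on~$d$: decompose a degree-$d$ edge weight as a linear function plus a degree-$(d-1)$ correction and recursively invoke the theorem at degree $d-1$, inflating the exponent by at most one factor of $(\alpha(n)+O(1))$ per step and giving the target form after $d$ inductive steps.
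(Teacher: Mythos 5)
Your plan captures the paper's key idea exactly: take the Gusfield--Dean divide-and-conquer, observe that path costs remain polynomials of degree $\leq d$ at every level of the recursion, and replace the ``lower envelope of lines'' count with a Davenport--Schinzel bound for degree-$d$ curves, using the extreme slowness of $\alpha$ to absorb the $\log n$-fold accumulation of DS overhead into a single $(\alpha(n)+O(1))^d$ in the exponent. The one structural difference is the divide-and-conquer pivot. The paper recurses on path length, not vertex count: it defines $f(n,\ell)$ as the maximum length of a sequence of shortest paths of at most $\ell$ edges, and splits each path at its own \emph{middle vertex} (the vertex leaving $\leq\lceil\ell/2\rceil$ edges on each side), giving $f(n,\ell)\leq 2n\,f(n,\lceil\ell/2\rceil)\cdot 2^{(\alpha(n)+2)^d}$, with $f(n,1)\leq 1$ as the base. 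You instead fix a balanced cut of the topological order and split at the bridging edge, which gives a factor of order $n^2$ (bridging edges) per level instead of $2n$ (middle vertices); both telescope to $n^{\log n+O(1)}$, so the discrepancy is harmless. A minor imprecision worth noting: the quantity in your recurrence should be a \emph{sequence length} (the paper's $f(n,\ell)$, allowing repeated paths, which is exactly what the DS bound controls) rather than the covering-set size $\complpoly{d}$; these differ by up to one DS factor, but since a $\log n$-fold stack of such factors is already absorbed by the $O(1)$ slack, the final bound survives. Your fallback of inducting on $d$ by splitting a degree-$d$ weight into a linear part plus a degree-$(d-1)$ remainder would not work --- decomposing each edge's cost function does not decompose the shortest-path optimization --- but since it is offered only as a fallback it does not affect the main argument.
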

    For the generalization (ii), let $\complmult{k}(n)$ be the maximum value of $\compl(G)$ as $G$ varies over $n$-vertex graphs whose edge weights have the form $w_e(\vlambda)=\va_e \cdot \vlambda$, where $\va_e\in\mathbb{R}^k,\vlambda\in\mathbb{R}^k$.
    \begin{restatable}[Upper bound with three-parameter linear edge weights]{Theorem}{thmcomplmult} \label{thm:complmult}
    \[ \complmult{3}(n) = n^{(\log n)^2+O(\log n)}. \]
    \end{restatable}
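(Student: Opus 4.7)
The plan is to generalize the Gusfield--Dean divide-and-conquer argument---which gives $\complpoly{1}(n) = n^{\log n + O(1)}$ in the one-parameter affine case---to the three-parameter linear setting. Fixing one homogeneous coordinate of $\vlambda \in \mathbb{R}^3$, a three-parameter linear problem is equivalent to a two-parameter affine one, where each path $P$ has cost $a_P \lambda_1 + b_P \lambda_2 + c_P$; the shortest-path cost function is then a concave polyhedral function on $(\lambda_1, \lambda_2) \in \mathbb{R}^2$, and $\compl(G)$ equals the number of facets of the associated three-dimensional lower-envelope polytope. As in the one-parameter case, we may assume $G$ is a layered DAG, since inserting dummy vertices to layer a DAG only inflates the vertex count polynomially and this blowup is absorbed into the $O(\log n)$ term of the exponent.

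Recall the one-parameter recursion: splitting $G$ at a middle layer $V_m$ and using that the sum of two concave piecewise-linear functions on $\mathbb{R}$ with $p$ and $q$ pieces has only $p+q-1$ pieces yields $\complpoly{1}(n) \le 2n \cdot \complpoly{1}(n/2)$. In two parameters the same step is fatal: the overlay of two concave polyhedral functions on $\mathbb{R}^2$ can have $\Theta(pq)$ pieces in the worst case, which would produce a useless $2^{O(n)}$ bound.

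The key idea is to control the two-dimensional overlay by invoking the one-parameter Gusfield bound along each edge of the subdivisions. Fix a middle-layer vertex $v \in V_m$ and let $f_v, g_v \colon \mathbb{R}^2 \to \mathbb{R}$ denote the concave polyhedral shortest-$s$-to-$v$ and shortest-$v$-to-$t$ cost functions; by induction each has at most $\complmult{3}(n/2)$ linear pieces, and hence its planar subdivision has $O(\complmult{3}(n/2))$ edges (by Euler's formula applied to the dual three-dimensional polytope). Pick an edge $e$ of $f_v$'s subdivision and restrict $g_v$ to the affine line $L$ containing $e$: parametrising $L$ as $\vlambda_0 + \mu \vec{d}$, the restriction $\mu \mapsto \min_{Q}[\va_Q \cdot \vlambda_0 + \mu(\va_Q \cdot \vec{d})]$ is a one-parameter affine shortest-$v$-to-$t$ problem on a subgraph of size $\le n/2$, and so by Gusfield--Dean has at most $\complpoly{1}(n/2)$ pieces. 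Hence $e$ crosses at most $\complpoly{1}(n/2)$ cells of $g_v$; combined with the symmetric bound for edges of $g_v$, the overlay of $f_v$ and $g_v$ has only $O(\complmult{3}(n/2) \cdot \complpoly{1}(n/2))$ cells---not the worst-case $\complmult{3}(n/2)^2$. Summing over $v \in V_m$ gives
\[
\complmult{3}(n) \;\le\; n \cdot \complmult{3}(n/2) \cdot \complpoly{1}(n/2),
\]
which, using $\log \complpoly{1}(n) = O((\log n)^2)$, unrolls to $\log \complmult{3}(n) = O((\log n)^3)$, i.e., $\complmult{3}(n) = n^{(\log n)^2 + O(\log n)}$.

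The main obstacle is making the ``each edge of $f_v$ crosses $\le \complpoly{1}(n/2)$ cells of $g_v$'' step rigorous: I need to verify that the restricted one-parameter problem genuinely lives on a subgraph with $\le n/2$ vertices (so that the inductive Gusfield--Dean bound applies with the smaller parameter), and I need the Euler-type bound $E = O(F)$ for the convex polyhedral subdivisions in a sufficiently clean form to count overlay edges in terms of face counts. Once these combinatorial bookkeeping items are settled, unrolling the recursion is routine.
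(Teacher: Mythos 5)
Your proposal is correct in its broad strokes and lands on essentially the same proof the paper gives, just phrased in the polar-dual (lower-envelope / planar-arrangement) language rather than the paper's convex-hull / Minkowski-sum language. The paper views each path cost $\Cost(P)(\vlambda) = \va_P\cdot\vlambda$ as a point $\va_P\in\mathbb{R}^3$, defines $\conv(G,\ell)$ as the convex hull of these points over paths of $\ell$ edges, bisects each path at a middle vertex $v$ to write the relevant hull as a union of Minkowski sums $\conv(G_{sv},\ell/2)+\conv(G_{vt},\ell/2)$, and controls the face count of each such sum via a projection lemma (\autoref{lm:savingface}): the number of faces of $\cA+\cB$ whose unique decomposition involves a given face $F_\cA$ is at most the number of faces of the orthogonal projection $\Pi_{F_\cA}(\cB)$. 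Your step ``restrict $g_v$ to the line $L$ supporting an edge $e$ of $f_v$, then apply the one-parameter Gusfield bound'' is exactly the dual of this projection step, with the dual of ``edge of $f_v$'s subdivision'' being ``edge of $\cA$'', and the dual of ``project $\cB$ onto $\tS_{F}^{\perp}$'' being ``restrict the normal fan of $\cB$ to the affine hull of the edge''. So the key lemma you flag as the main obstacle is precisely what the paper proves in \autoref{lm:savingface}, and the Euler-formula bookkeeping you invoke corresponds to the paper's use of \eqref{eq:vert}--\eqref{eq:euler} to pass between vertex and edge/facet counts.

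The one genuine point of departure, and the place you should be careful, is the recursion parameter. You halve the \emph{vertex count} $n$ by splitting a layered DAG at a middle layer, and then apply $\complpoly{1}(n/2)$ for the line restriction. For that to be valid you must actually establish that the $s$-to-$v$ and $v$-to-$t$ subgraphs each have at most $n/2$ vertices, which requires choosing a vertex-balanced layer and absorbing the layering preprocessing into the final exponent --- doable, but a real obligation, and notably you never state what $\complmult{3}(1)$ is nor verify the base case. The paper avoids this entirely by recursing on the path length $\ell$, keeping the graph $G$ (and hence $n$) fixed, and using $\complmult{2}(n)$ rather than $\complpoly{1}(n/2)$ for the projected/restricted subproblem; the geometric and combinatorial steps then never depend on a vertex-balanced split. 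If you carry your $n$-halving recursion through carefully it in fact yields a marginally smaller constant in the exponent (the unrolled sum is $\sum_{j<\log n} j^2 = \Theta((\log n)^3)$, with a leading constant $1/3$ instead of $1$), but the stated bound $n^{(\log n)^2 + O(\log n)}$ follows either way, and the paper's parameterization is cleaner because it sidesteps the balanced-split bookkeeping.
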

    
    \paragraph{Remarks:} (i) Note that upper bounds in~\autoref{thm:complpoly} and~\autoref{thm:complmult} grow only moderately (for small $d$). (ii) \autoref{thm:complmult} leads to the natural question whether similar bounds can be shown for $\complmult{k}(n)$ in general; unfortunately, our proof method fails when $k >3$. (iii) A bound of the form $\complmult{2}(n) = n^{\log n + O(1)}$ can be derived using our method; this implies Gusfield's bound cited above.

    \subsection{Significance of the main result}
    
    In this section, we present some consequences of our main result (\autoref{thm:mainresult}).
    
    \paragraph{\textit{PRAM lower bounds:}} From their result (that is, $\compl(n, O((\log n)^3)) = n^{\Omega(\log n)}$), Mulmuley \& Shah~\cite{mulmuleyshah} derived a lower bound on the running time of unbounded fan-in PRAMs with bit operations with a small number of processors solving the shortest path problem. \autoref{thm:mainresult} allows us to make a similar claim for planar graphs (see~\autoref{sec:mulmuley-shah-discussion} for a discussion on this).
    
    \begin{Claim}\label{thm:corollarymulmuley}
        There exist constants $\alpha>0, \epsilon>0$, and an explicitly described family of weighted planar graphs $\{G_n\}$ $(G_n$ has $n$ vertices, and the edge weights of $G_n$ are $O((\log n)^3)$ bits long$)$, such that for infinitely many $n$, every unbounded fan-in PRAM algorithm (without bit operations) with at most $n^\alpha$ processors requires at least $\epsilon \log n$ steps to compute the shortest $s$-$t$ path in $G_n$.
    \end{Claim}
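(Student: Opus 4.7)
The plan is to derive this claim as a black-box corollary of our main result (\autoref{thm:mainresult}) combined with Mulmuley's general PRAM lower-bound technique, following the route used by Mulmuley \& Shah~\cite{mulmuleyshah} for general (non-planar) graphs. First, I invoke \autoref{thm:mainresult} to extract, for infinitely many $n$, an explicit planar graph $G_n$ on $n$ vertices with linear edge weights $a_e\lambda+b_e$ of bit length $O((\log n)^3)$ whose lower-envelope cost function $\Cost(\lambda)$ attains $n^{\Omega(\log n)}$ pieces. This is the only feature of our construction that enters the rest of the argument.

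Next, I apply Mulmuley's parametric-complexity-to-depth transfer theorem: any unbounded fan-in algebraic PRAM (no bit operations) with $P$ processors computing the shortest $s$-$t$ cost (or path) on $n$-vertex instances whose parametric complexity is $N$ and whose coefficients occupy $\beta$ bits must run for at least $\Omega\bigl(\log N / \log(nP\beta)\bigr)$ steps. This is precisely the inequality that Mulmuley \& Shah use to deduce their $\epsilon \log n$ bound from $\compl(n, O((\log n)^3)) = n^{\Omega(\log n)}$. Substituting $N = n^{\Omega(\log n)}$, $\beta = O((\log n)^3)$, and $P \le n^\alpha$ yields
\[
\Omega\!\left(\frac{(\log n)^2}{(1+\alpha)\log n + O(\log\log n)}\right) = \Omega(\log n / \alpha),
\]
and the claimed $\epsilon\log n$ lower bound then follows by fixing any sufficiently small positive constant $\alpha$ and taking, say, $\epsilon = 1/(2\alpha)$.

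The only delicate point — rather than a genuine obstacle — is confirming that Mulmuley's theorem, originally formulated for general graphs, transfers to the planar-restricted setting without modification. The technique is insensitive to structural restrictions on the underlying graph class: what it genuinely exploits is the piecewise-linear structure of $\Cost(\lambda)$ as $\lambda$ varies, together with the bit length of the linear coefficients. Thus the reduction is truly black-box once \autoref{thm:mainresult} is substituted in place of the Carstensen / Mulmuley--Shah lower bound on general graphs; \autoref{sec:mulmuley-shah-discussion} spells out the correspondence between the problem formulation (identifying the shortest $s$-$t$ path in $G_n$) and the hypotheses of Mulmuley's general theorem, which is where I would direct the reader for the verification.
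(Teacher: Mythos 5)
Your approach is the same as the paper's: combine \autoref{thm:mainresult} with Mulmuley's parametric-complexity-to-PRAM-depth transfer theorem, exactly as Mulmuley \& Shah did for general graphs, and observe that the transfer theorem is agnostic to structural restrictions on the graph class. The paper's \autoref{sec:mulmuley-shah-discussion} does precisely this.

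Two small points of friction. First, the inequality you cite, $t = \Omega\bigl(\log N / \log(nP\beta)\bigr)$, is a variant of what the paper actually invokes: the paper's \autoref{thm:mulmuleyshahunboundedPRAM} is stated in the $o(\sqrt{\log\Phi})$-time / $2^{\sqrt{\log\Phi}}$-processors form, and with $\log\Phi = \Omega((\log n)^2)$ it delivers the $\epsilon\log n$ bound directly; your form happens to give the same numbers here but is not literally the theorem statement being used. Second, your closing parameter choice $\epsilon = 1/(2\alpha)$ is backwards --- shrinking $\alpha$ (fewer processors) cannot entitle you to an arbitrarily large constant in front of $\log n$; the constant $\epsilon$ should come from the hidden constant in the $n^{\Omega(\log n)}$ bound, not from $\alpha$. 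Finally, it is worth knowing that the paper explicitly flags a caveat your write-up omits: the unbounded fan-in version of Mulmuley's transfer theorem is asserted by Mulmuley \& Shah but has no formal proof in the literature (which is exactly why the paper states \autoref{thm:corollarymulmuley} as a Claim rather than a Theorem).
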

    
    \paragraph{\textit{Weighted graph matching:}} Mulmuley \& Shah observed that their result for the shortest path problem yields the same lower bound for the \textsc{Weighted Graph Matching} problem~\cite[Corollary 1.1]{mulmuleyshah}. Our result extends this observation to planar graphs (see~\autoref{sec:weighted-graph-matching} for a proof of this reduction). Many graph problems are easier to solve for planar graphs than for general graphs; in particular, we note the $\NC$ algorithm for counting perfect matchings based on the work of Kasteleyn~\cite{kasteleyn} and Csanky~\cite{csanky}, and its remarkable recent application by Anari \& Vazirani~\cite{anarivazirani} (see also~\cite{sankowski}) to find perfect matchings in planar graphs. It is interesting that the lower bound for the \textsc{Weighted Graph Matching} problem derived by Mulmuley \& Shah continues to hold even when the input is restricted to be planar.
    
    \paragraph{\textit{Treewidth:}} Planar graphs have high (superpolynomial) parametric complexity (\autoref{thm:mainresult}). It is thus natural to ask what graph classes might have small parametric complexity. Since every planar graph can be embedded in a grid graph with a marginal increase in size, our lower bound holds for $n\times n$ grid graphs as well. We also explore the parametric complexity of $k\times n$ grid graphs when $k\ll n$ (see~\autoref{sec:thinvsthick}). Due to a result of Chekuri \& Chuzhoy~\cite{chekurichuzoy}, every graph of treewidth $k$ has an $\Omega(k^{1/99})\times\Omega(k^{1/99})$ grid minor. Thus, for large enough $k$, every graph of treewidth $k$ has parametric complexity $k^{\Omega(\log k)}$. In particular, if the graph class has $n$-vertex graphs whose treewidth grows as $k(n)=\exp(\omega(\sqrt{\log n}))$, then its parametric complexity grows superpolynomially. On the other hand, our construction shows that for every $k(n)=\omega(\log n)$, there are planar graphs with treewidth $k(n)$ and parametric complexity $n^{\omega(1)}$; in the reverse direction, it can be shown that $n$-vertex graphs of treewidth $k$ have parametric complexity $n^{O(k)}$~\cite{pranabendu}.
    
    \paragraph{\textit{Minimum weight $s$-$t$ cut:}} Our planar graphs are built such that $s$ and $t$ lie on the same face when the graph is drawn on a plane. By appealing to the planar dual of our graph, we conclude that the parametric complexity of the $(s,t)$-cut problem in planar graphs is also $n^{\Omega(\log n)}$.
    
    \paragraph{\textit{Undirected graphs:}} Our construction yields a directed graph, but with a slight modification (by increasing all edge costs uniformly), we obtain an undirected graph with the same number of pieces. Thus our $n^{\Omega(\log n)}$ lower bound holds for \emph{undirected planar graphs} as well.
    
    \paragraph{\textit{Optimization problems:}} Parametric shortest paths have been studied extensively in the optimization literature because of their close connection with several other problems. We briefly mention four.
    
    \begin{itemize}
        \item Nikolova, Kelner, Brand \& Mitzenmacher~\cite{nikolova} consider a stochastic optimization problem on graphs whose edge weights represent random Gaussian variables and where one is required to determine the $s$-$t$ path whose total cost is most likely to be below a specified threshold (the deadline). They provide an $n^{O(\log n)}$ time algorithm for the problem for general graphs, and suggest that when restricted to planar graphs their algorithm might run in polynomial time because the number of extreme points of the \emph{shadow dominant} (a notion closely related to parametric shortest path complexity) is likely to be polynomially (perhaps even linearly) bounded. Our result unfortunately belies this hope.
        \item Correa, Harks, Kreuzen \& Matuschke~\cite{jannik} study the problem of \emph{fare evasion in transit networks}, and consider strategies based on random checks for the service providers, and the response of the users to such strategies. For one of the problems, referred to as the non-adaptive followers' minimization problem, they devise an algorithm based on the parametric shortest path problem, and point out that their algorithm would run in polynomial time on planar graphs if Nikolova's conjecture were to hold.
        
        \item Chakraborty, Fischer, Lachish \& Yuster~\cite{chakraborty} provide two-phase algorithms for the parametric shortest path problem, where the first stage does preprocessing after which an advice is stored in memory so that the algorithm can answer queries efficiently thereafter. A natural application for such an algorithm is traffic networks. Since traffic networks tend to be planar, a good upper bound on the parametric complexity of planar graphs would have allowed for substantial savings in space.
        
        \item We also mention work on a closely related problem. Erickson~\cite{erickson} reformulates an $O(n\log n)$ time algorithm of Borradaile \& Klein~\cite{klein} for max-flows in planar graphs by considering parametric shortest path trees (see Karp \& Orlin~\cite{karporlin}) in the dual graph. He shows that the tree can undergo only a limited number of changes. However, in Erickson's setting, the coefficient of $\lambda$ in the edge weights is always $-1$. He also points out that a similar approach for max-flows in graphs drawn on a torus fails to yield an efficient algorithm because the tree might undergo $\Omega(n^2)$ changes.
    \end{itemize}

    \subsection{Overview of our proof techniques}
    
    %Before we move on to the proofs of our results, it is useful to examine why our approach succeeded where earlier attempts failed.
    %We now examine earlier approaches aimed towards resolving Nikolova's conjecture (\autoref{conjectured}), and why our approach succeeded where earlier attempts failed.
    %\paragraph{Previous approaches to Nik:}
    We recall two earlier efforts aimed towards resolving Nikolova's conjecture. In her PhD thesis, Nikolova~\cite{nikolovathesis} considers planar embeddings of planar graphs, and shows that the edges can always be assigned weights in such a way that the number of pieces (in the piece-wise linear plot of the shortest paths) is at least the number of faces in the embedding. Note, however, that the number of pieces in the $n$-vertex planar graphs constructed using this approach is at most $2n$. We are aware of only one work that establishes a better upper bound for a family of planar graphs: Correa \emph{et al.}~\cite{jannik} observe that for series parallel graphs, Nikolova's conjecture is true; the parametric complexity of series-parallel graphs is in fact $O(n)$.
    
    \paragraph{\textit{Proof techniques for the main result:}} It is instructive\footnote{As perhaps many others did before us, we initially believed that Nikolova's conjecture was true and tried to prove it.} to briefly review the upper bound arguments of Gusfield and Dean with the hope of tightening them in the setting of planar graphs. Let $G[n,m]$ denote a directed acyclic graph $G$ with vertices $s$ and $t$ that has $m$ layers of $n$ vertices each in between $s$ and $t$. Fix a numbering of the vertices ($1,2,\ldots,n$) in each layer. These arguments are based on the following observations. Let us assume that the shortest $s$-$t$ path is constructed in such a way that starting from $s$ we always move to the neighbour with the shortest distance to $t$, choosing the neighbour having the smallest number when there is a tie. Let $(p_1,p_2,\ldots,p_T)$ be the sequence of shortest paths corresponding to the lower envelope, where each path $p_i$ is constructed in this fashion. This sequence of paths has the following \emph{alternation-free property} (called \emph{expiration property} by Nikolova~\cite{nikolovathesis}). For a path $p$, and vertices $u$ and $v$ that appear on it in that order, let $p[u:v]$ be the subpath of $p$ that connects $u$ to $v$.
    
    \begin{Proposition}[Alternation-free property, expiration property] \label{prop:alt-free}
    Suppose vertices $u$ and $v$ both appear on the three paths $p_i$, $p_j$ and $p_k$ in the sequence $(p_1,p_2,\ldots,p_T)$, where $1\leq i < j < k\leq T$. Furthermore, suppose $q=p_i[u:v]=p_k[u:v]$. Then, $p_j[u:v]=q$.
    \end{Proposition}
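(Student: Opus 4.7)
The plan is to first establish that the linear functions $\Cost(q)(\lambda)$ and $\Cost(q')(\lambda)$, where $q' := p_j[u:v]$, coincide identically in $\lambda$, and then invoke the tie-breaking rule to conclude that $q$ and $q'$ are the same path.

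Choose representative values $\lambda_i < \lambda_j < \lambda_k$ at which $p_i, p_j, p_k$ respectively are the tie-broken shortest paths, and consider the linear function $f(\lambda) := \Cost(q)(\lambda) - \Cost(q')(\lambda)$. Using the layered structure of $G[n,m]$, the spliced path $p_i[s:u] \circ q' \circ p_i[v:t]$ is a legitimate $s$-$t$ path of cost $\Cost(p_i)(\lambda_i) - f(\lambda_i)$; optimality of $p_i$ at $\lambda_i$ forces $f(\lambda_i) \leq 0$. The same splice in $p_k$ gives $f(\lambda_k) \leq 0$, and the reverse splice in $p_j$ gives $f(\lambda_j) \geq 0$. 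Since $\lambda_j$ is a strict convex combination of $\lambda_i$ and $\lambda_k$ and $f$ is linear, $f(\lambda_j)$ is the same convex combination of $f(\lambda_i)$ and $f(\lambda_k)$, and hence $\leq 0$; combined with $f(\lambda_j) \geq 0$ this forces $f(\lambda_i) = f(\lambda_j) = f(\lambda_k) = 0$, so $f \equiv 0$ and $\Cost(q) \equiv \Cost(q')$ identically in $\lambda$.

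Now suppose, for a contradiction, that $q \neq q'$, and let $u'$ be the last vertex, traversed from $u$, at which the two paths still coincide; their next edges differ, say $q$ leaves $u'$ along $(u', w_q)$ and $q'$ along $(u', w_{q'})$ with $w_q \neq w_{q'}$. Because $q$ and $q'$ share the prefix $[u:u']$, the cost identity refines to $\Cost(q[u':v]) \equiv \Cost(q'[u':v])$. At $\lambda_i$, the path $q[u':v] \circ p_i[v:t]$ is a shortest $u'$-$t$ path by subpath optimality of $p_i$, and hence so is $q'[u':v] \circ p_i[v:t]$; in particular, both $w_q$ and $w_{q'}$ are neighbors of $u'$ lying on some shortest $u'$-$t$ path at $\lambda_i$, so the greedy rule in $p_i$ selects $w_q$ at $u'$, forcing $w_q \leq w_{q'}$ in the vertex numbering. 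An identical argument at $\lambda_j$ (using $p_j$ in place of $p_i$) yields $w_{q'} \leq w_q$, so $w_q = w_{q'}$, contradicting our choice of $u'$. Hence $q = q'$, as claimed. The delicate step is this last one: cost equality alone cannot distinguish two cost-equivalent paths, so one must invoke the greedy tie-breaking and verify that at the divergence vertex $u'$ the same two candidate first-edges appear as shortest-path options at both $\lambda_i$ and $\lambda_j$—this is where the full identity $\Cost(q) \equiv \Cost(q')$, rather than equality only at individual $\lambda$ values, is essential.
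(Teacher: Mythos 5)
The paper states this proposition without proof, treating it as a known observation underlying the Gusfield/Dean upper-bound arguments, so there is no in-paper proof to compare against; I will assess your argument on its own.

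Your proof is correct. The two-stage structure is the right one: first use the splice-and-linearity argument to show that the two subpath cost functions $\Cost(q)$ and $\Cost(q')$ agree identically (optimality of $p_i$ and $p_k$ gives $f(\lambda_i), f(\lambda_k) \le 0$, optimality of $p_j$ gives $f(\lambda_j) \ge 0$, and linearity forces $f \equiv 0$ since $\lambda_j$ lies strictly between $\lambda_i$ and $\lambda_k$); then invoke the greedy vertex-numbered tie-break to pin down the actual path. You correctly flag the second stage as the delicate one: cost equality alone cannot distinguish $q$ from $q'$, and without a deterministic tie-break among cost-equal candidates the proposition is simply false (two paths with identical cost functions could be chosen in either order). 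Your handling of the divergence vertex $u'$ is sound: since $p_i[u':t]$ and $p_j[u':t]$ are suffixes of greedily-built shortest paths, each is itself the greedily-built shortest $u'$-$t$ path at its $\lambda$; the identity $\Cost(q[u':v]) \equiv \Cost(q'[u':v])$ (and not merely equality at a single $\lambda$) is what guarantees that \emph{both} $w_q$ and $w_{q'}$ are shortest-path-compatible successors of $u'$ at \emph{both} $\lambda_i$ and $\lambda_j$, so the tie-break gives $w_q \le w_{q'}$ at $\lambda_i$ and $w_{q'} \le w_q$ at $\lambda_j$, a contradiction. One small clarification worth making explicit: the phrase in the paper, ``move to the neighbour with the shortest distance to $t$,'' should be read as ``move to the neighbour $w$ minimizing $w_{(u',w)} + \operatorname{dist}(w,t)$'' (i.e., $w$ on a shortest $u'$-$t$ path), which is what your argument implicitly uses; the literal reading that ignores the edge weight $w_{(u',w)}$ would not build shortest paths at all.
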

    
    This alternation-free property is important because the length of a longest alternation-free sequence of paths in $n$-vertex planar graphs is an upper bound on $\compl^{\pl}(n,\infty)$.
    
    \begin{restatable}[Alternation-free sequences of paths in layered graphs]{Theorem}{thmaltfree}\label{thm:altfree}
        Let $f(n,m)$ be the length of a longest alternation-free sequence paths in the layered graph $G[n,m]$; let $f^{\pl}(n,m)$ be the length of a longest alternation-free sequence of paths in a planar subgraph of $G[n,m]$ (with vertices $s$ and $t$ included). Then, we have the following.
        \[ (i)\,\ f(n,2^k)\geq n^k, (ii)\,\ f^{\pl}(n,(n-1)2^k) \geq n^k. \]
    \end{restatable}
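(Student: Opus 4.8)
The plan is to construct, recursively in $k$, a long alternation-free sequence of $s$-$t$ paths in a layered graph, and then to show how to realize the same combinatorial structure inside a \emph{planar} subgraph of $G[n,(n-1)2^k]$ at the cost of blowing up the number of layers by a factor of $n-1$. For part (i), I would build the sequence by a product/composition construction: think of $G[n,2^k]$ as $2^k$ blocks, each itself looking like $G[n,2^{k-1}]$ in a way that lets an alternation-free sequence of length $n^{k-1}$ in a ``small'' gadget be nested inside an alternation-free sequence of length $n$ at the top level. Concretely, I expect to set up $n$ ``choices'' at a coarse scale (a path picks one of $n$ vertices in some distinguished layer), and within the interval of the sequence during which that coarse choice is fixed, run a full length-$n^{k-1}$ alternation-free sub-sequence on the remaining structure; concatenating across the $n$ coarse choices gives a sequence of length $n \cdot n^{k-1} = n^k$. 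The alternation-free property (\autoref{prop:alt-free}) must be checked for the composed sequence: for a pair $u,v$, either the subpath $p[u:v]$ lies entirely within one ``small'' gadget — in which case the inductive hypothesis on that gadget applies and no forbidden alternation $q,q',q$ can occur — or it spans a coarse boundary, in which case the coarse choice vertex lies on $p[u:v]$ and the monotone (left-to-right, once-only) way the coarse choice is cycled through $1,2,\ldots,n$ forbids the pattern. The base case $k=0$ is the trivial sequence of length $1$.

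For part (ii), the idea is that the only obstruction to planarity in the construction above is edge crossings caused by the ``rewiring'' between a layer numbered $1,\ldots,n$ and the next block. I would replace each such crossing-heavy transition by a \emph{sorting-network-style} cascade of $n-1$ (or $O(n)$) extra layers that implements the required permutation of the $n$ vertices by a sequence of planar nearest-neighbour swaps — i.e., a brick-wall of adjacent transpositions, which is planar and has depth $O(n)$. Since any permutation of $n$ elements can be realized by such a bounded-depth network of adjacent transpositions (depth $n$ suffices, via e.g. odd-even transposition routing), each of the $2^k$ coarse transitions costs an extra factor of $n-1$ in the layer count, giving $(n-1)2^k$ layers total while the vertex count per layer stays $n$. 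The sequence of paths is carried over unchanged in its essential combinatorial content: each path is rerouted through the swap-cascade tracking its designated vertex, and since the rerouting is a fixed planar ``identity-up-to-permutation'' gadget that every path traverses the same way, it introduces no new pairs $u,v$ that could create an alternation — any $u,v$ with a common subpath either both sit inside a gadget (where the routing is rigid) or straddle it (reducing to the part-(i) analysis). Hence $f^{\pl}(n,(n-1)2^k) \ge n^k$.

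The main obstacle I anticipate is \textbf{verifying the alternation-free property across the recursive composition}, and in particular making the bookkeeping precise: one needs a clean invariant describing, for each index $i$ in the sequence, exactly which vertex each path $p_i$ uses in each distinguished layer, and one must argue that for every vertex pair $(u,v)$ the set of indices $i$ for which $p_i[u:v]$ equals a fixed subpath $q$ forms a contiguous block of the sequence. The recursive structure should make this an interval-nesting statement (intervals for the inner sub-sequences nest inside intervals for the coarse choices), but getting the interaction between the coarse tie-breaking order and the inner order exactly right — so that no pair's "active interval" is broken into two — is the delicate point. A secondary, more mechanical obstacle is confirming that the planar swap-cascade gadget genuinely preserves alternation-freeness and does not accidentally merge two paths or create a vertex pair whose subpaths alternate; this should follow because the gadget is traversed identically by all paths, but it needs to be stated carefully. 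I would also want to double-check that the construction only needs $G[n,m]$ as an \emph{unweighted} combinatorial object here — the weights realizing these paths as successive shortest paths are supplied separately (this is where the bit-length bound $O((\log n)^3)$ of \autoref{thm:mainresult} enters), so \autoref{thm:altfree} itself is a purely combinatorial statement about sequences of paths.
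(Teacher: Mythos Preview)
Your recursive outline for part (i) has the right shape --- concatenate $n$ blocks distinguished by a coarse choice, each replaying the length-$n^{k-1}$ inner sequence --- but the implementation you sketch does not work. First, ``$2^k$ blocks each looking like $G[n,2^{k-1}]$'' is dimensionally off. More importantly, if the coarse choice lives at a \emph{single} distinguished layer, the concatenated sequence fails to be alternation-free as soon as two inner paths share an internal vertex (which is unavoidable once the inner sequence has more than $n$ paths). Take $p_a$ and $p_c$ to be the same inner path placed in two different coarse blocks, and $p_b$ a different inner path in the first block, with $a<b<c$; for any vertex pair $(u,v)$ lying entirely inside the inner gadget at which $p_a$ and $p_b$ meet but take different subpaths, you get $p_a[u{:}v]=p_c[u{:}v]\neq p_b[u{:}v]$. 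The paper avoids this via the operation $\sigma\mapsto\sigma\downarrow j$, which inserts the coarse symbol $j$ after \emph{every} symbol of $\sigma$; the coarse choice then occupies all odd positions of the doubled word, so every alternation window of length $\geq 2$ contains a coarse position and forces $p_a,p_c$ into the same block (\autoref{lm:doubling}). This dense interleaving, not one distinguished layer, is exactly why the layer count doubles at each step.

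For part (ii), the sorting-network idea is not what the paper does, and your justification --- that ``every path traverses the gadget the same way'' --- is false: paths with different increments route differently through any swap cascade and therefore meet at interior vertices of the gadget, each such meeting being a new $(u,v)$ pair to check. The paper's planarization instead encodes $i\in\Zn$ by the unary word $\hat{\imath}=1^i0^{n-1-i}\in\{0,1\}^{n-1}$ and works in the cylinder graph where each vertex connects only to the same and the next index in the following layer. The proof that the resulting binary sequence is alternation-free (\autoref{lm:binarydoubling}) relies on two properties specific to this encoding: each bit position is monotone in $i$ (so as the coarse block advances, bits flip $0\to 1$ and never back), and equal partial weights force equal substrings, $|\hat{\jmath_1}[u{:}v]|_1=|\hat{\jmath_2}[u{:}v]|_1 \Rightarrow \hat{\jmath_1}[u{:}v]=\hat{\jmath_2}[u{:}v]$. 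A generic odd--even transposition routing has neither property.
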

    
    Using the alternation-free property, one observes that $f(n,1)=n$ and $f(n,2^k-1) \leq 2n f(n,2^{k-1}-1)$, which yields $f(n,2^{k}-1) \leq \frac{1}{2}(2n)^k$, implying that $\compl(n,\infty) = n^{O(\log n)}$. Note that the \emph{non-planar graphs} with high parametric shortest path complexity constructed by Carstensen~\cite{carstensenthesis} and Mulumuley \& Shah~\cite{mulmuleyshah} imply that $f(n,n) \geq  n^{\delta  \log n }$ (for some $0<\delta<1$). In~\autoref{sec:words}, we present a construction which shows that $f(n,2^k) \geq n^{k}$. Thus, we have $n^k \leq f(n,2^k) \leq \frac{1}{2} (2n)^k$. More crucially, our construction can be adapted to \emph{planar graphs}.
    %\begin{Theorem} \label{thm:altfree}
        %$($i$)\,\ f(n,2^k)\geq n^k$, $($ii$)\,\ f^{\pl}(n,(n-1)2^k) \geq n^k$.
    %\end{Theorem}
    
    In~\autoref{sec:binarywords}, we present the construction for planar graphs in detail. This shows that just using the alternation-free property alone, we cannot hope to obtain significantly better upper bounds on $\compl^{\pl}(n,\infty)$. While this construction provides some evidence against Nikolova's conjecture, it does not immediately refute it. In fact, there exist examples of alternation-free sequences of paths in planar graphs that do not arise as parametric shortest paths. Kuchlbauer~\cite[Example 3.11]{martina} presents a planar graph that admits an \emph{infeasible} alternation-free sequence with $10$ paths; that is, no assignment of linear functions to the edges can realize this sequence of $10$ paths as shortest paths.
    
    Our refutation of Nikolova's conjecture is based on the Mulmuley-Shah construction~\cite{mulmuleyshah}. Their construction uses an intricate inductive argument involving the composition of dense bipartite graphs. These bipartite graphs contain large complete bipartite graphs, and are therefore far from planar. We show that, nevertheless, these non-planar bipartite graphs can be simulated by a planar gadget, where each edge is replaced by a path consisting of up to $n^2$ edges and the original weight is carefully distributed among these edges. For this we introduce two ideas. First, staying with the original non-planar construction, we modify the edge weights so that they vary in a structured way. Second, we imagine that the original bipartite graph is drawn on a plane by connecting dots using straight lines, a new vertex arising whenever two straight lines intersect. This results in several new vertices, and spurious paths that do not correspond to any edge of the original bipartite graph. However, the costs of the new edges are so assigned that these spurious paths have much higher costs than the direct path corresponding to the edge in the original bipartite graph. We devote~\autoref{sec:planarlinkgadget} to the construction of this gadget.
    
    The main technique in our construction goes back to Carstensen's work. Our planarization is straightforward in hindsight. The reasons this was not observed before are perhaps the following: (i) the earlier recursive constructions even for general graphs are complicated and not easy to take apart and examine closely (in particular, the Mulmuley-Shah paper is rather cryptic and has errors that throw the reader off); (ii) simple methods of constructing planar graphs with many pieces (in the piece-wise linear plot) tend to navigate around regions in the planar drawing one at a time, somehow (mis)leading one to believe that the limited number of planar regions ought to impose a polynomial upper bound on the number of pieces. See~\autoref{sec:maintheorem} for a detailed proof.

    \paragraph{\textit{Proof techniques for the upper bounds:}} Recall the upper bound arguments for alternation-free paths leading to the recurrence stated after~\autoref{prop:alt-free}. When edge weights vary as degree $d$ polynomials and not just linearly, paths are no longer strictly alternation-free; rather, two paths could alternate up to $d$ times. The complication arising out of this is related to \emph{Davenport-Schinzel sequences}, which have been studied extensively in the discrete geometry literature. The existing upper bounds on the length of Davenport-Schinzel sequences can be combined with the approach of Gusfield and Dean to yield~\autoref{thm:complpoly}. See~\autoref{sec:complpoly} for a detailed proof.
    
    However, when edge weights have the form $a\lambda_1 + b\lambda_2 + c \lambda_3$, our proof techniques depart substantially from the arguments used by Gusfield (which were adapted to univariate polynomials to obtain~\autoref{thm:complpoly}). Although Gusfield's bound is stated for edge weights of the form $a\lambda + b$, essentially the same upper bound holds when the edge weights have the form $a\lambda_1 + b\lambda_2$. (This can be seen, for example, by dividing all costs uniformly by $\lambda_2$.) In the three-parameter setting, it is not clear how one can impose a meaningful linear order on the set $\mathbb{R}^3$, and invoke combinatorial notions such as alternation-free sequences. Instead, we approach the problem geometrically.
    
    Note that the cost of an $s$-$t$ path $P$ has the form $a_P\lambda_1 + b_P\lambda_2 + c_P\lambda_3$, and may be viewed as a vector $(a_P,b_P,c_P)$ in $\mathbb{R}^3$.
    %Let us now elucidate the main idea behind our proof. Each edge $e$ is associated with $d+1$ real numbers $a_1,a_2,\ldots,a_d$ which denote the coefficients in $w_e$. Thus, $e$ may be visualized as a vector in $d+1$ dimensions. That is, $$e = \begin{pmatrix} a_1 & a_2 & \ldots & a_d \end{pmatrix}$$ is a vector in $\mathbb{R}^{d+1}$. Then the cost of a path is the sum of the vectors of the edges that lie on that path. Thus, each path $P$ is also a vector in $\mathbb{R}^{d+1}$. 
    Consider the convex hull of these path vectors. The crucial observation is that for each $\vlambda=(\lambda_1,\lambda_2,\lambda_3)\in\mathbb{R}^3$, at least one of the vertices (vertex here means $0$-dimensional face) of the convex hull corresponds to a minimum cost path. Thus, we need an upper bound on the number of vertices of the convex hull. From here on, our argument is similar to Gusfield's but is carried out in the language of polytopes. The key non-trivial step in the analysis of the number of vertices in the polytope arises when two graphs are placed in series. This is addressed by bounding the number of vertices of the \emph{Minkowski sum} of the polytopes of the constituent graphs. A detailed proof of~\autoref{thm:complmult} is presented in~\autoref{sec:complmult}.
    
    We now briefly point out that there is an obstruction to a simple reduction to the case of two variables. Suppose we set $\lambda_3$ to $1$. Then we are left with a situation where the cost of each $s$-$t$ path is a plane in $\mathbb{R}^3$. The shortest path function $C(G)(\lambda_1,\lambda_2)$ is then a dome-like structure, a polyhedron representing the lower envelope of the planes corresponding to the various $s$-$t$ paths. We wish to bound the number of faces of this polyhedron. Note that every planar cross-section of this polyhedron is a piece-wise linear function in $\mathbb{R}^2$, which by Gusfield's bound has $n^{\log n+O(1)}$ pieces. A conjecture of Shephard~\cite{Shephard} states that the number of faces in the polyhedron is bounded by a polynomial in the maximum number of pieces in such a cross-section. Unfortunately, this conjecture is false: there are polyhedrons with $n$ faces for which the number of pieces in every planar cross-section is at most $O\left(\log n/\log \log n\right)$~\cite{Chazelle,Lagarias}.

    \section{The planar construction with linearly varying edge weights}
    
    In this section, we construct a planar gadget which will be used to construct planar graphs with high shortest path complexity. Our construction closely follows the construction of Mulmuley \& Shah~\cite{mulmuleyshah}, which in turn was based on the construction of Carstensen~\cite{carstensenthesis}. These earlier constructions (and ours) proceed by induction, where we begin with a small base graph, and at each induction step, we increase the number of vertices by a constant factor and the number of pieces in the lower envelope by a factor $n$. After $m$ steps of induction, we obtain a graph with $\poly(n)\cdot\exp(O(m))$ vertices and $n^m$ pieces. \autoref{fig:bighero} illustrates this assembly for $m=3,n=3$, following the template of~\autoref{fig:glgmgr}. We will explain this construction in detail later. The edge weights in the constituent graphs in~\autoref{fig:glgmgr} are carefully chosen, but are not important to our top-level view. The only new component added in each level of induction is the part labelled $\mathsf{LINK}$.
    
    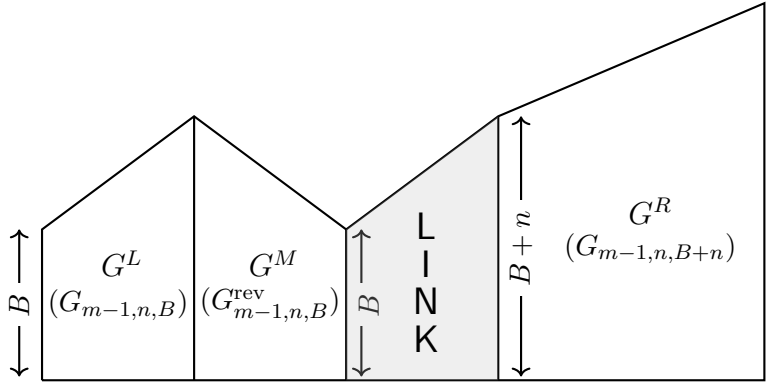
\begin{figure}
    \begin{center}
    \begin{tikzpicture} [scale=0.50, thick,every node/.style={},every fit/.style={ellipse,draw,inner sep=-12pt,text width=2cm}]
    
	\draw [<->] (12.6,7) -- (12.6,0.1);
	\node[fill=white, rotate=90] at (12.6,3.5) {$B+n$};
	\draw [<->] (-0.6,4) -- (-0.6,0.1);
	\node[fill=white, rotate=90] at (-0.6,2) {$B$};
	\draw [<->] (8.4,4) -- (8.4,0.1);
	\node[fill=white, rotate=90] at (8.6,2) {$B$};
    
    \draw (0,0)--(0,4)--(4,7)--(8,4)--(12,7)--(19,10)--(19,0)--(12,0)--(8,0)--(4,0)--(2,0)--(0,0);
    \draw (4,7)--(4,0); \draw (8,4)--(8,0); \draw (12,7)--(12,0);

    \fill[gray!50,nearly transparent] (8,4) -- (12,7) -- (12,0) -- (8,0) -- cycle;
    
    \node[] at (2.1,3.1) {\large{$\GL$}};
    \node[] at (2,2) {$(\Gmn[-1]{B})$};
    
    \node[] at (6.1,3.1) {\large{$\GM$}};
    \node[] at (6,2) {$(\Gmn[-1]{B}^{\rev})$};
    
    \node[] at (10.1,4.1) {\Large{\textsf{L}}};
    \node[] at (10.1,3.1) {\Large{\textsf{I}}};
    \node[] at (10.1,2.1) {\Large{\textsf{N}}};
    \node[] at (10.1,1.1) {\Large{\textsf{K}}};
    %\node[] at (10,2) {$\linkgadget(B,n)$};
    
    \node[] at (16,4.5) {\large{$\GR$}};
    \node[] at (16,3.5) {$(\Gmn[-1]{B+n})$};

    \end{tikzpicture}
    \caption{$\Gmn{B}$ is obtained by composing $\Gmn[-1]{B}$, $\Gmn[-1]{B}^{\rev}$, the linking gadget, and $\Gmn[-1]{B+n}$}
    \label{fig:glgmgr}
    \end{center}
    \end{figure}
    
    Our first observation is that the edge weights used by Mulmuley \& Shah in $\mathsf{LINK}$ can be modified so that they have a regular form. Our second observation is that with the modified edge weights, $\mathsf{LINK}$, which is a dense bipartite graph, can be simulated by a planar gadget.
    
    In the following sections, we provide detailed justification for the two contributions outlined above. In~\autoref{sec:maintheorem}, we show that the new edge weights in $\mathsf{LINK}$ also result in a large number of pieces in the lower envelope. In~\autoref{sec:planarlinkgadget}, we show that the non-planar graph $G^{\npl}$ can be simulated by a suitable weighted planar graph; this step, which is at the core of our contribution, has a simple implementation with an appealing proof of correctness.

    \subsection{The linking gadget} \label{sec:planarlinkgadget}

    A linking gadget $\linkgadget(B,n)$ is a bipartite graph $G(U,V,E,(w_e:e \in E))$ with $U=\{0,1,\ldots, B-1\}$, $V=\{0,1,\ldots,B+n-1\}$, $E=\{(b,b+r): b \in U, r =0,1,\ldots,n\}$.  In this graph the cost of the shortest path from vertex $b$ to vertex $j$ is precisely $w_{(b,j)}$ (we often write $w_{b,j}$ instead). We would like to obtain a directed planar simulation of this behaviour.
    
    Let $G^{\pl}$ be the directed graph drawn on a planar strip in $\mathbb{Q}^2$ given by $[0,1]\times[0,2n-2]$; the vertices of $G^{\pl}$ include the sets of points  $\{0\} \times U$ and $\{1\} \times V$; the rest of the graph is obtained as follows. We draw the line segments $\ell_{(b,j)}$ joining $(0,b)$ to $(1,j)$ whenever $(b,j) \in E(G)$, and include all intersection points of such segments in the vertex set of $G^{\pl}$ (see~\autoref{fig:linknpl}). The edge $(u,v)$ is in $G^{\pl}$ if $v$ immediately follows $u$ on some line segment $\ell_e$. The edge weight $w_e$ of the edge $e \in E(G)$ is distributed uniformly among the various edges of $G^{\pl}$ that arise out of $e$. Suppose the vertices $u=(u_x,u_y)$ and $v=(v_x,v_y)$ appear consecutively on $\ell_e$ (note $v_x > u_x, v_y\geq u_y$); then $w_{u,v} = w_e \cdot (v_x - u_x)$.
    
    This completes the description of the weighted planarization $G^{\pl}$ of $G$. The locations of the vertices in this special planar embedding of $G^{\pl}$ are not essential for our construction. However, one feature of this embedding is useful in our proof. A vertex is placed at a point of intersection of two lines of the form $Y = m_1 X + c_1$ and $Y = m_2 X + c_2$; so its $x$-coordinate, namely $(c_2 - c_1)/(m_1-m_2)$ can be written as a fraction with denominator at most $n$, which leads to the following observation.
    
    \begin{Fact} \label{defacto}
    The horizontal distance traversed by an edge $((x_1,y_1),(x_2,y_2))$ of $G^{\pl}$ $($which is $x_2-x_1)$ can be expressed as a non-zero fraction with denominator at most $n^{2}$.
    \end{Fact}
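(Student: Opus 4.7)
The plan is a direct coordinate calculation in two steps: first I would bound the denominators of the $x$-coordinates of individual vertices of $G^{\pl}$, and then take differences to get the edge-length bound.

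For the first step, I would classify the vertices of $G^{\pl}$ by how they were introduced. The vertices on $\{0\}\times U$ and $\{1\}\times V$ have integer $x$-coordinates ($0$ and $1$ respectively), so they are rationals with denominator $1$. Every other vertex arises as the intersection point of two segments $\ell_{(b_1,j_1)}$ and $\ell_{(b_2,j_2)}$, which lie on lines $Y=(j_i-b_i)X+b_i$ for $i\in\{1,2\}$. Solving the two linear equations simultaneously yields
\[
X \;=\; \frac{b_2-b_1}{(j_1-b_1)-(j_2-b_2)}.
\]
The numerator is an integer. The denominator is a nonzero integer (the two segments are distinct and cross at a single point, so their slopes differ), and since each slope $j_i-b_i$ lies in $\{0,1,\ldots,n\}$, its absolute value is at most $n$. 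Hence every vertex of $G^{\pl}$ has an $x$-coordinate expressible as a fraction with denominator at most $n$.

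For the second step, consider an edge with endpoints $(x_1,y_1),(x_2,y_2)$. By the first step, $x_1$ and $x_2$ are rationals with denominators at most $n$, so writing them over a common denominator (at most $n\cdot n=n^2$) shows that $x_2-x_1$ is a rational with denominator at most $n^2$. Finally, to see that $x_2-x_1\ne 0$, I would note that the edge lies on some segment $\ell_{(b,b+r)}$ with slope $r\in\{0,1,\ldots,n\}$; in particular the segment is not vertical, and the $x$-coordinate is strictly increasing as one moves from $(0,b)$ to $(1,b+r)$, so consecutive vertices $u,v$ on the segment satisfy $u_x<v_x$.

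The argument is essentially a routine linear-algebra computation; the only minor care needed is to confirm that the two segments producing a vertex really do have distinct slopes (so that the expression for $X$ is well-defined) and that horizontal segments ($r=0$) do not create a degeneracy, both of which fall out of the setup. I do not foresee a substantive obstacle.
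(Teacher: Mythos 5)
Your proposal is correct and follows essentially the same route as the paper: the paper also observes that each vertex is the intersection of two lines $Y=m_iX+c_i$ with integer intercepts and slopes in $\{0,\ldots,n\}$, so its $x$-coordinate $(c_2-c_1)/(m_1-m_2)$ has denominator at most $n$, and then takes differences to get the $n^2$ bound. Your additional remark that $x_2-x_1\neq 0$ because $x$ increases along each segment matches the paper's convention that $v_x>u_x$ for consecutive vertices on $\ell_e$.
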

    
    We will use this observation later (\autoref{cl:prob}). Before that, we need to formalize what it means for $G^{\pl}$ to mimic $G$.
    
    \begin{figure}
    \begin{center}
    \begin{tikzpicture} [scale=0.45, thick,every node/.style={},every fit/.style={ellipse,draw,inner sep=-12pt,text width=2cm}]
    
    \def \hstart {2}
    \def \vgap {2.3}
    \def \hgap {9}
    \def \n {3}
    \def \m {3}
    \def \mn {6}
    \def \lshift {21}
    \def \vertcolor {blue}
    \def \intcolor {blue!50}
    \def \sizeintvert {6}
    
    \begin{scope}[very thick,decoration={markings,mark=at position 0.8 with {\arrow{>}}}]
    \foreach \i in {0,...,\n}
    {
        \vertex(b\i) at (\hstart,\i*\vgap) [label=left:$\i$, fill=\vertcolor, \vertcolor] {};
        \foreach \r in {0,...,\m}
            \draw [postaction={decorate},line width=0.1mm] (b\i)--(\hstart+\hgap,\i*\vgap+\r*\vgap);
    }
    \end{scope}
    \foreach \i in {0,...,\mn}
        \vertex(c\i) at (\hstart+\hgap,\i*\vgap) [label=right:$\i$, fill=\vertcolor, \vertcolor] {};
    
    \node at (\hstart*0.5+\lshift*0.5+\hstart*0.5+\hgap*0.5,\n*\vgap-1.2) {\fontsize{30}{36}\selectfont{$\rightsquigarrow$}};
    \node at (\hstart*0.5+\lshift*0.5+\hstart*0.5+\hgap*0.5,\n*\vgap) {\small{planarize}};
    
    \begin{scope}[very thick,decoration={markings,mark=at position 0.8 with {\arrow{>}}}]
    \foreach \i in {0,...,\n}
    {
        \vertex(b\i) at (\hstart+\lshift,\i*\vgap) [label=left:$\i$, fill=\vertcolor, \vertcolor] {};
        \foreach \r in {0,...,\m}
            \draw [line width=0.1 mm] (b\i)--(\hstart+\hgap+\lshift,\i*\vgap+\r*\vgap);
    }
    \end{scope}
    
    \foreach \i in {0,...,\mn}
        \vertex(c\i) at (\hstart+\hgap+\lshift,\i*\vgap) [label=right:$\i$, fill=\vertcolor, \vertcolor] {};
    
    \foreach \i in {1,...,\m}
    {
        \vertex at (\hstart+\lshift+\hgap/3,\i*\vgap) [minimum size=\sizeintvert,fill=\intcolor,\intcolor] {};
        \vertex at (\hstart+\lshift+\hgap/2,\i*\vgap) [minimum size=\sizeintvert,fill=\intcolor,\intcolor] {};
        \ifthenelse{\i=1}{}{\vertex at (\hstart+\lshift+2*\hgap/3,\i*\vgap) [minimum size=\sizeintvert,fill=\intcolor,\intcolor] {};}
        \vertex at (\hstart+\lshift+\hgap/2,\i*\vgap+0.5*\vgap) [minimum size=\sizeintvert,fill=\intcolor,\intcolor] {};
    }
    \end{tikzpicture}
    \caption{The $\mathsf{LINK}$ gadget $\linkgadget(B,n)$ for $B=4, n=3$ and its planarization $\linkgadget(B,n)$}
    \label{fig:linknpl}
    \end{center}
    \end{figure}
    
    \begin{Definition} We say that $G^{\pl}$ \emph{faithfully simulates $G$} if for all $(b,j) \in U \times V$:
    \begin{enumerate}
        \item[$($i$)$] if $b\leq j \leq b+n$, then the edges arising from the line segment $(0,b)$ to $(1,j)$ form the unique shortest path from $(0,b)$ to $(1,j)$ in $G^{\pl}$,
        \item[$($ii$)$] if $b\leq j \leq b+n$, then the cost of the shortest path from $(0,b)$ to $(1,j)$ in $G^{\pl}$ is precisely $w_{b,j}$, and the cost of every other path from $(0,b)$ to $(1,j)$ is at least $w_{b,j}+1$, and
        \item[$($iii$)$] if $j < b$ or $j > b+n$, then there is no path from $(0,b)$ to $(1,j)$ in $G^{\pl}$.
    \end{enumerate}
    In spirit, this definition says that shortest paths in $G^{\pl}$ should look like edges of $G$.
    \end{Definition}
    
    \begin{Lemma} \label{lm:planarize}
    Suppose $J:E(G) \rightarrow \mathbb{Z}$, and $K$ and $L$ are constants such that $$K \geq n^2\,\left(1+2\underset{e\in E(G)}{\max} |J(e)|\right)\,.$$ Consider a graph $G(U,V,E,w)$ of the form described above with edge weights $$w_{b,b+r} = J(b,b+r) + K\left(\frac{r(r+1)}{2}\right) + L r \lambda,\qquad\text{where } 0\leq b\leq B-1\mbox{ and } 0\leq r\leq n.$$ Then $G^{\pl}$ faithfully simulates $G$. Note that the rate of change of $w_{b,b+r}$ w.r.t. $\lambda$ is $Lr$, where $r$ is the slope of the line segment $\ell_{(b,b+r)}$ w.r.t. the $X$-axis.
    \end{Lemma}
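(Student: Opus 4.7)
My plan is to decompose every path in $G^{\pl}$ from $(0,b)$ to $(1,j)$ into maximal ``legs'', each a subpath sitting on a single line segment $\ell_{(b',b'+r')}$. If these legs have slopes $r_1,\ldots,r_k \in \{0,1,\ldots,n\}$ and horizontal extents $d_1,\ldots,d_k$, then $\sum_i d_i = 1$ and, by tracking vertical change, $\sum_i d_i r_i = j-b$. The uniform weight distribution then gives the path cost as $\sum_i d_i\, w_{b_i,b_i+r_i}$, where $b_i$ denotes the $y$-intercept of leg $i$'s line.

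Part (iii) follows immediately, since $\sum_i d_i r_i \in [0,n]$ forces $j \in [b,b+n]$. For (i) and (ii), fix $b \le j \le b+n$ and set $r = j-b$. Substituting the weight formula and using $\sum_i d_i r_i = r$, the terms linear in $r_i$ (both the $\tfrac{K}{2} r_i$ piece and the $L r_i \lambda$ piece) exactly match those in $w_{b,j}$, so the cost of the path minus $w_{b,j}$ reduces to
\[
\Delta \;=\; \Bigl(\sum_i d_i\, J(b_i,b_i+r_i) - J(b,j)\Bigr) + \frac{K}{2}\,\mathrm{var},
\qquad \mathrm{var} \;:=\; \sum_i d_i (r_i-r)^2.
\]
If every $r_i$ equals $r$, then all legs lie on lines of slope $r$; since distinct parallel lines do not intersect, all legs lie on a single line, and that line must pass through $(0,b)$, so it is $\ell_{(b,b+r)}$ and the path is the direct one with cost exactly $w_{b,j}$.

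The crux of the argument, and where I expect the bulk of the work, is a lower bound $\mathrm{var} \ge 2/n^2$ in every non-direct case. Collect weights by slope: set $D_{r'} := \sum_{i: r_i = r'} d_i$ and $x_{r'} := r'-r \in \mathbb{Z}$. Then $\sum_{r'} D_{r'} = 1$, $\sum_{r'} D_{r'} x_{r'} = 0$, and some $x_{r'} \ne 0$. Letting $M := \sum_{x_{r'}>0} D_{r'} x_{r'} = \sum_{x_{r'}<0} D_{r'} |x_{r'}|$, the elementary inequality $x^2 \ge |x|$ for nonzero integers, summed over both sign classes, gives $\mathrm{var} \ge 2M$. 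By \autoref{defacto}, each edge of $G^{\pl}$ has horizontal extent at least $1/n^2$, so each $d_i \ge 1/n^2$ and thus $D_{r'} \ge 1/n^2$ whenever nonzero; choosing any $r'$ with $x_{r'} > 0$ yields $M \ge 1/n^2$ and hence $\mathrm{var} \ge 2/n^2$.

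Combining this variance bound with the hypothesis $K \ge n^2(1 + 2\max_e|J(e)|)$ and the trivial estimate $\bigl|\sum_i d_i J(b_i,b_i+r_i) - J(b,j)\bigr| \le 2\max_e|J(e)|$ gives $\Delta \ge -2\max_e|J(e)| + K/n^2 \ge 1$. This establishes (i) (the direct path is the unique shortest, any deviation costing at least one extra unit) and (ii) (the direct path has cost exactly $w_{b,j}$, while every other path costs at least $w_{b,j}+1$).
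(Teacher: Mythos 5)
Your proof is correct and follows essentially the same approach as the paper's: rewrite the cost difference between a deviating path and the direct path as $\frac{K}{2}\,\mathrm{var}$ plus an error term of size at most $2\max_e|J(e)|$, lower bound $\mathrm{var}\geq 2/n^2$ via \autoref{defacto}, and conclude from the hypothesis on $K$. The only cosmetic differences are that you decompose the path into maximal legs and group the variance terms by slope class (using $x^2\geq |x|$ for nonzero integers), whereas the paper works edge-by-edge and simply notes that a deviating path must contain at least two edges whose slope differs from $r$, each contributing at least $n^{-2}$ to the variance. Your added remark that a path with all slopes equal to $r$ must be the direct path (since distinct parallel lines do not meet) is the same observation the paper uses implicitly to justify that a deviating path has at least two off-slope edges.
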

    \begin{proof}
    Consider vertices $b \in U$ and $j \in V$ such that $0\leq b \leq j \leq b+n$.  Consider the path $P$ in $G^{\pl}$ that takes edges along the line segment $\ell_{(b,j)}$. This path has cost $w_{b,j}$. We will show that all other paths from $(0,b)$ to $(1,j)$ have strictly greater cost. Let $r=j-b$ be the \emph{slope} of the line segment $\ell_{(b,j)}$. Suppose $Q$ is another path in $G^{\pl}$ from $(0,b)$ to $(1,j)$.
    
    We make the following claim.
    
    \begin{Claim} \label{cl:prob}
    $\Cost(Q) - \Cost(P) \geq n^{-2} K - 2 \underset{e \in E(G)}{\max} |J(e)|$.
    \end{Claim}
    \begin{proof}[Proof of~\autoref{cl:prob}]
    Let $Q$ consist of vertices $q_0=(x_0,y_0), q_1=(x_1,y_1), q_2=(x_2,y_2), \ldots, q_t=(x_t,y_t)$, where $(x_0,y_0) = (0,b)$ and $(x_t,y_t) = (1,j)$ for some $b,j$. For $i=1,2,\ldots,t$, let $r_i = (y_i-y_{i-1})/(x_i-x_{i-1})$ denote the slope of the edge $(q_{i-1},q_i)$; let $\rho_i = x_i - x_{i-1}$. Then for $i \in \{1,2,\ldots,t\}$, we have
    \begin{align*}
        \rho_i & \geq n^{-2}; \qquad(\text{using~\autoref{defacto}})\\
        r_i & \in \{0,1,\ldots,n\};\\
        r   &= \sum_{i=1}^t \rho_i r_i=j-b;\\
        w_{q_{i-1},q_i} &= \left(J(y,y+r_i) + K \left(\frac{r_i(r_i+1)}{2}\right) + L r_i \lambda\right)(x_{i} - x_{i-1}) \qquad(\text{where } y=y_i-r_ix_i).
    \end{align*}
    Since $0\leq \rho_i\leq 1$ and $\sum_{i=1}^t \rho_i=1$, we may define a random variable $\mathbf{i}$, that takes the value $i \in \{0,1,\ldots,n\}$ with probability $\rho_i$. With this notation, we have $\Expt\left[r_\mathbf{i}\right]=r$.
    
    \paragraph{\textit{Alternative view:}} This paragraph provides a physics perspective to the proof. The calculations can be carried out without reading this paragraph. We may view the path as a height versus time graph of a moving particle, with the $X$-axis representing time and the $Y$-axis representing height. Then, $\Expt\left[r_\mathbf{i}\right] = r = j-b$ corresponds to the fact that the particle underwent a displacement of $r$ units in one unit of time. Note that this claim holds independent of the path taken by the particle. As a result, for the purpose of comparing costs of paths, we can ignore the terms containing $\lambda$. Let us now proceed with the calculations.
    \begin{align}
        \Cost(Q) -\Cost(P) &\geq -2\max_{e} |J(e)| +K\left( \Expt\left[\frac{r_\mathbf{i}^2}{2}\right] - \frac{r^2}{2} \right) + \left(\frac{K+2L\lambda}{2}\right)(\Expt\left[r_\mathbf{i}\right] -r) \\
        &\geq -2\max_{e} |J(e)| +K\left( \Expt\left[\frac{r_\mathbf{i}^2}{2}\right] - \frac{r^2}{2}\right) \qquad\qquad(\text{because } \Expt[r_\mathbf{i}]=r)\\
        &\geq -2\max_{e} |J(e)| + \frac{K}{2} \var[r_\mathbf{i}]. \label{eq:var}
    \end{align}
    We show a lower bound for $\var[r_\mathbf{i}]$. Since $Q$ deviates from $P$, it has at least two edges whose slopes, say $r_{i_1}$ and $r_{i_2}$, differ from $r$ (by at least $1$). Then,
    \[ \var[r_\mathbf{i}] \geq \rho_{i_1}(r_{i_1}-r)^2  + \rho_{i_2}(r_{i_2}-r)^2 \geq 2n^{-2}.\]
    Combining this with~\eqref{eq:var} establishes~\autoref{cl:prob}.
    \end{proof}

    The assumption on $K$ then implies that $P$ is the unique shortest path from $(0,b)$ to $(1,j)$, and the cost of every other path $Q$ from $(0,b)$ to $(1,j)$  is at least $w_{b,j} +1$. This proves (i) and (ii). Finally, (iii) holds because every edge in $G^{\pl}$ corresponds to a line segment with slope at least $0$ and at most $n$. This completes the proof of~\autoref{lm:planarize}.
    \end{proof}
    
    %We are now ready to prove our main result. \autoref{lm:planarize} will be a vital ingredient in our proof.
    
    \section{Proof of the main result} \label{sec:maintheorem}
    
    \begin{sidewaysfigure}
    \begin{center}
    \begin{tikzpicture} [scale=0.55,thick,transform shape,every node/.style={},every fit/.style={ellipse,draw,inner sep=-12pt,text width=2cm}]
    
    \def \n {3}
    \def \nn {6}
    \def \nnn {9}
    \def \mo {22}
    \def \mi {15}
    \def \mii {8}
    \def \miii {1}
    \def \gap {0.5}
    \def \vertcolor {blue}
    \def \intcolor {blue!50}
    
    \vertex(a001) at (6,\mo) [fill=\vertcolor, \vertcolor, minimum size=4 pt] {};
    \vertex(b001) at (7,\mo) [fill=\vertcolor, \vertcolor, minimum size=4 pt] {};
    \draw (a001)--(b001); %m=0,n=1
    \node at (6.5,\mo+1) {\LARGE{$m=0,B=1$}};
    
    \foreach \i in {0,...,\n} %m=0,n=4
    {
        \vertex(a10\i) at (15,\gap*\i+\mo) [fill=\vertcolor, \vertcolor, minimum size=4 pt] {};
        \vertex(b10\i) at (16,\gap*\i+\mo) [fill=\vertcolor, \vertcolor, minimum size=4 pt] {};
        \draw (a10\i)--(b10\i);
    }
    \node at (15.5,\gap*\n+\mo+1) {\LARGE{$m=0,B=n+1$}};
    
    \foreach \i in {0,...,\nn} %m=0,n=7
    {
        \vertex(a20\i) at (24,\gap*\i+\mo) [fill=\vertcolor, \vertcolor, minimum size=4 pt] {};
        \vertex(b20\i) at (25,\gap*\i+\mo) [fill=\vertcolor, \vertcolor, minimum size=4 pt] {};
        \draw (a20\i)--(b20\i);
    }
    \node at (24.5,\gap*\nn+\mo+1) {\LARGE{$m=0,B=2n+1$}};
    
    \foreach \i in {0,...,\nnn} %m=0,n=10
    {
        \vertex(a30\i) at (33,\gap*\i+\mo) [fill=\vertcolor, \vertcolor, minimum size=4 pt] {};
        \vertex(b30\i) at (34,\gap*\i+\mo) [fill=\vertcolor, \vertcolor, minimum size=4 pt] {};
        \draw (a30\i)--(b30\i);
    }
    \node at (37,\gap*\nnn/2+\mo) {\LARGE{$m=0,B=3n+1$}};

    %%%%%%%%%%%%%%%%%%%%%%%%%%%%%%%%%%%%%%%%%%%%%%%%%%%%%%%%%%%%%%%%%

    \fill[gray!50] (11,\mi) -- (12,\mi) -- (12,\gap*\n+\mi) -- cycle;
    
    \vertex(a010) at (9,\mi) [fill=\vertcolor, \vertcolor, minimum size=4 pt] {};
    \vertex(b010) at (10,\mi) [fill=\vertcolor, \vertcolor, minimum size=4 pt] {};
    \vertex(c010) at (11,\mi) [fill=\vertcolor, \vertcolor, minimum size=4 pt] {};
    \draw (a010)--(b010)--(c010); %m=1,n=1
    
    \foreach \i in {0,...,\n} %m=1,n=1
    {
        \vertex(a11\i) at (12,\gap*\i+\mi) [fill=\vertcolor, \vertcolor, minimum size=4 pt] {};
        \vertex(b11\i) at (13,\gap*\i+\mi) [fill=\vertcolor, \vertcolor, minimum size=4 pt] {};
        \draw (a11\i)--(b11\i);
    }
    \node at (7,\gap*\n/2+\mi) {\LARGE{$m=1,B=1$}};
    
    \fill[gray!50] (20,\mi) -- (20,\gap*\n+\mi) -- (21,\gap*\nn+\mi) -- (21,\mi) -- cycle;
    
    \foreach \i in {0,...,\n} %m=1,n=4
    {
        \vertex(a21\i) at (18,\gap*\i+\mi) [fill=\vertcolor, \vertcolor, minimum size=4 pt] {};
        \vertex(b21\i) at (19,\gap*\i+\mi) [fill=\vertcolor, \vertcolor, minimum size=4 pt] {};
        \vertex(c21\i) at (20,\gap*\i+\mi) [fill=\vertcolor, \vertcolor, minimum size=4 pt] {};
        \draw (a21\i)--(b21\i)--(c21\i);
    }
    
    \foreach \i in {0,...,\nn} %m=1,n=4
    {
        \vertex(a31\i) at (21,\gap*\i+\mi) [fill=\vertcolor, \vertcolor, minimum size=4 pt] {};
        \vertex(b31\i) at (22,\gap*\i+\mi) [fill=\vertcolor, \vertcolor, minimum size=4 pt] {};
        \draw (a31\i)--(b31\i);
    }
    \node at (20,\gap*\nn+\mi+1) {\LARGE{$m=1,B=n+1$}};
    
    \fill[gray!50] (29,\mi) -- (29,\gap*\nn+\mi) -- (30,\gap*\nnn+\mi) -- (30,\mi) -- cycle;
    
    \foreach \i in {0,...,\nn} %m=1,n=7
    {
        \vertex(a41\i) at (27,\gap*\i+\mi) [fill=\vertcolor, \vertcolor, minimum size=4 pt] {};
        \vertex(b41\i) at (28,\gap*\i+\mi) [fill=\vertcolor, \vertcolor, minimum size=4 pt] {};
        \vertex(c41\i) at (29,\gap*\i+\mi) [fill=\vertcolor, \vertcolor, minimum size=4 pt] {};
        \draw (a41\i)--(b41\i)--(c41\i);
    }
    
    \foreach \i in {0,...,\nnn} %m=1,n=7
    {
        \vertex(a51\i) at (30,\gap*\i+\mi) [fill=\vertcolor, \vertcolor, minimum size=4 pt] {};
        \vertex(b51\i) at (31,\gap*\i+\mi) [fill=\vertcolor, \vertcolor, minimum size=4 pt] {};
        \draw (a51\i)--(b51\i);
    }
    \node at (34,\gap*\nnn/2+\mi) {\LARGE{$m=1,B=2n+1$}};

    %%%%%%%%%%%%%%%%%%%%%%%%%%%%%%%%%%%%%%%%%%%%%%%%%%%%%%%%%%%%%%%%%

    \fill[gray!50] (8,\mii) -- (9,\mii) -- (9,\gap*\n+\mii) -- cycle; 
    
    \fill[gray!50] (2,\mii) -- (3,\mii) -- (3,\gap*\n+\mii) -- cycle;
    
    \vertex(a020) at (0,\mii) [fill=\vertcolor, \vertcolor, minimum size=4 pt] {};
    \vertex(b020) at (1,\mii) [fill=\vertcolor, \vertcolor, minimum size=4 pt] {};
    \vertex(c020) at (2,\mii) [fill=\vertcolor, \vertcolor, minimum size=4 pt] {};
    \draw (a020)--(b020)--(c020); %m=2,n=1
    
    \foreach \i in {0,...,\n} %m=2,n=1
    {
        \vertex(a12\i) at (3,\gap*\i+\mii) [fill=\vertcolor, \vertcolor, minimum size=4 pt] {};
        \vertex(b12\i) at (4,\gap*\i+\mii) [fill=\vertcolor, \vertcolor, minimum size=4 pt] {};
        \draw (a12\i)--(b12\i);
    }
    
    \fill[gray!50] (6,\mii) -- (5,\mii) -- (5,\gap*\n+\mii) -- cycle;
    
    \vertex(a220) at (6,\mii) [fill=\vertcolor, \vertcolor, minimum size=4 pt] {};
    \vertex(b220) at (7,\mii) [fill=\vertcolor, \vertcolor, minimum size=4 pt] {};
    \vertex(c220) at (8,\mii) [fill=\vertcolor, \vertcolor, minimum size=4 pt] {};
    \draw (a220)--(b220)--(c220); %m=2,n=1
    
    \foreach \i in {0,...,\n} %m=2,n=1
    {
        \vertex(a32\i) at (4,\gap*\i+\mii) [fill=\vertcolor, \vertcolor, minimum size=4 pt] {};
        \vertex(b32\i) at (5,\gap*\i+\mii) [fill=\vertcolor, \vertcolor, minimum size=4 pt] {};
        \draw (a32\i)--(b32\i);
    }
    
    \fill[gray!50] (11,\mii) -- (11,\gap*\n+\mii) -- (12,\gap*\nn+\mii) -- (12,\mii) -- cycle;
    
    \foreach \i in {0,...,\n} %m=2,n=1
    {
        \vertex(a42\i) at (9,\gap*\i+\mii) [fill=\vertcolor, \vertcolor, minimum size=4 pt] {};
        \vertex(b42\i) at (10,\gap*\i+\mii) [fill=\vertcolor, \vertcolor, minimum size=4 pt] {};
        \vertex(c42\i) at (11,\gap*\i+\mii) [fill=\vertcolor, \vertcolor, minimum size=4 pt] {};
        \draw (a42\i)--(b42\i)--(c42\i);
    }    
    
    \foreach \i in {0,...,\nn} %m=2,n=1
    {
        \vertex(a52\i) at (12,\gap*\i+\mii) [fill=\vertcolor, \vertcolor, minimum size=4 pt] {};
        \vertex(b52\i) at (13,\gap*\i+\mii) [fill=\vertcolor, \vertcolor, minimum size=4 pt] {};
        \draw (a52\i)--(b52\i);
    }
    \node at (7,\gap*\nn+\mii) {\LARGE{$m=2,B=1$}};
    
    \fill[gray!50] (35,\mii) -- (35,\gap*\n+\mii) -- (36,\gap*\nn+\mii) -- (36,\mii) -- cycle;
    
    \fill[gray!50] (29,\mii) -- (29,\gap*\n+\mii) -- (30,\gap*\nn+\mii) -- (30,\mii) -- cycle;
    
    \fill[gray!50] (33,\mii) -- (33,\gap*\n+\mii) -- (32,\gap*\nn+\mii) -- (32,\mii) -- cycle;
    
    \foreach \i in {0,...,\n} %m=2,n=4
    {
        \vertex(a62\i) at (27,\gap*\i+\mii) [fill=\vertcolor, \vertcolor, minimum size=4 pt] {};
        \vertex(b62\i) at (28,\gap*\i+\mii) [fill=\vertcolor, \vertcolor, minimum size=4 pt] {};
        \vertex(c62\i) at (29,\gap*\i+\mii) [fill=\vertcolor, \vertcolor, minimum size=4 pt] {};
        \draw (a62\i)--(b62\i)--(c62\i);
    }    
    
    \foreach \i in {0,...,\nn} %m=2,n=4
    {
        \vertex(a72\i) at (30,\gap*\i+\mii) [fill=\vertcolor, \vertcolor, minimum size=4 pt] {};
        \vertex(b72\i) at (31,\gap*\i+\mii) [fill=\vertcolor, \vertcolor, minimum size=4 pt] {};
        \draw (a72\i)--(b72\i);
    }
    
    \foreach \i in {0,...,\n} %m=2,n=4
    {
        \vertex(a82\i) at (33,\gap*\i+\mii) [fill=\vertcolor, \vertcolor, minimum size=4 pt] {};
        \vertex(b82\i) at (34,\gap*\i+\mii) [fill=\vertcolor, \vertcolor, minimum size=4 pt] {};
        \vertex(c82\i) at (35,\gap*\i+\mii) [fill=\vertcolor, \vertcolor, minimum size=4 pt] {};
        \draw (a82\i)--(b82\i)--(c82\i);
    }    
    
    \foreach \i in {0,...,\nn} %m=2,n=4
    {
        \vertex(a92\i) at (31,\gap*\i+\mii) [fill=\vertcolor, \vertcolor, minimum size=4 pt] {};
        \vertex(b92\i) at (32,\gap*\i+\mii) [fill=\vertcolor, \vertcolor, minimum size=4 pt] {};
        \draw (a92\i)--(b92\i);
    }
    
    \fill[gray!50] (38,\mii) -- (38,\gap*\nn+\mii) -- (39,\gap*\nnn+\mii) -- (39,\mii) -- cycle;
    
    \foreach \i in {0,...,\nn} %m=2,n=4
    {
        \vertex(ax2\i) at (36,\gap*\i+\mii) [fill=\vertcolor, \vertcolor, minimum size=4 pt] {};
        \vertex(bx2\i) at (37,\gap*\i+\mii) [fill=\vertcolor, \vertcolor, minimum size=4 pt] {};
        \vertex(cx2\i) at (38,\gap*\i+\mii) [fill=\vertcolor, \vertcolor, minimum size=4 pt] {};
        \draw (ax2\i)--(bx2\i)--(cx2\i);
    }
    
    \foreach \i in {0,...,\nnn} %m=2,n=4
    {
        \vertex(ay2\i) at (39,\gap*\i+\mii) [fill=\vertcolor, \vertcolor, minimum size=4 pt] {};
        \vertex(by2\i) at (40,\gap*\i+\mii) [fill=\vertcolor, \vertcolor, minimum size=4 pt] {};
        \draw (ay2\i)--(by2\i);
    }
    \node at (34,\gap*\nnn+\mii) {\LARGE{$m=2,B=n+1$}};

    %%%%%%%%%%%%%%%%%%%%%%%%%%%%%%%%%%%%%%%%%%%%%%%%%%%%%%%%%%%%%%%%%

    \fill[gray!50] (26,\miii) -- (27,\miii) -- (27,\gap*\n+\miii) -- cycle; 
    
    \fill[gray!50] (8,\miii) -- (9,\miii) -- (9,\gap*\n+\miii) -- cycle; 
    
    \fill[gray!50] (2,\miii) -- (3,\miii) -- (3,\gap*\n+\miii) -- cycle;
    
    \fill[gray!50] (18,\miii) -- (17,\miii) -- (17,\gap*\n+\miii) -- cycle; 
    
    \fill[gray!50] (24,\miii) -- (23,\miii) -- (23,\gap*\n+\miii) -- cycle;
    
    \vertex(a030) at (0,\miii) [fill=\vertcolor, \vertcolor, minimum size=4 pt] {};
    \vertex(b030) at (1,\miii) [fill=\vertcolor, \vertcolor, minimum size=4 pt] {};
    \vertex(c030) at (2,\miii) [fill=\vertcolor, \vertcolor, minimum size=4 pt] {};
    \draw (a030)--(b030)--(c030); %m=3,n=1
    
    \foreach \i in {0,...,\n} %m=3,n=1
    {
        \vertex(a13\i) at (3,\gap*\i+\miii) [fill=\vertcolor, \vertcolor, minimum size=4 pt] {};
        \vertex(b13\i) at (4,\gap*\i+\miii) [fill=\vertcolor, \vertcolor, minimum size=4 pt] {};
        \draw (a13\i)--(b13\i);
    }
    
    \fill[gray!50] (6,\miii) -- (5,\miii) -- (5,\gap*\n+\miii) -- cycle;
    
    \vertex(a230) at (6,\miii) [fill=\vertcolor, \vertcolor, minimum size=4 pt] {};
    \vertex(b230) at (7,\miii) [fill=\vertcolor, \vertcolor, minimum size=4 pt] {};
    \vertex(c230) at (8,\miii) [fill=\vertcolor, \vertcolor, minimum size=4 pt] {};
    \draw (a230)--(b230)--(c230); %m=3,n=1
    
    \foreach \i in {0,...,\n} %m=3,n=1
    {
        \vertex(a33\i) at (4,\gap*\i+\miii) [fill=\vertcolor, \vertcolor, minimum size=4 pt] {};
        \vertex(b33\i) at (5,\gap*\i+\miii) [fill=\vertcolor, \vertcolor, minimum size=4 pt] {};
        \draw (a33\i)--(b33\i);
    }
    
    \fill[gray!50] (11,\miii) -- (11,\gap*\n+\miii) -- (12,\gap*\nn+\miii) -- (12,\miii) -- cycle;
    
    \foreach \i in {0,...,\n} %m=3,n=1
    {
        \vertex(a43\i) at (9,\gap*\i+\miii) [fill=\vertcolor, \vertcolor, minimum size=4 pt] {};
        \vertex(b43\i) at (10,\gap*\i+\miii) [fill=\vertcolor, \vertcolor, minimum size=4 pt] {};
        \vertex(c43\i) at (11,\gap*\i+\miii) [fill=\vertcolor, \vertcolor, minimum size=4 pt] {};
        \draw (a43\i)--(b43\i)--(c43\i);
    }    
    
    \foreach \i in {0,...,\nn} %m=3,n=1
    {
        \vertex(a53\i) at (12,\gap*\i+\miii) [fill=\vertcolor, \vertcolor, minimum size=4 pt] {};
        \vertex(b53\i) at (13,\gap*\i+\miii) [fill=\vertcolor, \vertcolor, minimum size=4 pt] {};
        \draw (a53\i)--(b53\i);
    }
    
    \vertex(aa30) at (26,\miii) [fill=\vertcolor, \vertcolor, minimum size=4 pt] {};
    \vertex(ba30) at (25,\miii) [fill=\vertcolor, \vertcolor, minimum size=4 pt] {};
    \vertex(ca30) at (24,\miii) [fill=\vertcolor, \vertcolor, minimum size=4 pt] {};
    \draw (aa30)--(ba30)--(ca30); %m=3,n=1
    
    \foreach \i in {0,...,\n} %m=3,n=1
    {
        \vertex(ab3\i) at (23,\gap*\i+\miii) [fill=\vertcolor, \vertcolor, minimum size=4 pt] {};
        \vertex(bb3\i) at (22,\gap*\i+\miii) [fill=\vertcolor, \vertcolor, minimum size=4 pt] {};
        \draw (ab3\i)--(bb3\i);
    }
    
    \fill[gray!50] (20,\miii) -- (21,\miii) -- (21,\gap*\n+\miii) -- cycle;
    
    \vertex(ac30) at (20,\miii) [fill=\vertcolor, \vertcolor, minimum size=4 pt] {};
    \vertex(bc30) at (19,\miii) [fill=\vertcolor, \vertcolor, minimum size=4 pt] {};
    \vertex(cc30) at (18,\miii) [fill=\vertcolor, \vertcolor, minimum size=4 pt] {};
    \draw (ac30)--(bc30)--(cc30); %m=3,n=1
    
    \foreach \i in {0,...,\n} %m=3,n=1
    {
        \vertex(ad3\i) at (22,\gap*\i+\miii) [fill=\vertcolor, \vertcolor, minimum size=4 pt] {};
        \vertex(bd3\i) at (21,\gap*\i+\miii) [fill=\vertcolor, \vertcolor, minimum size=4 pt] {};
        \draw (ad3\i)--(bd3\i);
    }
    
    \fill[gray!50] (15,\miii) -- (15,\gap*\n+\miii) -- (14,\gap*\nn+\miii) -- (14,\miii) -- cycle;
    
    \foreach \i in {0,...,\n} %m=3,n=1
    {
        \vertex(ae3\i) at (17,\gap*\i+\miii) [fill=\vertcolor, \vertcolor, minimum size=4 pt] {};
        \vertex(be3\i) at (16,\gap*\i+\miii) [fill=\vertcolor, \vertcolor, minimum size=4 pt] {};
        \vertex(ce3\i) at (15,\gap*\i+\miii) [fill=\vertcolor, \vertcolor, minimum size=4 pt] {};
        \draw (ae3\i)--(be3\i)--(ce3\i);
    }    
    
    \foreach \i in {0,...,\nn} %m=3,n=1
    {
        \vertex(af3\i) at (14,\gap*\i+\miii) [fill=\vertcolor, \vertcolor, minimum size=4 pt] {};
        \vertex(bf3\i) at (13,\gap*\i+\miii) [fill=\vertcolor, \vertcolor, minimum size=4 pt] {};
        \draw (af3\i)--(bf3\i);
    }
    
    \fill[gray!50] (35,\miii) -- (35,\gap*\n+\miii) -- (36,\gap*\nn+\miii) -- (36,\miii) -- cycle;
    
    \fill[gray!50] (29,\miii) -- (29,\gap*\n+\miii) -- (30,\gap*\nn+\miii) -- (30,\miii) -- cycle;
    
    \fill[gray!50] (33,\miii) -- (33,\gap*\n+\miii) -- (32,\gap*\nn+\miii) -- (32,\miii) -- cycle;
    
    \foreach \i in {0,...,\n} %m=3,n=1
    {
        \vertex(a63\i) at (27,\gap*\i+\miii) [fill=\vertcolor, \vertcolor, minimum size=4 pt] {};
        \vertex(b63\i) at (28,\gap*\i+\miii) [fill=\vertcolor, \vertcolor, minimum size=4 pt] {};
        \vertex(c63\i) at (29,\gap*\i+\miii) [fill=\vertcolor, \vertcolor, minimum size=4 pt] {};
        \draw (a63\i)--(b63\i)--(c63\i);
    }    
    
    \foreach \i in {0,...,\nn} %m=3,n=1
    {
        \vertex(a73\i) at (30,\gap*\i+\miii) [fill=\vertcolor, \vertcolor, minimum size=4 pt] {};
        \vertex(b73\i) at (31,\gap*\i+\miii) [fill=\vertcolor, \vertcolor, minimum size=4 pt] {};
        \draw (a73\i)--(b73\i);
    }
    
    \foreach \i in {0,...,\n} %m=3,n=1
    {
        \vertex(a83\i) at (33,\gap*\i+\miii) [fill=\vertcolor, \vertcolor, minimum size=4 pt] {};
        \vertex(b83\i) at (34,\gap*\i+\miii) [fill=\vertcolor, \vertcolor, minimum size=4 pt] {};
        \vertex(c83\i) at (35,\gap*\i+\miii) [fill=\vertcolor, \vertcolor, minimum size=4 pt] {};
        \draw (a83\i)--(b83\i)--(c83\i);
    }    
    
    \foreach \i in {0,...,\nn} %m=3,n=1
    {
        \vertex(a93\i) at (31,\gap*\i+\miii) [fill=\vertcolor, \vertcolor, minimum size=4 pt] {};
        \vertex(b93\i) at (32,\gap*\i+\miii) [fill=\vertcolor, \vertcolor, minimum size=4 pt] {};
        \draw (a93\i)--(b93\i);
    }
    
    \fill[gray!50] (38,\miii) -- (38,\gap*\nn+\miii) -- (39,\gap*\nnn+\miii) -- (39,\miii) -- cycle;
    
    \foreach \i in {0,...,\nn} %m=3,n=1
    {
        \vertex(ax3\i) at (36,\gap*\i+\miii) [fill=\vertcolor, \vertcolor, minimum size=4 pt] {};
        \vertex(bx3\i) at (37,\gap*\i+\miii) [fill=\vertcolor, \vertcolor, minimum size=4 pt] {};
        \vertex(cx3\i) at (38,\gap*\i+\miii) [fill=\vertcolor, \vertcolor, minimum size=4 pt] {};
        \draw (ax3\i)--(bx3\i)--(cx3\i);
    }
    
    \foreach \i in {0,...,\nnn} %m=3,n=1
    {
        \vertex(ay3\i) at (39,\gap*\i+\miii) [fill=\vertcolor, \vertcolor, minimum size=4 pt] {};
        \vertex(by3\i) at (40,\gap*\i+\miii) [fill=\vertcolor, \vertcolor, minimum size=4 pt] {};
        \draw (ay3\i)--(by3\i);
    }
    \node at (20,\gap*\nnn+\miii-1) {\LARGE{$m=3,B=1$}};
    
    \end{tikzpicture} \caption{An instantiation of the graph $\Gmn{B}$ for $m=3,n=3,B=1$. All the edges shown are of weight $0$ and the gray area is the linking gadget, $\mathsf{LINK}$.} \label{fig:bighero}
    \end{center}
    \end{sidewaysfigure}
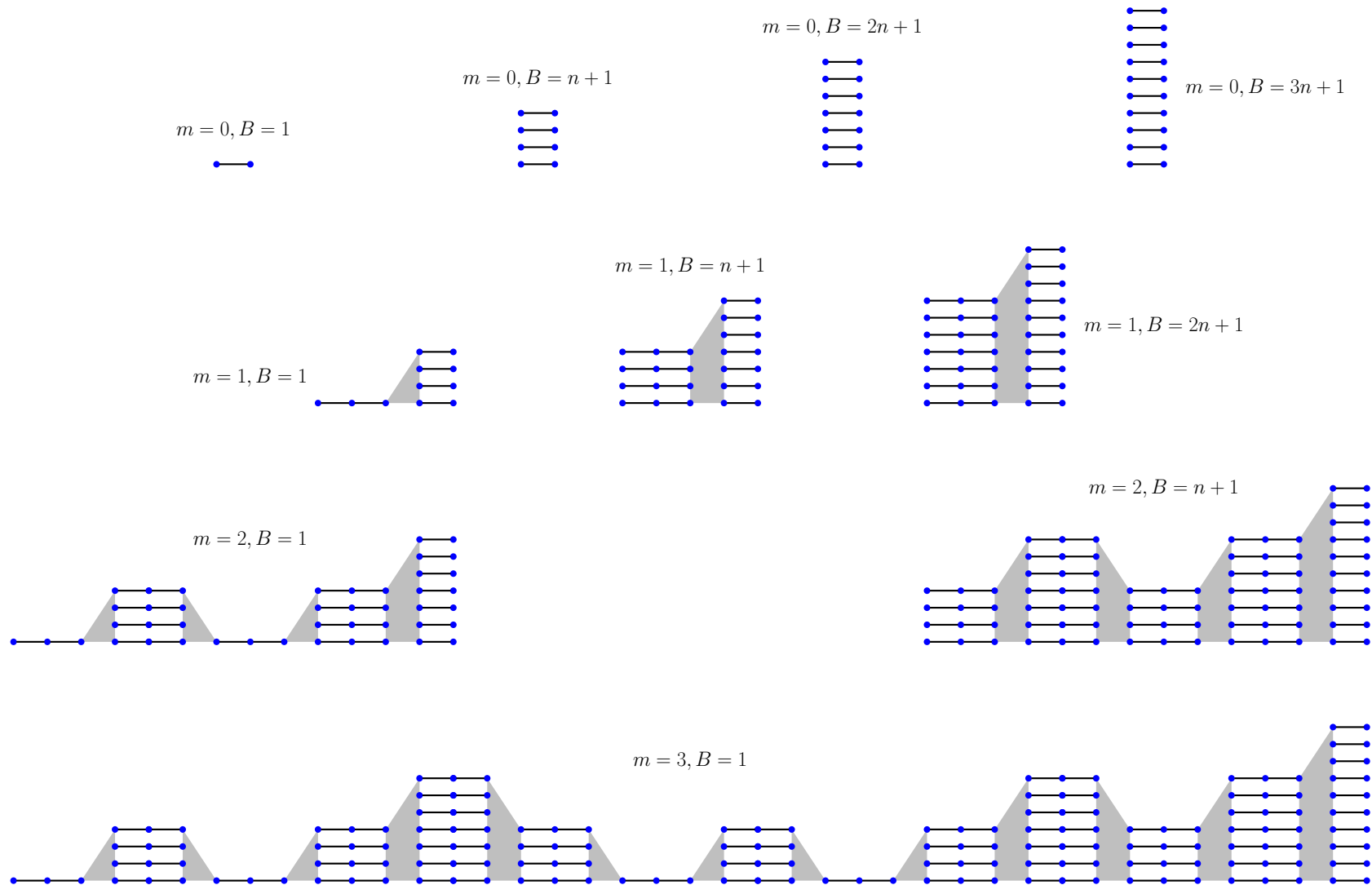
    
    We now prove our main theorem.
    
    \thmmainresult*
    
    Our proof is constructive. Throughout this section, let $n \in \mathbb{N}$ be a fixed natural number and define $N=n^2$. Also, for simplicity, we assume that $n\geq 4$.
    
    The proof is by induction. We begin with a small base graph with one interval which has its own dedicated path (that is, whenever the value of $\lambda$ lies in that interval, its dedicated path is the shortest path in the graph). In each inductive step, we roughly triple the size of the graph and subdivide each interval into $n$ intervals. Each of those intervals have their own dedicated paths. After $\log n$ steps, we end up with a graph on $\poly(n)$ vertices with $n^{\log n}$ intervals.
    
    %In the first step, we roughly triple the size of the graph and subdivide each interval into $n$ intervals, so that the shortest path has $n^2$ break points. In the second step, the size of the graph becomes roughly nine-fold that of the original graph and each interval is further subdivided into $n$ intervals, so that the shortest path now has $n^3$ break points. After $\log n$ steps, we end up with a graph on $\poly(n)$ vertices with $n^{\log n}$ shortest path break points. Throughout our proof, we will assume that $n$ is a fixed number greater than $4$.
    
    \subsection{Inductive definition of intervals} \label{sec:inddefint}
    In our recursive construction, we will construct paths that reign as the shortest path in particular intervals (as mentioned above, each interval has its dedicated path) for the parameter $\lambda$. We will construct a large number of intervals and show that a different path is the unique shortest path in each interval. This will establish that there are many pieces in the piece-wise linear cost of the shortest path in our graph. The construction of the graph and the role of the intervals is described in detail later. We first place the framework by describing the intervals inductively. The intervals depend on the parameter $N$ (recall $N=n^2$). Then, for $m=0,1,\ldots,\floor{\log n}$, and $j=0,1,\ldots,n^m-1$, we define points $\alpha(j,m) \in \mathbb{Q}$ inductively; these points will be used to define the intervals. Each interval is of length $N-2$. At the $m$-th step of our construction, we have $n^m$ intervals.
    
    \paragraph{Base case ($m=0$):} We set $\alpha(0,0) = 0$. (Since $0\leq j\leq n^m-1$, the only possible value for $j$ is $0$.)
    \paragraph{Induction ($m\geq 1$):}
    For $m\geq 1$ and $0\leq j\leq n^m-1$, we write $j= nd + r$, where $0\leq d \leq n^{m-1}-1$ and $0\leq r \leq n-1$; then, we set $\alpha(j,m)= N\alpha(d,m-1) + N(r+1)$. %$\alpha(j,m)$ is essentially result of writing $j$ as an $m$-digit number in base $n$, and interpreting the resulting string as a base $N$ integer. (But increment the digits and append a $0$.)
    
    \paragraph{Intervals:} For $m\geq 0$ and $0 \leq j \leq n^m-1$, let $\interval(j,m)=[\alpha(j,m)+1, \alpha(j,m)+N-1]$.
    
    The idea behind this definition of the intervals is as follows. Given a set of $n^{m-1}$ intervals at level $m-1$, we first \emph{stretch} them by a factor $N$ (the corresponding graph can also be appropriately stretched by a factor $N$ by replacing the $\lambda$ in each edge weight with $\lambda/N$; this will be explained in detail later). Then, we subdivide each interval into $n$ disjoint intervals to obtain $n^m$ intervals at level $m$.
    
    However, in order to apply the induction hypothesis cleanly, we require that all of the $n$ subdivided intervals are contained in the stretched version of their parent interval. Our definition of $\alpha(j,m)$ has $N(r+1)$ instead of $Nr$ precisely to ensure that $\interval(j,m)\subseteq N\cdot\interval(d,m-1)$ (the definition in Mulmuley \& Shah~\cite{mulmuleyshah} unfortunately overlooks this point). Let us now summarize these observations.
    
    \begin{enumerate}
        \item For each $m\geq 0$ and $0 \leq j \leq n^m-1$, $|\interval(j,m)|=N-2$.
        \item For each $m\geq 1$ and $0 \leq j \leq n^m-1$, $\interval(j,m)\subseteq N\cdot\interval(d,m-1)$, where $d=\floor{j/n}$.
        \item For each $m\geq 0$, the intervals in the list $(\interval(j,m): j=0,1,\ldots,n^m-1)$ are disjoint.
        \item For each $m\geq 0$ and $0 \leq j \leq n^m-1$, $\interval(j,m)\subseteq[0,N^{m+1}]$. In particular, $\alpha(j,m)\leq N^{m+1}$.
    \end{enumerate}
    
    \subsection{Inductive construction of graphs}
    Our induction depends on three parameters $B$, $D$ and $m$, which impose constraints on the layered, weighted, planar graphs we construct.
    
    \paragraph{The parameter $B$:} $B\in\mathbb{N}$ denotes the number of vertices in the first (input) layer of this graph. $B$ takes values of the form $1,n+1,2n+1,\ldots$, and for a fixed $B$, the variable $b\in\{0,1,\ldots,B-1\}$ denotes an input vertex. All our paths originate in the first layer of the graph and end in the last layer. When we derive our main theorem from this construction, we set $B=1$, which means our final graph has \emph{one} input vertex. We call this unique input vertex $s$ and connect all vertices in the last layer to a new vertex $t$ using edges of weight $0$, so that we have pristine $s$-$t$ paths as promised. Thus, we do not mention $s$ and $t$ for the rest of our proof.
    
    \paragraph{The parameter $D$:} $D\in\mathbb{Q}$, $|D|\leq 1$ is used to determine the weights of the edges.
    
    \paragraph{The parameter $m$:} $m\in\mathbb{N}$, $m\geq 0$ is the induction parameter (this is the same $m$ which is used to define the intervals). %$m$ helps ensure that the number of break points in the cost of the shortest paths is large.
    
    See~\autoref{fig:bighero} for a  step-by-step visualization of this construction. The formal induction will be carried out using a predicate $\Phi$, which we now define.
    
    \begin{mdframed}[backgroundcolor=gray!10]
    {
	\setlength{\parindent}{0cm}
	\setlength{\parskip}{10pt plus 1pt minus 1pt}
	\begin{center}
        \textbf{\underline{The Predicate $\Phi$}}
    \end{center}
    For $B$, $D$, $m$ as described above, we say that \emph{the predicate $\Phi(B,D,m)$ holds} if there is a layered, weighted, planar graph $G(B,D,m)$ with $B$ input vertices, rational edge weights, and $Bn^m$ paths $P_{bj}$ (for $b=0,1,\ldots, B-1$ and $j=0,1,\ldots,n^m-1$) satisfying the following properties.
    \begin{enumerate}
        \item[$($i$)$] The graph $G(B,D,m)$ has at most $(3^{m+1}-1)(B+ m n)^4$ vertices.
        \item[$($ii$)$] The weight of each edge $e$ in the graph $G(B,D,m)$ has the form $a_e+b_e\lambda$ such that 
        \begin{align*}
            &a_e=a_{1e}+a_{2e}D &\mbox{and} &&b_e=b_{1e}+b_{2e}D,
        \end{align*}
        where $a_{1e}, b_{1e}, a_{2e}, b_{2e}$ are rational numbers with denominator at most $2^mn^2$ and numerator at most $(400\,NB)^{5m^2}$, in absolute value.
        \item[$($iii$)$] For all $b$, $j$, and $\lambda \in \interval(j,m)$, the \emph{unique} shortest path from the input vertex $b$ to the last layer of $G$ is $P_{bj}$, and $\Cost(Q_{b})(\lambda) - \Cost(P_{bj})(\lambda) \geq 1$, for all other paths $Q_{b}$ from the input vertex $b$ to the last layer of $G$.
        \item[$($iv$)$] For all $b$ and $j$, \[\Cost(P_{bj})(\lambda) = \Cost(P_{0j})(\lambda) + bD\alpha(j,m).\]
        \item[$($v$)$] For all $j$, the paths in the list $(P_{bj}:b=0,1,\ldots,B-1)$ are vertex-disjoint.
        \item[$($vi$)$] For all $b$, the paths in the list $(P_{bj}: j=0,1,\ldots,n^m-1)$ are distinct.
    \end{enumerate}
    }
    \end{mdframed}
    
    The following lemma is essentially the same as Lemma 4.1 of Mulmuley \& Shah~\cite{mulmuleyshah}. We closely follow their argument, slightly simplifying the induction, providing more detailed calculations, and correcting some errors; we crucially employ the planarized linking gadget of~\autoref{sec:planarlinkgadget} and~\autoref{lm:planarize} to ensure that our graphs are planar.
    \begin{Lemma} [Main lemma] \label{lm:mainlemma}
    For all integers $B\geq 1$, rational numbers $D\in[-1,+1]$ and integers $m\geq 0$, the predicate $\Phi(B,D,m)$ holds.
    \end{Lemma}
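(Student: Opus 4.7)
The proof proceeds by induction on $m$. For the base case $m=0$, I take $G(B,D,0)$ to be $B$ parallel edges of weight zero, each connecting input vertex $b$ to its own dedicated output vertex; since $\alpha(0,0)=0$ and there is only one path per input, all six conditions are either immediate or vacuous (in particular, (iv) reduces to the tautology $\Cost(P_{b0})=\Cost(P_{00})$).

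For the inductive step, I assume $\Phi(B',D',m-1)$ for every admissible $B'$ and every $D'\in[-1,1]$, and I assemble $G(B,D,m)$ by placing in series, following Figure~\ref{fig:glgmgr}, the graphs $G^L := G(B,D_L,m-1)$, $G^M := G(B,D_L,m-1)^{\rev}$, a planarized linking gadget $\linkgadget$, and $G^R := G(B+n,D_R,m-1)$, where $D_L, D_R \in [-1,1]$ will be fixed later. To implement the $N$-fold stretching $\interval(j,m)\subseteq N\cdot\interval(d,m-1)$ built into Section~\ref{sec:inddefint}, I substitute $\lambda \mapsto \lambda/N$ in each of $G^L, G^M, G^R$ (which merely divides the $b_e$ coefficients by $N$). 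The weights in the linking gadget are chosen in the structured form $w_{b,b+r} = J(b,r) + K\,r(r+1)/2 + L\,r\,\lambda$ required by Lemma~\ref{lm:planarize}, with constants $J, K, L$ to be selected.

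For $j = nd + r$ with $0\le d\le n^{m-1}-1$ and $0\le r\le n-1$, I define $P_{bj}$ as the concatenation of $P^L_{b,d}$ through $G^L$, its mirror through $G^M$ returning to the output labelled $b$, the planarized shortest path from $b$ to $b+r$ inside $\linkgadget$ (which by Lemma~\ref{lm:planarize} exists and is unique with margin $1$), and $P^R_{b+r,d}$ through $G^R$. The key verification is condition (iii): when $\lambda\in\interval(j,m)$, the rescaled parameter $\lambda/N$ lies in $\interval(d,m-1)$, so the inductive hypothesis pins down all three sub-paths with margin at least $1$; simultaneously, the quadratic term $Kr(r+1)/2$ together with the linear term $Lr\lambda$ are tuned so that $r$ is the unique minimizing displacement over $\{0,1,\dots,n\}$ for exactly those $\lambda\in\interval(j,m)$, while the planarization margin from Lemma~\ref{lm:planarize} absorbs any contributions from non-direct link paths. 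For condition (iv), I set $D_R = D$ so that the induction hypothesis on $G^R$ contributes an offset $bD\alpha(d,m-1)$ after the $N$-scaling, and I pick $D_L$ to cancel the $b$-dependence arising from $G^L$ traversed forward and $G^M$ traversed in reverse; the residual $b$-dependent offset $bD(r+1)N$ coming from $\linkgadget$ via $L$ then makes the total offset equal $bD\alpha(j,m) = bD(N\alpha(d,m-1)+N(r+1))$, as required. Conditions (v) and (vi) are immediate by construction, since different $b$'s select vertex-disjoint sub-paths everywhere and different $j$'s force either a different lower-level path (when $d$ differs) or a different link edge (when $r$ differs).

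The main obstacle will be the quantitative bookkeeping for (i) and (ii). Each inductive level multiplies denominators by at most $n^2$, yielding the bound $2^m n^2$; the vertex-count bound $(3^{m+1}-1)(B+mn)^4$ follows from the recursion $|V(G(B,D,m))| \le |V(G^L)| + |V(G^M)| + |V(\linkgadget)| + |V(G^R)|$ together with the $O(n^2)$-vertex count of the planarization in Section~\ref{sec:planarlinkgadget} and the fact that $G^R$ has width $B+n$. The numerators of $K,L$ required by Lemma~\ref{lm:planarize} grow polynomially in $\max_e|J(e)|\cdot n^2$, where $J$ absorbs the inductive $a_e$ coefficients; combining this with the $N$-scaling at each level produces a multiplicative factor polynomial in $NB$ per induction step, which a careful recurrence converts into the stated $(400NB)^{5m^2}$ bound. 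Taking $B=1$ and iterating the lemma up to $m=\floor{\log n}$ then yields the planar graph on $\poly(n)$ vertices with $n^{\log n}$ distinct dedicated shortest paths claimed by Theorem~\ref{thm:mainresult}.
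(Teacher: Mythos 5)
Your overall architecture matches the paper's (induction on $m$, the $\GL\circ\GM\circ\linkgadget^{\pl}\circ\GR$ assembly, the structured link weights needed by \autoref{lm:planarize}, and the six-property verification), but the bookkeeping for properties (iii) and (iv) has genuine gaps. First, you never introduce the scaling constants $K_L$ and $K_R$ that multiply the inherited weights of $\GL,\GM$ and $\GR$; in the paper these are $K_L = 400\,N^{m+4}B^2$ and $K_R=20\,N^3B$, and they are not cosmetic. Property (iii) is proved by showing that a deviation inside $\GL\circ\GM$ costs at least $K_L$ while the total variation achievable across $\linkgadget^{\pl}\circ\GR$ is only $O(K_RN^{m+1}) \ll K_L$; without the $K_L$ blowup, the IH only gives a deviation penalty of $1$, which can easily be overwhelmed by the savings available downstream, so your claim that ``the inductive hypothesis pins down all three sub-paths with margin at least $1$'' does not combine into a bound for the concatenated graph. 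This dominance argument (the paper's \autoref{cl:LM} together with the final paragraph of the property (iii) proof) is the real content of the inductive step and is absent from your proposal.

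Second, your choice $D_R = D$ and ``$D_L$ to cancel the $b$-dependence'' does not reproduce property (iv). The paper sets $D_R = 1$ and $D_L = \tfrac{N}{2K_L}(D - K_R/N)$; the term $-K_Rb\alpha(d,m-1)$ hidden inside $2K_LbD_L\alpha(d,m-1)$ cancels the $+K_Rb\alpha(d,m-1)$ coming from $\GR$, leaving exactly $bND\alpha(d,m-1)$, which together with the link's $bND(r+1)$ equals $bD\alpha(j,m)$. With your parameters the offset from $\GR$ is $bD\alpha(d,m-1)$ rather than $bDN\alpha(d,m-1)$ — off by a factor $N$ — so the totals do not match $\alpha(j,m) = N\alpha(d,m-1)+N(r+1)$. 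Moreover, \autoref{cl:LR}'s telescoping $Z_k - Z_{k-1} = NDb + \tfrac{K_R}{N}(\alpha(k-1,m)-\lambda)$ requires $D_R=1$ to make the $\GR$ offset align with the link-gadget increments; $D_R=D$ destroys this cancellation. Finally, you route the link through $\linkedge(b,b+r)$ and $P^R_{b+r,d}$ rather than $\linkedge(b,b+r+1)$ and $P^R_{b+r+1,d}$; given the interval definition $\alpha(j,m)=N\alpha(d,m-1)+N(r+1)$ that you adopt from \autoref{sec:inddefint}, the minimizing jump over $\lambda\in\interval(j,m)$ is $r+1$, not $r$ (this is precisely the off-by-one the paper flags in Mulmuley \& Shah's original).
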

    We will prove this lemma after using it to establish our main theorem.
    \begin{proof}[Proof of~\autoref{thm:mainresult}] Taking $B=1$, $D=0$ and $m=\floor{\log n}$, we conclude that $\Phi(1,0,\floor{\log n})$ holds (\autoref{lm:mainlemma}). Using property (i), the number of vertices in the corresponding graph $G(1,0,\floor{\log n})$ is at most
    \begin{align*}
        (3^{m+1} -1)(B+mn)^4 & \leq (3^{\log n+1}-1)(1+(\log n)n)^4\\
        &\leq (3n^{1.585})(1+(\log n)n)^4 &\text{(since $\log_2 3\leq 1.585$)}\\
        &\leq 6n^{1.585}(n\log n)^4 &\text{(since $n\geq 4$)}\\
        &\leq 6n^{1.585}(n\cdot n^{0.6})^4\\
        &\leq 6n^8.
    \end{align*}
    To this graph we attach a sink vertex $t$ as stated earlier. The graph admits $n^m$ disjoint intervals, with a different unique shortest $s$-$t$ path in each (properties (iii),(vi)); so the cost of the shortest $s$-$t$ path in this graph has $n^{\floor{\log n}}$ pieces in the lower envelope.
    
    Using property (ii) and substituting $B=1$, $D=0$ and $m=\floor{\log n}$, the value of the largest coefficient (numerator or denominator) in the edge weights of the graph is at most
    \begin{align*}
        (400\,NB)^{5m^2}&\leq(400\,n^2)^{5(\log n)^2}\\
        &\leq(400\cdot 2^{2\log n})^{5(\log n)^2}\\
        &\leq(2^{5\log n}\cdot 2^{2\log n})^{5(\log n)^2 } &\text{(since $n\geq 4$)}\\
        &\leq 2^{35(\log n)^3}.
    \end{align*}
    This implies that the bit lengths of the coefficients in the edge weights are bounded by $35(\log n)^3$.
    
    Let $\nu$ be a large positive integer. Let $n$ be the largest integer such that $6n^8+1 \leq \nu$. Note $n=\nu^{\Theta(1)}$ and hence $\log n = \Theta(\log \nu)$. Using the construction above (adding dummy isolated vertices if necessary), we obtain a graph on $\nu$ vertices, whose edge weights have rational coefficients with numerator and denominator of bit lengths bounded by $O((\log \nu)^3)$, and in which the cost of the shortest path has $\nu^{\Omega(\log \nu)}$ pieces in the lower envelope. This completes the proof of our main theorem.
    
    \paragraph{Remark:} Since we require integer edge weights, we can consider clearing all denominators in the coefficients. However, the LCM of the denominators may be prohibitively large. To keep the numbers small, we can modify our construction slightly. The points in the planar linking gadgets can be located at nearby points whose coordinates are multiples of (say) $n^{-4}$. This will ensure that the LCM of the denominators can be written in $O(\log \nu)$ bits. Clearing the denominators now keeps the final integer coefficients $O((\log \nu)^3)$ bits long. 
    
    Note that we did not use property (iv) or (v), which are needed merely in the inductive proof of the main lemma. 
    \end{proof}
    \subsection{Proof of the main lemma}
    
    \begin{proof}[Proof of~\autoref{lm:mainlemma}] We will use induction on $m$ to verify $\Phi(B,D,m)$. For the base case ($m=0$), let $G$ consist of $B$ disjoint edges, each of weight $0$, leaving the $B$ input vertices. The only choice for $j$ in this case is $j=0$ (since $j$ varies from $0$ to $n^m-1$). To verify property (ii), note that each edge weight is of the form $((0+0\cdot D)+(0+0\cdot D)\lambda)$. To verify property (iv), recall that $\alpha(0,0)=0$. All the other properties for $\Phi$ are also easily verified.
    
    Let $m\geq 1$. Assume that $\Phi(B',D',m-1)$ holds for all $B'$ and $D'$. We now fix $B$ and $D$ and show that $\Phi(B,D,m)$ holds for the graph $G(B,D,m)$. Based on $B$, $D$ and $m$, we fix constants
    \begin{align}
    K_L &= 400\,N^{m+4}B^2; \label{eq:setKL}\\
    K_R &= 20\,N^3B; \label{eq:setKR}\\
    D_L &= \frac{N}{2K_L}\left(D - \frac{K_R}{N}\right); \label{eq:setDL}\\
    D_R &= 1. \label{eq:setDR}
    \end{align}
    These constants, which may seem mysterious, will be justified by the claims that follow. Let us now explain the construction and edge weights of $G$.
    
    \paragraph{Construction of $G$:} The graph $G$ is built by concatenating four components: $\GL$, $\GM$, $\mathsf{LINK}$ and $\GR$. Let $\GL$ be the graph corresponding to the induction hypothesis $\Phi(B,D_L,m-1)$; we refer to the corresponding $Bn^{m-1}$ paths by $P_{bd}^L$ where $0\leq b\leq B-1$ and $0\leq d\leq n^{m-1}-1$. Let $\GM$ denote the graph obtained by mirroring $\GL$ about its last layer and reversing the directions of its edges so that all edges go from left to right (see~\autoref{fig:glgmgr}). Thus, $\GM$ has $B$ vertices in its last layer. Let $\GR$ be the graph corresponding to the induction hypothesis $\Phi(B+n,D_R,m-1)$; we refer to the corresponding $(B+n)n^{m-1}$ paths by $P_{bd}^R$ where $0\leq b\leq B+n-1$ and $0\leq d \leq n^{m-1}-1$.
    
    \paragraph{Edge weights of $G$:} We need to transform the edges weights in $\GL$, $\GM$ and $\GR$ before we put them together with a linking gadget to obtain our graph $G$. Let $w_e^L$, $w_e^M$ and $w_e^R$ denote the weights of the edges in $\GL$, $\GM$ and $\GR$, and let $w_e$ denote their weights in $G$.
    \begin{align*}
        w_e(\lambda)&\leftarrow K_L\cdot w_e^L(\lambda/N) &&\forall\, e\in E(\GL)\\
        w_e(\lambda)&\leftarrow K_L\cdot w_e^M(\lambda/N) &&\forall\, e\in E(\GM)\\
        w_e(\lambda)&\leftarrow K_R\cdot w_e^L(\lambda/N) &&\forall\, e\in E(\GR)
    \end{align*}
    In essence, we are \emph{scaling} (by factors $K_L$ and $K_R$)  and \emph{stretching} (by a factor $N$)\footnote{Recall that our intervals are stretched by a factor $N$ when we go from one level of recursion to the next (\autoref{sec:inddefint}).} our already existing solutions for $\GL$, $\GM$ and $\GR$ so that together they can form a solution for $G$.
    
    \paragraph{Linking gadget:} Let $\linkgadget(B,n)$ be the non-planar linking gadget with edge weights
    \begin{equation} \label{eq:linkedges}
        w_{b,b+r} = NDrb + \frac{K_R}{N}\left(\left(\frac{r(r+1)}{2}\right)N - r\lambda\right),\qquad\text{where } 0\leq b\leq B-1 \text{ and } 0\leq r\leq n,
    \end{equation}
    and let $\linkgadget^{\pl}(B,n)$ be its planarized version. Note that the $D$ used in~\eqref{eq:linkedges} is the $D$ that was part of the predicate $\Phi(B,D,m)$ (neither $D_L$ nor $D_R$). The graph $G$ obtained by composing $\GL$, $\GM$, $\linkgadget^{\pl}$ and $\GR$ is shown in~\autoref{fig:glgmgr}. Since $\GL$, $\GM$ and $\GR$ are planar by induction, and $\linkgadget^{\pl}(B,n)$ is planar, the graph obtained by composing them is also planar (\autoref{fig:bighero}). This completes the description of all the constituent components of $G$.
    
    Before we proceed further, let us verify that for our choice of parameters, $\linkgadget^{\pl}$ faithfully simulates its non-planar counterpart. Invoke~\autoref{lm:planarize} with $J(b,b+r)=NDrb$, $K=K_R$ and $L=-K_R/N$. For the setting of $K_R$ in~\eqref{eq:setKR}, we have (recall $N=n^2$)
    \begin{align*}
        n^2\,\left(1+2\underset{e\in E(\linkgadget^{\pl}(B,n))}{\max} |J(e)|\right)
        \leq n^2\,\left(1+ 2 N |D| nB \right)
        \leq 4(|D|+1)N^{2.5} B
        \leq K_R,
    \end{align*}
    so the requirement $K \geq n^2\,(1+2\max_e |J(e)|)$ of~\autoref{lm:planarize} holds. Thus, $\linkgadget^{\pl}(B,n)$ faithfully simulates $\linkgadget(B,n)$.
    
    Finally, in order invoke the induction hypothesis for $\Phi(B,D_L,m-1)$ and $\Phi(B+n,D_R,m-1)$, we need to show that $|D_L|\leq 1$ ($|D_R|=1$ from~\eqref{eq:setDR}).
    \begin{align*}
        |D_L|&=\left|\frac{N}{2K_L}(D-20N^2B)\right| &\text{(from~\eqref{eq:setDL})}\\
        &\leq \left|\frac{ND}{800N^{m+4}B^2}\right|+\left|\frac{20N^3B}{800N^{m+4}B^2}\right| &\text{(from~\eqref{eq:setKL})}\\
        %&\leq \left|\frac{D}{800N^{m+3}B^2}\right|+\left|\frac{1}{40N^{m+1}B}\right|\\
        &\leq \frac{1}{800}+\frac{1}{40} \ll 1. &\text{(since $|D|\leq 1$)}
    \end{align*}
    Thus, we can work under the assumption that $\Phi(B,D_L,m-1)$ and $\Phi(B+n,D_R,m-1)$ hold. We may view $G$ as (see~\autoref{fig:glgmgr}) $$ G=\GL\circ\GM\circ\linkgadget^{\pl}\circ\GR, $$ where $\circ$ represents concatenation of graphs. To show that $\Phi(B,D,m)$ holds, we will first show through calculations that properties (i), (ii) hold. To verify that properties (iii), (iv), (v), (vi) hold, we will exhibit $Bn^m$ paths in $G$. For $0 \leq j \leq n^m-1$, write $j=nd +r$ with $0\leq d \leq n^{m-1}-1$ and $0\leq r \leq n-1$; then for $0\leq b \leq B-1$, let
    \begin{equation}\label{eq:Pbj}
    P_{bj} = P_{bd}^L \circ (P_{bd}^L)^{\rev} \circ \linkedge(b,b+r+1) \circ P_{b+r+1,d}^R,
    \end{equation}
    where $\linkedge(b,b+r+1)$ is the unique shortest path (the straight line) in $\linkgadget^{\pl}$ connecting vertex $b$ in the last layer of $\GM$ to vertex $b+r+1$ in the first layer of $\GR$. We will show that in the interval $\interval(j,m)$, the path $P_{bj}$ as defined by~\eqref{eq:Pbj} is the shortest path from the input vertex $b$ in $G$. We are now set to show that properties (i) through (vi) hold for $\Phi$.
    
    \paragraph{Property (i):} Note that the planarization of the linking gadget $\linkgadget^{\pl}(B,n)$ adds at most $(B+n)^4$ new vertices. This means that the number of vertices in the planarized version of $G$ is at most
    \[
        \underset{\GL,\GM}{\underbrace{2(3^m-1)(B+(m-1)n)^4}} +
        \underset{\linkgadget^{\pl}(B,n)}{\underbrace{(B+n)^4}} +
        \underset{\GR}{\underbrace{(3^m-1)(B+n+(m-1)n)^4}} \leq (3^{m+1}-1)(B+mn)^4.
    \]
    Thus, property (i) holds. We now verify property (ii).
    
    \paragraph{Property (ii):}
    
    %Let $\Coeff(B,D,m)=\max_e\,\{|\num(a_{1e})|,|\num(a_{2e})|,|\num(b_{1e})|,|\num(b_{2e})|\}$, where $e$ ranges over all the edges of $G(B,D,m)$ and $\num$ stands for numerator.
    Using the induction hypothesis, we know that each edge $e$ in the graph $G(B',D',m-1)$ has the form $a_e+b_e\lambda$, where $a_e=a_{1e}+a_{2e}D$, $b_e=b_{1e}+b_{2e}D$. Also,
    \begin{align*}
        \max_e&\{|\num(a_{1e})|,|\num(a_{2e})|,|\num(b_{1e})|,|\num(b_{2e})|\}\leq (400\,NB')^{5(m-1)^2},\\ \max_e&\{|\den(a_{1e})|,|\den(a_{2e})|,|\den(b_{1e})|,|\den(b_{2e})|\}\leq 2^{m-1}n^2,
    \end{align*}
    where $e$ ranges over all the edges of $G(B',D',m-1)$, $\num$ stands for numerator, and $\den$ stands for denominator. Each edge of $G$ comes from either $\GL$, $\GM$, $\linkgadget^{\pl}(B,n)$ or $\GR$.
    
    First we consider edges coming from $\GL$ (we do not consider $\GM$ separately it has the same edge weights as $\GL$). Let $e$ be an edge of $G$ coming from $\GL$. Using the induction hypothesis,
    \begin{align*}
        a_e^L&=a_{1e}^L+a_{2e}^L\left(\frac{N}{2K_L}\left(D-\frac{K_R}{N}\right)\right). &\text{(before scaling)}\\
        b_e^L&=b_{1e}^L+b_{2e}^L\left(\frac{N}{2K_L}\left(D-\frac{K_R}{N}\right)\right). &\text{(before scaling)}
    \end{align*}
    However, once $\GL$ becomes a part of $G$, $a_e^L$ is scaled by $K_L$ and $b_e^L$ is scaled by $K_L/N$.
    \begin{align*}
        a_e&=K_La_{1e}^L+\frac{a_{2e}^LN}{2}\left(D-\frac{K_R}{N}\right) &\text{(after scaling)}\\
        &=400\,N^{m+4}B^2a_{1e}^L+\frac{a_{2e}^LN}{2}\left(D-\frac{20\,N^3B}{N}\right) &\text{(using~\eqref{eq:setKL})}\\
        &=\underset{a_{1e}}{\underbrace{\left(400\,N^{m+4}B^2a_{1e}^L-10\,a_{2e}^LN^3B\right)}}+\underset{a_{2e}}{\underbrace{\left(\frac{a_{2e}^LN}{2}\right)}}D.\\
        b_e&=\frac{K_Lb_{1e}^L}{N}+\frac{b_{2e}^L}{2}\left(D-\frac{K_R}{N}\right) &\text{(after scaling)}\\
        &=\frac{400\,N^{m+4}B^2b_{1e}^L}{N}+\frac{b_{2e}^L}{2}\left(D-\frac{20\,N^3B}{N}\right) &\text{(using~\eqref{eq:setKL})}\\
        &=\underset{b_{1e}}{\underbrace{\left(400\,N^{m+3}B^2b_{1e}^L-10\,b_{2e}^LN^2B\right)}}+\underset{b_{2e}}{\underbrace{\left(\frac{b_{2e}^L}{2}\right)}}D.
    \end{align*}
    Thus, $a_e$ and $b_e$ have the required form. If the denominators of $a_{1e}^L,a_{2e}^L,b_{1e}^L$ and $b_{2e}^L$ have absolute value at most $2^{m-1}n^2$, then the denominators of $a_{1e},a_{2e},b_{1e}$ and $b_{2e}$ have absolute value at most $2^mn^2$. Now we need to check for the numerators.
    \begin{align*}
        |\num(a_{1e})|&=|\num(400\,N^{m+4}B^2a_{1e}^L-10\,a_{2e}^LN^3B)|\\
        &\leq \left|400\,N^{m+4}B^2(400NB)^{5(m-1)^2}\right|+\left|10\,(400NB)^{5(m-1)^2}N^3B\right|\\
        &\leq \left|(200NB)^{10m-5}(400NB)^{5(m-1)^2}\right|+\left|(200NB)^{10m-5}(400NB)^{5(m-1)^2}\right|\\
        &\leq (400NB)^{5m^2}.
    \end{align*}
    We skip the proof for $a_{2e},b_{1e}$ and $b_{2e}$. Now we consider edges coming from $\linkgadget^{\pl}(B,n)$. Let $(b,b+r)\in E(\linkgadget(B,n))$.
    \begin{align*}
        w_{b,b+r} &= NDrb + \frac{K_R}{N}\left(\left(\frac{r(r+1)}{2}\right)N - r\lambda\right) &\text{(using~\eqref{eq:linkedges})}\\
        &= NDrb + \frac{r(r+1)K_R}{2} - \frac{rK_R}{N}\lambda\\
        &= \underset{a_e}{\underbrace{(NDrb + 10\,r(r+1)N^3B)}} + \underset{b_e}{\underbrace{(-20\,rN^2B)}}\lambda.
    \end{align*}
    Note that all these coefficients are integers. However, these are the edge weights from the \emph{non-planar} linking gadget. Once we planarize it, the edges in the \emph{planar} linking gadget can have denominators at most $n^2$ (see~\autoref{defacto}). As for the numerator,
    \begin{align*}
        |NDrb + 10\,r(r+1)N^3B|&\leq NDnB+10N^4B \qquad\quad(\text{since }0\leq b\leq B-1, 0\leq r\leq n, N=n^2)\\
        &\leq (400NB)^{5m^2}.
    \end{align*}
    Finally we consider edges coming from $\GR$. Let $e$ be an edge of $G$ coming from $\GR$. Using the induction hypothesis, we have the following.
    \begin{align*}
        a_e^R&=a_{1e}^R+a_{2e}^RD_R. &\text{(before scaling)}\\
        b_e^R&=b_{1e}^R+b_{2e}^RD_R. &\text{(before scaling)}\\
        a_e&=K_Ra_{1e}^R+K_Ra_{2e}^R &\text{(after scaling)}\\
        &=\underset{a_{1e}}{\underbrace{20\,N^3Ba_{1e}^R + 20\,N^3Ba_{2e}^R}}. &\text{(using~\eqref{eq:setKR})}\\
        b_e&=\frac{K_R}{N}b_{1e}^R+\frac{K_R}{N}b_{2e}^R &\text{(after scaling)}\\
        &=\underset{b_{1e}}{\underbrace{20\,N^2Ba_{1e}^R + 20\,N^2Ba_{2e}^R}}. &\text{(using~\eqref{eq:setKR})}
    \end{align*}
    Thus, $a_e=a_{1e}+0\cdot D$ and $b_e=b_{1e}+0\cdot D$ have the required form. Also, $a_{1e}$ and $b_{1e}$ are integers.
    \begin{align*}
        |\num(a_{1e})|&=|20\,N^3Ba_{1e}^R + 20\,N^3Ba_{2e}^R|\\
        &\leq |20\,N^3B(400NB)^{5(m-1)^2}|+|20\,N^3B(400NB)^{5(m-1)^2}|\\
        &\leq (400NB)^{5m^2}.
    \end{align*}
    We skip the proof for $b_{1e}$. This finishes the verification of property (ii).

    \paragraph{Properties (v), (vi):} Given our definition of $P_{bj}$~\eqref{eq:Pbj}, properties (v) and (vi) are straightforward to verify. We now verify property (iv).
    
    \paragraph{Property (iv):} In the subsequent calculations, we use the following notation. Paths of $G$ are composed of paths coming from $\GL$, $\GM$ and $\GR$; we use $\Cost_L$, $\Cost_M$ and $\Cost_R$ to denote the costs of those subpaths in their constituent graphs. For example, $\Cost_L(P_{bd}^L)(\lambda)$ denotes the cost of the path $P_{bd}^L$ as a function of $\lambda$ in the graph $\GL$. When $\GL$ is used as a component in $G$, this cost is scaled by a factor of $K_L$ and stretched by a factor of $N$. Thus, the cost of the $P_{bj}$ in the graph $G$ is given by
    \begin{align*}
        \Cost(P_{bj})(\lambda) &= K_L\Cost_L(P_{bd}^L)\left(\frac{\lambda}{N}\right) + K_L\Cost_M((P_{bd}^L)^{\rev})\left(\frac{\lambda}{N}\right) + w_{b,b+r+1} + K_R\Cost_R(P_{b+r+1,d}^R)\left(\frac{\lambda}{N}\right)\\
        &= 2K_L\Cost_L(P_{bd}^L)\left(\frac{\lambda}{N}\right) + w_{b,b+r+1} + 
        K_R\Cost_R(P_{b+r+1,d}^R)\left(\frac{\lambda}{N}\right)\\
        &=2K_L\left[\Cost_L(P_{0d}^L)\left(\frac{\lambda}{N}\right) + bD_L\alpha(d,m-1)\right]\\
        &\quad + ND(r+1)b +\frac{K_R}{N}\left[\frac{(r+1)(r+2)}{2} N - (r+1)\lambda \right] \\
        &\quad + K_R\left[\Cost_R(P_{0d}^R)\left(\frac{\lambda}{N}\right) + (b+r+1)D_R \alpha(d,m-1) \right].
    \end{align*}
    Substitute $b=0$ to get 
    \begin{align*}
        \Cost(P_{0j})(\lambda)=&\ 2K_L\Cost_L(P_{0d}^L)\left(\frac{\lambda}{N}\right)+\frac{K_R}{N}\left[\frac{(r+1)(r+2)}{2} N - (r+1)\lambda \right]\\
        &+ K_R\Cost_R(P_{0d}^R)\left(\frac{\lambda}{N}\right) + K_R(r+1)D_R \alpha(d,m-1).
    \end{align*}
    With this expression for $\Cost(P_{0j})(\lambda)$, we obtain
    \begin{align*}
        \Cost(P_{bj})(\lambda)&= \Cost(P_{0j})(\lambda) + b\left[2K_L D_L \alpha(d,m-1) + K_RD_R\alpha(d,m-1) + ND(r+1)\right]\\
        &= \Cost(P_{0j})(\lambda) + b\left[2K_L\ \frac{N}{2K_L}\left(D-\frac{K_R}{N}\right) \alpha(d,m-1) + K_R \alpha(d,m-1) + ND(r+1)\right]\\
        &= \Cost(P_{0j})(\lambda) + b\left[ND\alpha(d,m-1) - K_R \alpha(d,m-1) + K_R \alpha(d,m-1) + ND(r+1)\right]\\
        &= \Cost(P_{0j})(\lambda) + bD\left[N\alpha(d,m-1) + N(r+1)\right]\\
        &= \Cost(P_{0j})(\lambda) + bD \alpha(j,m).
    \end{align*}
    Thus, property (iv) also holds. All that remains is to verify property (iii).
    
    \paragraph{Property (iii):} To verify property (iii), we need to check that $P_{bj}$ as defined above is indeed the shortest path from input vertex $b$ to the last layer when $\lambda \in\interval(j,m)$, and any deviation from it attracts significant additional cost. We do this through two claims (\autoref{cl:LM} and \autoref{cl:LR}).
    
    In \autoref{cl:LM}, we track paths from an input vertex as they travel through $\GL$ and $\GM$. In \autoref{cl:LR}, we analyze how such paths continue through $\linkgadget^{\pl}$ and $\GR$. Fix a $j$ ($0\leq j \leq n^m-1$, say $j=nd + r$, for $0\leq d \leq n^{m-1}-1$ and $0\leq r\leq n-1$) and a $\lambda\in\interval(j,m)$. Note that since $\lambda \in \interval(j,m)$, we have $\lambda/N \in \interval(d,m-1)=[\alpha(d,m-1)+1, \alpha(d,m-1)+N-1]$.
    \begin{Claim} \label{cl:LM}
    Let $Q$ be a path from the input vertex $b$ to the last layer of $\GL\circ\GM$ $($note that $P^L_{bd} \circ (P^L_{bd})^{\rev}$ is one such path$)$. If $Q \neq P^L_{bd} \circ (P^L_{bd})^{\rev}$, then
    \[
        \Cost(Q)(\lambda) - \Cost(P^L_{bd} \circ (P^L_{bd})^{\rev})(\lambda) \geq K_L/2.
    \]
    \end{Claim}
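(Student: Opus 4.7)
The plan is to decompose $Q$ at the interface between $\GL$ and $\GM$, apply the induction hypothesis $\Phi(B, D_L, m-1)$ separately to the two halves, and use the vertex-disjointness of the base paths (property~(v)) to force a strict deviation in at least one half. Write $Q = Q_L \circ Q_M$, where $Q_L$ goes from input vertex $b$ of $\GL$ to some vertex $v$ in its last layer, and $Q_M$ goes through $\GM$ from $v$ to some vertex $b'$ in the last layer of $\GM$. Because $\GM$ is $\GL$ mirrored with reversed edges, $Q_M$ corresponds to a path $Q_M^{\rev}$ in $\GL$ from input vertex $b'$ ending at $v$, carrying the same unscaled cost. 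After accounting for the scaling and stretching of $\GL, \GM$ inside $G$,
\[\Cost(Q)(\lambda) \;=\; K_L\bigl[\Cost_L(Q_L) + \Cost_L(Q_M^{\rev})\bigr](\lambda/N) \quad\text{and}\quad \Cost\bigl(P_{bd}^L\circ(P_{bd}^L)^{\rev}\bigr)(\lambda) \;=\; 2K_L\,\Cost_L(P_{bd}^L)(\lambda/N).\]

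Since $\lambda \in \interval(j,m)$ with $j = nd + r$, the containment $\interval(j,m) \subseteq N\cdot\interval(d,m-1)$ from~\autoref{sec:inddefint} yields $\lambda/N \in \interval(d,m-1)$, so the induction hypothesis is in force. Property~(iii) then gives $\Cost_L(Q_L)(\lambda/N) \geq \Cost_L(P_{bd}^L)(\lambda/N)$ and $\Cost_L(Q_M^{\rev})(\lambda/N) \geq \Cost_L(P_{b'd}^L)(\lambda/N)$, with an extra $+1$ whenever the corresponding equality fails. The crucial combinatorial observation is that, under $Q \neq P_{bd}^L\circ(P_{bd}^L)^{\rev}$, at least one of these inequalities must be strict. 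The case $b' = b$ is immediate. For $b' \neq b$, property~(v) says the paths $P_{bd}^L$ and $P_{b'd}^L$ are vertex-disjoint and hence terminate at distinct vertices of the last layer of $\GL$; but $Q_L$ and $Q_M^{\rev}$ both terminate at the common vertex $v$, so at most one of them can coincide with its optimum.

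Combining the two inequalities with property~(iv), which gives $\Cost_L(P_{b'd}^L) - \Cost_L(P_{bd}^L) = (b'-b)\,D_L\,\alpha(d,m-1)$, I would obtain
\[\Cost(Q)(\lambda) - \Cost\bigl(P_{bd}^L\circ(P_{bd}^L)^{\rev}\bigr)(\lambda) \;\geq\; K_L\bigl[\,1 + (b'-b)\,D_L\,\alpha(d,m-1)\,\bigr].\]
It therefore suffices to show $|(b'-b)\,D_L\,\alpha(d,m-1)| \leq 1/2$. Expanding $D_L$ via~\eqref{eq:setDL}, using $|D|\leq 1$, $\alpha(d,m-1)\leq N^m$, $|b'-b| \leq B-1$, and substituting the values of $K_L, K_R$ from~\eqref{eq:setKL} and~\eqref{eq:setKR}, the correction works out to order $1/N$ (which is tiny for $n\geq 4$), well below $1/2$.

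The main obstacle is the $b' \neq b$ case: one has to recognise that vertex-disjointness (property~(v)) is exactly what manufactures the $+1$ gap in one of the two halves, and then verify that the rather opaque constants in~\eqref{eq:setKL}--\eqref{eq:setDR} are tuned so that the $D_L$-dependent correction is strictly dominated by the $K_L$-gain. This tuning is essentially why those constants are defined the way they are.
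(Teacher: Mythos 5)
Your proof is correct and follows essentially the same route as the paper's: decompose $Q$ at the $\GL$/$\GM$ interface, compare each half with the corresponding inductive shortest path via property~(iii) at the stretched parameter $\lambda/N$, invoke vertex-disjointness (property~(v)) to guarantee a strict deviation in at least one half when the terminal input vertices $b,b'$ differ, and absorb the $D_L$-dependent cross term from property~(iv) using the magnitude of $K_L$. The paper bounds the cross term by $K_L/10$ where you bound it by $K_L/2$; both suffice, and the underlying algebra is the same.
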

    
    \begin{proof}[Proof of claim] We omit the argument $\lambda$ in this discussion. Let $Q = Q^L \circ Q^M$, where $Q^L$ is the subpath of $Q$ in $\GL$ and $Q^M$ is the subpath of $Q$ in $\GM$. Suppose $Q^M$ terminates at vertex $c$ in the last layer of $\GM$. Then,
    \begin{align*}
    \Cost(Q) - \Cost(P^L_{bd} \circ (P^L_{bd})^{\rev})
    &= \left(\Cost(Q^L) + \Cost(Q^M)\right) - \left(\Cost(P^L_{bd}) + \Cost ((P^L_{bd})^{\rev})\right)\\
    &= \left(\Cost(Q^L) - \Cost(P^L_{bd})\right) +
    \left(\Cost(Q^M) - \Cost(P^L_{cd})\right) +
    \left(\Cost((P^L_{cd})^{\rev}) - \Cost ((P^L_{bd})^{\rev})\right)\\
    &\geq  \overset{\mathrm{Term\ I}}{\overbrace{\Cost(Q^L) - \Cost(P^L_{bd})}} + \overset{\mathrm{Term\ II}}{\overbrace{\Cost(Q^M) - \Cost(P^L_{cd})}}
    - \overset{\mathrm{Term\ III}}{\overbrace{K_LBD_L\alpha(d,m-1)}}.
    \end{align*}
    To obtain $\mathrm{Term\ III}$, we use part (ii) of the induction hypothesis for $\GL$, whose edge costs we evaluated at $\lambda/N$ and scaled by $K_L$; recall that $\lambda/N \in \interval(d,m-1)$.
    If $Q \neq P^L_{bd} \circ (P^L_{bd})^{\rev}$, then one of the following is true.
    \begin{enumerate}
        \item[$($a$)$] $Q^L \neq P^L_{bd}$;
        \item[$($b$)$] $c=b$ and $Q^M \neq (P^L_{bd})^{\rev}$;
        \item[$($c$)$] $c\neq b$ and $Q^M \neq (P^L_{cd})^{\rev}$ (here we use the fact that the paths $P^L_{bd}$ and $P^L_{cd}$ are vertex-disjoint if $c\neq b$).
    \end{enumerate}
    From property (iii) of the induction hypothesis, the costs of a shortest and a non-shortest path from the same input vertex differ by at least one in $\GL$ and $\GM$; after scaling all the edge weights of $\GL$ and $\GM$ by a factor of $K_L$, this difference becomes at least $K_L$. Also note that both $\mathrm{Term\ I}$ and $\mathrm{Term\ II}$ are non-negative. Thus we can conclude the following.
    
    If (a) is true, then $\mathrm{Term\ I} \geq K_L$. If (b) or (c) is true, then $\mathrm{Term\ II} \geq K_L$. Recall that $\alpha(d,m-1)\leq N^m$. For the setting of $K_L$ according to~\eqref{eq:setKL}, we have $\left|\mathrm{Term\ III}\right|=\left|K_LBD_L\alpha(d,m-1)\right|\ll K_L/10$. This completes the proof of~\autoref{cl:LM}.
    \end{proof}
    Since $K_L$ is positive, \autoref{cl:LM} implies that $P^L_{bd} \circ (P^L_{bd})^{\rev}$ is the shortest path from the input vertex $b$ to the last layer of $\GL \circ \GM$. Now, we need to argue that the overall shortest path must be an extension of this. The next claim shows that the shortest path from an input vertex $b$ of $\linkgadget(B,n)$ (note that is the non-planar version of the linking gadget) in the graph $\linkgadget(B,n)\circ\GR$ must follow the route prescribed by~\eqref{eq:Pbj}.
    \begin{Claim} \label{cl:LR}
    Let $\lambda \in \interval(j,m)$, where $j=nd+r$ $(0\leq d \leq n^{m-1}-1$ and $0\leq r \leq n-1)$. Let $P$ be a path from the input vertex $b$ of $\linkgadget(B,n)$ to the last layer of $\GR$ $($note that $\linkedge(b,b+r+1) \circ P^R_{b+r+1,d}$ is one such path$)$. If $P \neq \linkedge(b,b+r+1) \circ P^R_{b+r+1,d}$, then 
    $$ \Cost(P)(\lambda) - \Cost(\linkedge(b,b+r+1) \circ P^R_{b+r+1,d})(\lambda) \geq 1.$$
    \end{Claim}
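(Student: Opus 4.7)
The plan is to parametrize every such path $P$ by the vertex $b+s$ at which it exits the linking gadget---by~\autoref{lm:planarize}(iii) we must have $s \in \{0,1,\ldots,n\}$---and then to exhibit $\linkedge(b,b+r+1) \circ P^R_{b+r+1,d}$ as the unique cheapest choice with a comfortable margin.  Concretely, I would write $P = \pi \circ P^R$, where $\pi$ is the portion of $P$ inside $\linkgadget^{\pl}(B,n)$ ending at vertex $b+s$ of the first layer of $\GR$, and $P^R$ is the remaining portion inside $\GR$.  Two reductions follow immediately.  First, by~\autoref{lm:planarize}, $\Cost(\pi) \geq w_{b,b+s}$ with a gap of at least $1$ unless $\pi = \linkedge(b,b+s)$.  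Second, applying property (iii) of the induction hypothesis for $\GR$ at parameter $\lambda/N \in \interval(d,m-1)$ and rescaling by $K_R$, we have $\Cost(P^R) \geq K_R\,\Cost_R(P^R_{b+s,d})(\lambda/N)$ with a gap of at least $K_R \geq 1$ unless $P^R = P^R_{b+s,d}$.  Hence the remaining task is to compare the $n+1$ canonical paths $\linkedge(b,b+s) \circ P^R_{b+s,d}$ for $s \in \{0,1,\ldots,n\}$ and to show that the choice $s = r+1$ strictly dominates.

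For this, I would invoke property (iv) of $\GR$ (with $D_R = 1$) to write $\Cost_R(P^R_{b+s,d})(\lambda/N) = \Cost_R(P^R_{0d})(\lambda/N) + (b+s)\,\alpha(d,m-1)$.  After dropping the $s$-independent constant, the cost of the $s$-th canonical path reduces to
\[
g(s) \;=\; NDsb \;+\; \frac{K_R\,s(s+1)}{2} \;-\; \frac{K_R}{N}\,s\,\lambda \;+\; K_R\,s\,\alpha(d,m-1),
\]
a strictly convex quadratic in $s$ with leading coefficient $K_R/2$.  A direct computation gives the real minimizer
\[
s^{\ast} \;=\; \frac{\lambda}{N} - \alpha(d,m-1) - \frac{1}{2} - \frac{NDb}{K_R}.
\]
For $\lambda \in \interval(j,m)$, the quantity $\lambda/N - \alpha(d,m-1)$ lies in $[(r+1)+1/N,\ (r+1)+(N-1)/N]$, and $|NDb/K_R| \leq 1/(20N^2)$ is negligible, so $s^{\ast} \in (r+\tfrac12,\,r+\tfrac32)$ with a margin of at least $1/N - 1/(20N^2)$ from both endpoints.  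Therefore $r+1$ is the unique integer in $\{0,\ldots,n\}$ closest to $s^{\ast}$, and expanding $g(s) - g(r+1) = (K_R/2)\bigl[(s-s^{\ast})^2 - (r+1-s^{\ast})^2\bigr]$ for $s = r+1+t$ with $t \in \mathbb{Z}\setminus\{0\}$ gives a gap of at least $(K_R/2)\cdot(2/N - 1/(10N^2)) \geq K_R/N = 20N^2 B$, which comfortably exceeds $1$.

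The main obstacle is this last step: certifying that $s^{\ast}$ stays bounded away from the half-integers $r+\tfrac12$ and $r+\tfrac32$ by at least $\Omega(1/N)$ for \emph{every} $\lambda$ in the interior interval $\interval(j,m)$.  This is precisely why $\interval(j,m)$ was trimmed to $[\alpha(j,m)+1,\,\alpha(j,m)+N-1]$ rather than the full length-$N$ block, why the parent level's parameter was stretched by exactly $N$, and why $K_R = 20\,N^3 B$ was chosen large enough that the $NDb/K_R$ correction is $O(1/N^2)$.  Once this margin is secured, the quadratic gap $K_R/N$ dwarfs the required gap of $1$, and combining with the two earlier deviation cases ($\pi \neq \linkedge$ or $P^R \neq P^R_{b+s,d}$) completes the proof.
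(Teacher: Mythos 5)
Your proposal is correct and is essentially the paper's argument: you perform the same two reductions (to $\pi=\linkedge(b,b+s)$ and $P^R=P^R_{b+s,d}$) and then exploit the convex quadratic dependence of the cost on the exit index $s$, which is exactly what the paper does when it bounds the consecutive differences $Z_k-Z_{k-1}=NDb+\tfrac{K_R}{N}(\alpha(k-1,m)-\lambda)$ and shows they flip sign between $k=r+1$ and $k=r+2$; your ``$s^*\in(r+\tfrac12,r+\tfrac32)$'' is precisely the statement that that sign flip happens there. The only nit is the final inequality $(K_R/2)(2/N-1/(10N^2))\geq K_R/N$, which should read $\geq K_R/N - K_R/(20N^2)$, but this does not affect the conclusion since that quantity still vastly exceeds $1$.
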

    \begin{proof}[Proof of claim] Fix the input vertex $b$. The induction hypothesis guarantees that $P^R_{xd}$ is the unique shortest path from the input vertex $x$ of $\GR$ to the last layer of $\GR$. We may assume that $P$ travels travels along the shortest path in $\GR$, that is, it has the form
    \[ P_k = \linkedge(b,b+k) \circ P^R_{b+k,d},\]
    for some $k \in \{0,1,\ldots,n\}$. Let $Z_k = \Cost(P_k)(\lambda)$. We will show that for $\lambda \in \interval(j,m)$, we have
    \begin{equation}
        Z_0 \gg Z_1 \gg \cdots \gg Z_r \gg Z_{r+1} \ll Z_{r+2} \ll \cdots \ll Z_{n}, \label{eq:downandup}
    \end{equation}
    where we use $\gg$ and $\ll$ to suggest that there is a large gaps between the quantities. Our proof strategy is to compare successive values of $Z_k$. We will show that $Z_k - Z_{k-1}$ is negative whenever $k\leq r+1$ and positive otherwise. Indeed, for $k=1,2,\ldots,n$, we have
    \begin{align*}
    Z_k - Z_{k-1} &= w_{b,b+k} - w_{b,b+k-1} + \Cost(P^R_{b+k,d})(\lambda) - \Cost(P^R_{b+k-1,d})(\lambda),\\
    \intertext{where}
    w_{b,b+k} &= ND kb  + \frac{K_R}{N}\left(\left(\frac{k(k+1)}{2}\right) N - k \lambda\right);\\
    w_{b,b+k-1} &= ND (k-1)b + \frac{K_R}{N}\left(\left(\frac{(k-1)k}{2}\right)N - (k-1)\lambda\right);\\
    \Cost(P^R_{b+k,d})(\lambda) &= K_R\left[\Cost_R(P^R_{0,d})(\lambda/N) + (b+k)D_R\alpha(d,m-1)\right];\\
    \Cost(P^R_{b+k-1,d})(\lambda)&= K_R\left[\Cost_R(P^R_{0,d})(\lambda/N) + (b+k-1)D_R\alpha(d,m-1)\right].
    \end{align*}
    Thus,
    \begin{align}
    Z_k-Z_{k-1}&= NDb + \frac{K_R}{N}(kN - \lambda) + K_R D_R \alpha(d,m-1)\\
    &= NDb +\frac{K_R}{N}(kN + N\alpha(d,m-1) -\lambda) & \text{(recall $D_R=1$)}\\
    &= NDb + \frac{K_R}{N}(\alpha(k-1,m) -\lambda). \label{eq:dominate}
    \end{align}
    Since $\lambda \in \interval(r,m) = [\alpha(r,m)+1, \alpha(r,m)+N-1]$, we have
    \begin{align*}
        \alpha(k-1,m) -\lambda
        & \in [\alpha(k-1,m) - \alpha(r,m) - N +1, \alpha(k-1,m) - \alpha(r,m) -1] \\
        & = [(k-(r+1))N - N + 1, (k-(r+1))N -1]\,.
    \end{align*}
    Thus, for $k=1,2,\ldots, r+1$, we have $\alpha(k-1,m) -\lambda \leq -1$ and
    for $k=r+2, \ldots, n$, we have $\alpha(k-1,m) -\lambda \geq +1$. Returning to~\eqref{eq:dominate} with this, we obtain
    \begin{align}
        Z_k-Z_{k-1} & \leq NDb - \frac{K_R}{N} &&\text{for $k=1,\ldots,r+1$, and} \label{eq:smallk}\\
        Z_k-Z_{k-1} & \geq NDb +\frac{K_R}{N} &&\text{for $k=r+2,\ldots,n$}. \label{eq:bigk}
    \end{align}
    Since $K_R \gg N^2b$ and $-1\leq D\leq +1$, the RHS of~\eqref{eq:smallk} is negative and the RHS of~\eqref{eq:bigk} is positive. This confirms~\eqref{eq:downandup} and establishes~\autoref{cl:LR}.
    \end{proof}
    We are now in a position to establish property (iii) and complete the induction. By~\autoref{cl:LM}, if the shortest path from $b$ deviates from $P^L_{bd} \circ (P^R_{bd})^{\rev}$ in $\GL\circ\GM$, then the increase in cost is at least $K_L/2$. We will show that the difference in cost between every two paths originating at an input vertex of $\linkgadget(B,n)$ and terminating in the last layer of $\GR$ is much smaller than $K_L/2$; this forces the shortest path from the input vertex $b$ in $G$ 
    when restricted to $G_L \circ G_M$ to be precisely $P^L_{bd} \circ (P^R_{bd})^{\rev}$. Let $P_1$ and $P_2$ be paths  originating at an input vertex of $\linkgadget(B,n)$ and terminating in the last layer of $\GR$. Then,
    \begin{align*} \label{eq:lastequations}
    |\Cost(P_1)(\lambda) - \Cost(P_2)(\lambda)|
    &\leq \left|NDnB\right|+\left|\frac{K_R}{N}\left(n^2N+n\lambda\right)\right|+\left|K_RDB\alpha(d,m-1)\right|\\
    &\leq \left|N^2DB\right|+\left|\frac{K_R}{N}\left(N^2+n(\alpha(j,m)+N)\right)\right|+\left|K_RDB\alpha(d,m-1)\right|\\
    &\leq \left|N^2DB\right|+\left|\frac{K_R}{N}\left(N^2+n(N^{m+1}+N)\right)\right|+\left|K_RDBN^m\right|\\
    &\leq \left|N^2DB\right|+\left|K_RN^{m+1}\right|+\left|K_RDBN^m\right|\\
    &\ll K_L/10.
    \end{align*}
    %So, compensation received by taking a shorter path in $\linkgadget^{\pl} \circ \GR$, if any, is at most $K_L/10$, which is insufficient to reimburse the loss of at least $K_L/2$ incurred earlier. 
    Thus, the shortest path in $G$ must follow the path $P^L_{bd} \circ (P^R_{bd})^{\rev}$ until it arrives at the first layer of $\linkgadget(B,n)$; for if it does not, then its cost is at least $K_L/2 - K_L/10 \gg 1$ more than the cost of $P_{bd}$. \autoref{cl:LR} now confirms that it must continue by taking the edge $\linkedge(b,b+r+1)$ and $P^R_{b+r+1,d}$; any deviation from this path will incur an increase in cost of at least $1$. Therefore, the shortest path in $G$ in the interval $\interval(j,m)$ is $P_{bj}$ as promised~\eqref{eq:Pbj}. This completes the proof of property (iii), hence completing the proof of~\autoref{lm:mainlemma}.
    \end{proof}
    
    \section{Upper bound for polynomially varying edge weights} \label{sec:complpoly}
    
    In this section, we consider graphs with edge weights of the form $$w_e(\lambda)=a_d\lambda^d+a_{d-1}\lambda^{d-1}+\cdots+a_1\lambda+a_0.$$ The main result of this section is the following. %Let $\compl_d(n,\beta(n))$ be the maximum possible number of break points in $\Cost(\lambda)$ for an $n$-vertex graph when the edge weights are polynomials of degree at most $d$ in a parameter $\lambda$, where the bit lengths of the coefficients in the weights of the edges are bounded by $\beta$.
    
    \thmcomplpoly*
    
    %The inverse Ackermann function is defined below.
    
    \begin{Definition} \label{def:ackermann}
    The Ackermann function, first used by Wilhelm Ackermann~\cite{wackermann}, is defined as follows.
    \begin{align*}
        A(m, n) =
        \begin{cases}
            n+1 & \mbox{if } m = 0 \\
            A(m-1, 1) & \mbox{if } m > 0 \mbox{ and } n = 0 \\
            A(m-1, A(m, n-1)) & \mbox{if } m > 0 \mbox{ and } n > 0.
        \end{cases}
    \end{align*}
    The inverse Ackermann function, denoted by $\alpha(n)$, is the minimum $r$ such that $A(r,r)\geq n$.
    \end{Definition}
    
    For our proof, we require the definition of Davenport-Schinzel sequences, which will help us bound the number of alternations between pairs of paths.
    \begin{Definition}
        Given a finite set of symbols $\cX$, a sequence $U = (u_1,u_2,u_3,\ldots,u_n)$ is a Davenport-Schinzel sequence of order $s$ if it satisfies the following properties:
        \begin{itemize}
            \item Each $u_i$ $($for $i\in[n])$ is a symbol coming from $\cX$.
            \item No two consecutive symbols in the sequence $U$ are the same.
            \item If $x_1\in\cX$ and $x_2\in\cX$ are two distinct symbols, then $U$ does not contain a subsequence $(\ldots,x_1,\ldots,x_2,\ldots,x_1,\ldots,x_2,\ldots)$ consisting of $s+2$ alternations between $x_1$ and $x_2$.
        \end{itemize}
        Then we say that $U$ is a $DS(n,s)$-sequence.
    \end{Definition}
    
    \begin{proof} [Proof of~\autoref{thm:complpoly}]
    We adapt to our setting an argument due to Gusfield~\cite[Page 100]{carstensenthesis}. Fix an integer $d>1$ and consider only those graphs whose edge weights are polynomials of degree at most $d$. Let $f(n,\ell)$ be the maximum length of a sequence of shortest paths\footnote{Note that this sequence might have alternations. That is, a shortest path can occur more than once in this sequence (however, two consecutive shortest paths must be distinct).}, when the paths are restricted to have at most $\ell$ edges. Fix a sequence $\sigma$ of paths. Let $p$ be a path in $\sigma$. We may fix a vertex $v$ in $p$ such that $v$ is the \emph{middle} vertex of the path $p$. That is, $p$ has at most $\ceil{\ell/2}$ edges from $s$ to $v$, and at most $\ceil{\ell/2}$ edges from $v$ to $t$. Then, the number of such paths in $\sigma$ that pass through $v$ is at most $2f(n,\ceil{\ell/2})$. Accounting for all $v$, we obtain that there are at most $2nf(n,\ceil{\ell/2})$ paths in the sequence $\sigma$. Now, $\sigma$ might have alternations. Thus, if $N$ is the number of \emph{distinct} paths in $\sigma$, then $N \leq 2n f(n,\ceil{\ell/2})$. Since the costs of these paths are polynomials of degree at most $d$ in $\lambda$, two paths can alternate at most $d+1$ times (the curves of two distinct degree $d$ polynomials cannot intersect each other in more than $d$ points). That is, $\sigma$ is a Davenport-Schinzel sequence of order $d$ with an alphabet of size $N$. Bounds known for Davenport-Schinzel sequences (see Matou\v{s}ek~\cite[Page 173]{matousek}) imply that the maximum length of such a sequence of shortest paths is at most $N 2^{\alpha(N)^{d}}$ (for all large $N$).
    
    Since $N \leq n^n$ (a coarse upper bound on the total number of paths in an $n$-vertex graph), we have $\alpha(N)\ll\alpha(n)+2$. Thus,
    \begin{align*}
    f(n,\ell)&\leq N \cdot 2^{\alpha(N)^{d}}\\
    &\leq 2n f(n,\ceil{\ell/2}) \cdot 2^{(\alpha(n)+2)^d}\\
    &\leq (2n)^{\log \ell} \cdot f(n,1) \cdot \left(2^{(\alpha(n)+2)^d}\right)^{\log \ell}.
    \end{align*}
    Our theorem follows from this after substituting $\ell=n$ and $f(n,1)\leq 1$.
    \end{proof}
    
    \paragraph{Remarks:} (i) Note that even though the edge weights are polynomials of degree $d$ in the parameter $\lambda$, the upper bound is not significantly higher than that for $d=1$. (ii) Since our proof counts each time a path reappears as separate occurrence, the same upper bound holds even if we count the number of different paths \emph{with multiplicity}. (iii) Our proof works for any family of functions $\mathcal{F}$ which satisfy the following conditions (the family of polynomials clearly satisfies these conditions).
    \begin{enumerate}
        \item $\mathcal{F}$ is closed under addition.
        \item For all $f_1,f_2\in\mathcal{F}$, the sign of $f_1-f_2$ can change only a small number of times.
    \end{enumerate}
    
    \section{Parametric shortest paths with three parameters} \label{sec:complmult}
    
    In this section, we consider graphs with edge weights of the form $$w_e(\lambda_1,\lambda_2,\lambda_3)=a_e\lambda_1 + b_e \lambda_2 + c_e\lambda_3.$$  The main result of this section is the following.
    \thmcomplmult*
    Before we jump into our proof, we note that if the edge weights have the form $a_e\lambda_1+b_e\lambda_2$, then it reduces to the single parameter case. Consider the paths restricted to $\lambda_2=-1$ and $\lambda_2=1$. It is easy to see that a path that forms a lower envelope without this restriction continues to be a part of the lower envelope for either $\lambda_2=-1$ or for $\lambda_2=1$ (or for both). Therefore, we can match Gusfield's bound (up to a factor $2$), which is for edge weights of the form $a_e\lambda+b_e$. Thus, $\complmult{2}(n)=n^{\log n+O(1)}$. Let us now commence our proof for the three parameter case.
    %Let $\complmult{k}(n)$ be the maximum possible number of differently-weighted\footnote{Two $s$-$t$ paths $P_1$ and $P_2$ are differently-weighted if $w(P_1)(\vlambda)\neq w(P_2)(\vlambda)$, where $\vlambda=\begin{pmatrix} \lambda_1 & \lambda_2 & \ldots & \lambda_d \end{pmatrix}$.} shortest paths in an $n$-vertex graph when the edge weights vary linearly in $d$ variables $\lambda_1, \lambda_2, \ldots, \lambda_d$, and the bit lengths of the coefficients in the weights of the edges are bounded by $\beta$.
    
    %The arguments in this section depart substantially from the arguments used by Gusfield and adapted to univariate polynomials in~\autoref{sec:gusfield}. In the multivariate setting, it is not clear how one can impose a meaningful linear order on the set $\mathbb{R}^3$, and invoke combinatorial notions such as alternation-free sequences. Instead, we approach the problem geometrically.
    
    Note that the cost of an $s$-$t$ path $P$ has the form $\Cost(P)(\lambda_1,\lambda_2,\lambda_3)=a_P\lambda_1+b_P\lambda_2+c_P\lambda_3$. We may view these paths as vectors in $\mathbb{R}^3$. That is, $\Cost(P)(\vlambda)=\va_P \cdot \vlambda$, where $\vlambda=(\lambda_1,\lambda_2,\lambda_3)\in\mathbb{R}^3$ and $\va_P=(a_P,b_P,c_P)\in\mathbb{R}^3$. We focus on the coefficient vectors $\va_P$.
    %Let us now elucidate the main idea behind our proof. Each edge $e$ is associated with $d+1$ real numbers $a_1,a_2,\ldots,a_d$ which denote the coefficients in $w_e$. Thus, $e$ may be visualized as a vector in $d+1$ dimensions. That is, $$e = \begin{pmatrix} a_1 & a_2 & \ldots & a_d \end{pmatrix}$$ is a vector in $\mathbb{R}^{d+1}$. Then the cost of a path is the sum of the vectors of the edges that lie on that path. Thus, each path $P$ is also a vector in $\mathbb{R}^{d+1}$. 
    Let $\conv(G)\subseteq \mathbb{R}^3$ be the convex hull of the 
    set $\{\va_P: \text{$P$ is an $s$-$t$ path in $G$}\}$. The crucial observation is that for each $\vlambda \in \mathbb{R}^3$, at least one of the vertices (vertex here means $0$-dimensional face) of $\conv(G)$ corresponds to a minimum cost path. Thus, the number of vertices of $\conv(G)$ is an upper bound on $\complmult{3}(G)$.
    
    To upper bound the number of vertices of $\conv(G)$, we proceed by induction roughly as in earlier proofs, but working in the language of polytopes. As before, fix $\ell$ and let $\conv(G,\ell)$ be the convex hull of the set $\{\va_P: \text{$P$ is an $s$-$t$ path in $G$ with $\ell$ edges}\}$. We view every $s$-$t$ path $P$ with $\ell$ edges as consisting of an $s$-$v$ path $P_{sv}$ and a $v$-$t$ path $P_{vt}$, each with about $\ell/2$ edges. Then the coefficient vector of the path $P$ is the sum of the corresponding vectors of $P_{sv}$ and $P_{vt}$, that is,
    $\va_{P}=\va_{P_{sv}} + \va_{P_{vt}}$. This naturally leads us to consider the two polytopes $\conv(G_{sv},\ell/2)$ and $\conv(G_{vt},\ell/2)$ generated by coefficients of the form $\va_{P_{sv}}$ and $\va_{P_{vt}}$, respectively. Thus, the convex hull of the coefficient vectors of paths bisected at $v$ is the Minkowski sum $\conv(G_{sv},\ell/2) + \conv(G_{vt},\ell/2)$. 
    
    We show below how the number of vertices in such a Minkowski sum can be bounded. The final bound is obtained by running over all choices of $v$. Before presenting the formal proof, we show some facts about convex polytopes that we will need for our proof (we follow notation used by Michel Goemans in his lecture notes~\cite{goemans}).
    
    % The first polytope is the convex hull of all the $s$-$v$ paths and the second polytope is the convex hull of all the $v$-$t$ paths. Since this is turning out to be a problem in convex geometry, we set up a few preliminaries before moving on to our proof.
    
    \subsection{Convex polytopes and Minkowski sums}
    
    Let $\cA \subseteq \mathbb{R}^k$ be a non-empty convex polytope. The dimension of $\cA$, denoted by $\dim(\cA)$, is the maximum number of affinely independent points in $\cA$ minus $1$. A face of $\cA$ is a set of the form $\{\vx \in \cA: \vlambda \cdot \vx = \beta\}$, where $\vlambda \cdot \vx \geq \beta$ for all $\vx \in \cA$. Let $\cF({\cA})$ be the set of faces of $\cA$, and $\cF^{\,r}(\cA)$ be the set of faces of $\cA$ of dimension $r$, where $0\leq r\leq k$ (note that a face is also a polytope). Let $\compl_r(\cA) = |\cF_{\cA}^{\,r}|$ and $\compl(\cA)=|\cF_{\cA}|$.
    
    The affine space spanned by the face $F$ has the form $x_0 + \tS_F$, where $x_0 \in F$ and $\tS_F$ is a subspace of dimension $\dim(F)$. We use $\Pi_F$ to denote the orthogonal projection onto the space $\tS_F^{\perp}$, which is the subspace of $\mathbb{R}^d$ orthogonal to the subspace $\tS_F$.
    
    We say that a face $F$ is non-trivial if $0\neq\dim(F)\neq\dim(\cA)$. Then, every non-trivial face of $\cA$ is the set of solutions for a linear program of the form $\langle\,\min \vlambda \cdot \vx$; subject to $\vx \in \cA\,\rangle$ with $\vlambda \neq 0$; we use $\sols(\cA;\vlambda)$ to refer to this set of solutions.
    
    For polytopes $\cA$ and $\cB$, let $\cA+\cB$ denote their Minkowski sum, that is, $\cA+\cB= \{\vx+\vy: \vx\in \cA \text{ and } \vy \in \cB\}$.
    \begin{Proposition}[{see~\cite[Proposition 2.1]{fukuda}}]\label{prop:facedecomposition}
        Every face $F$ of $\cA+\cB$ can be written uniquely as $F=F_{\cA}+F_{\cB}$, where $F_{\cA}  \in \cF(\cA)$ and $F_{\cB} \in \cF({\cB})$. In particular, if 
        \begin{equation} \label{eq:facelp}
            F=\sols(\cA+\cB;\vlambda),
        \end{equation} 
        then the unique decomposition is $F=\sols(\cA;\vlambda)+\sols(\cB;\vlambda)$; in particular, $\sols(\cA;\vlambda)$ and $\sols(\cB;\vlambda)$ do not depend on the choice of $\vlambda$ in~\eqref{eq:facelp}.
    \end{Proposition}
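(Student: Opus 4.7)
The plan is to reduce both existence and uniqueness to a single linear-programming fact. Every non-empty face of a convex polytope arises as the argmin of some linear functional (taking $\vlambda=0$ when the face is the whole polytope), so I first fix $\vlambda \in \mathbb{R}^k$ with $F = \sols(\cA+\cB;\vlambda)$ and set $m_\cA = \min_{\va\in\cA}\vlambda\cdot\va$, $m_\cB = \min_{\vec{b}\in\cB}\vlambda\cdot\vec{b}$. Because the $\cA$- and $\cB$-parts of any $\vx\in\cA+\cB$ can be varied independently, $\min_{\vx\in\cA+\cB}\vlambda\cdot\vx = m_\cA+m_\cB$.

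For existence and the LP-characterization clause, I take any $\vx\in F$ and write $\vx=\va+\vec{b}$ with $\va\in\cA, \vec{b}\in\cB$. The identity $\vlambda\cdot\va+\vlambda\cdot\vec{b} = m_\cA+m_\cB$ combined with $\vlambda\cdot\va\geq m_\cA$ and $\vlambda\cdot\vec{b}\geq m_\cB$ forces both inequalities to be equalities, placing $\va\in\sols(\cA;\vlambda)$ and $\vec{b}\in\sols(\cB;\vlambda)$; the reverse inclusion is immediate from the same computation. Thus $F = \sols(\cA;\vlambda)+\sols(\cB;\vlambda)$, which simultaneously produces a decomposition and certifies the LP form stated in the proposition.

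For uniqueness, suppose $F = F_\cA+F_\cB$ is any decomposition with $F_\cA\in\cF(\cA)$ and $F_\cB\in\cF(\cB)$. Applying the argument above to an arbitrary pair $\va\in F_\cA,\vec{b}\in F_\cB$ (whose sum lies in $F$) already yields $F_\cA \subseteq \sols(\cA;\vlambda)$ and $F_\cB \subseteq \sols(\cB;\vlambda)$. To upgrade these inclusions to equalities, I would specialize $\vlambda$ to lie in the relative interior of the normal cone of $F$ in $\cA+\cB$, and invoke the standard fact that the normal fan of a Minkowski sum is the common refinement of the normal fans of the summands. This places $\vlambda$ simultaneously in the relative interiors of the normal cones of $F_\cA$ in $\cA$ and of $F_\cB$ in $\cB$, which means $F_\cA$ is precisely the face of $\cA$ exposed by $\vlambda$, namely $\sols(\cA;\vlambda)$; the analogous statement holds for $F_\cB$.

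The main obstacle is the uniqueness step. The LP argument cleanly singles out one distinguished decomposition, but ruling out alternative decompositions requires passing through the dual (normal-fan) picture: the bijection between faces of $\cA+\cB$ and compatible pairs of faces of $\cA,\cB$ rests on the common-refinement structure of normal fans, and carrying this out rigorously uses a generic-perturbation argument to avoid lower-dimensional degeneracies. Once uniqueness is in place, the final clause of the proposition---that $\sols(\cA;\vlambda)$ and $\sols(\cB;\vlambda)$ do not depend on the particular $\vlambda$ realizing $F$ via \eqref{eq:facelp}---follows immediately, since both are determined by $F$ alone.
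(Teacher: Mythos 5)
Note first that the paper does not prove this proposition: it is stated with a citation to Fukuda's Proposition~2.1, so there is no in-paper proof against which to compare your argument; the evaluation below is on the merits of the argument alone.

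Your existence argument and the derivation of the LP form are correct and complete. The identity $\min_{\vx\in\cA+\cB}\vlambda\cdot\vx = m_\cA + m_\cB$ together with the two inequalities $\vlambda\cdot\va\geq m_\cA$, $\vlambda\cdot\vec{b}\geq m_\cB$ immediately yields $F=\sols(\cA;\vlambda)+\sols(\cB;\vlambda)$, and the same computation gives the containments $F_\cA\subseteq\sols(\cA;\vlambda)$ and $F_\cB\subseteq\sols(\cB;\vlambda)$ for any competing decomposition.

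For the uniqueness step, however, I would push back on your diagnosis that one must pass through the dual normal-fan picture and a genericity argument. There is a short primal argument that closes the gap. Abbreviate $\cA'=\sols(\cA;\vlambda)$ and $\cB'=\sols(\cB;\vlambda)$, so $F_\cA\subseteq\cA'$, $F_\cB\subseteq\cB'$ and $F_\cA+F_\cB=\cA'+\cB'$. Since $F_\cA$ is a face of $\cA$ contained in $\cA'$, it is a face of $\cA'$. Suppose toward a contradiction that $F_\cA\subsetneq\cA'$. Then there is a linear functional $\mu$ with $F_\cA=\{x\in\cA':\mu\cdot x=\min_{y\in\cA'}\mu\cdot y\}$ and $\min_{y\in\cA'}\mu\cdot y<\max_{y\in\cA'}\mu\cdot y$. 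But then
\begin{equation*}
\max_{z\in F_\cA+F_\cB}\mu\cdot z
=\max_{x\in F_\cA}\mu\cdot x+\max_{y\in F_\cB}\mu\cdot y
=\min_{x\in\cA'}\mu\cdot x+\max_{y\in F_\cB}\mu\cdot y
<\max_{x\in\cA'}\mu\cdot x+\max_{y\in\cB'}\mu\cdot y
=\max_{z\in\cA'+\cB'}\mu\cdot z,
\end{equation*}
contradicting $F_\cA+F_\cB=\cA'+\cB'$. Hence $F_\cA=\cA'$, and symmetrically $F_\cB=\cB'$, so every decomposition coincides with the one your LP argument produces; the independence-of-$\vlambda$ clause then follows because $F_\cA$ and $F_\cB$ are determined by $F$ alone. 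Your normal-fan route is also correct (and is essentially Fukuda's conceptual framing), but it carries considerably more machinery than is needed: the separating-functional argument above completes uniqueness in one line.
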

    
    For a face $F_{\cA}$ of $\cA$, let $\cF({\cA,\cB|F_\cA})= \{F \in \cF({\cA+\cB}): F = F_{\cA} + F_{\cB} \text{ for some }F_{\cB} \in \cF({\cB})\}$, and let $\compl(\cA,\cB|F_{\cA}) = |\cF({\cA,\cB|F_\cA})|$; that is, $\compl(\cA,\cB|F_{\cA})$ is the number of faces of $\cA + \cB$ whose unique decomposition (as in~\autoref{prop:facedecomposition}) involves $F_{\cA}$.
    
    \begin{Lemma} \label{lm:savingface}
        Suppose $F_{\cA}$ is non-empty. Then,  $\compl(\cA,\cB|F_{\cA}) \leq \compl(\Pi_{F_{\cA}}(\cB))$.
    \end{Lemma}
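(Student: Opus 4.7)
The plan is to construct an injective map
$\Phi\colon\cF(\cA,\cB|F_{\cA})\to\cF(\Pi_{F_{\cA}}(\cB))$;
the bound on cardinalities then gives the lemma immediately.
By~\autoref{prop:facedecomposition}, every $F\in\cF(\cA,\cB|F_{\cA})$ factors uniquely as $F=F_{\cA}+F_{\cB}$ for some $F_{\cB}\in\cF(\cB)$, so the natural choice is $\Phi(F)=\Pi_{F_{\cA}}(F_{\cB})$.

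The first step is to verify that $\Pi_{F_{\cA}}(F_{\cB})$ is actually a face of $\Pi_{F_{\cA}}(\cB)$. I would fix any $\vlambda$ with $F=\sols(\cA+\cB;\vlambda)$; by~\autoref{prop:facedecomposition} we then have $F_{\cA}=\sols(\cA;\vlambda)$ and $F_{\cB}=\sols(\cB;\vlambda)$. Since $\vlambda\cdot\vx$ is constant on $F_{\cA}$, whose affine hull is a translate of $\tS_{F_{\cA}}$, the vector $\vlambda$ must lie in $\tS_{F_{\cA}}^{\perp}$. For any such $\vlambda$, the identity $\vlambda\cdot\vy=\vlambda\cdot\Pi_{F_{\cA}}(\vy)$ holds for every $\vy$, so minimizing $\vlambda\cdot\vy$ over $\cB$ is equivalent to minimizing it over $\Pi_{F_{\cA}}(\cB)$. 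This yields $\Pi_{F_{\cA}}(F_{\cB})=\sols(\Pi_{F_{\cA}}(\cB);\vlambda)$, which is a face of $\Pi_{F_{\cA}}(\cB)$.

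The second step is injectivity of $\Phi$. The same orthogonality identity gives
\[
F_{\cB}\;=\;\bigl\{\vy\in\cB:\vlambda\cdot\vy \text{ is minimum over } \cB\bigr\}\;=\;\cB\cap\Pi_{F_{\cA}}^{-1}\bigl(\Pi_{F_{\cA}}(F_{\cB})\bigr),
\]
so $F_{\cB}$ is uniquely recoverable from its image $\Pi_{F_{\cA}}(F_{\cB})$. Combined with the uniqueness of the decomposition $F=F_{\cA}+F_{\cB}$, this shows that $\Phi$ is injective, and the bound $\compl(\cA,\cB|F_{\cA})\leq\compl(\Pi_{F_{\cA}}(\cB))$ follows.

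The main obstacle I anticipate is handling the degenerate cases where $\dim(F_{\cA})\in\{0,\dim(\cA)\}$. When $F_{\cA}$ is a single point, $\tS_{F_{\cA}}^{\perp}=\mathbb{R}^k$, the projection becomes the identity, and the bound reduces to $\compl(\cA,\cB|F_{\cA})\leq\compl(\cB)$, which is immediate. When $F_{\cA}=\cA$ (so $\cA$ is necessarily forced to be constant under $\vlambda$), uniqueness in~\autoref{prop:facedecomposition} forces $F=\cA+\cB$ to be the only face with $\cA$-component equal to $F_{\cA}$, giving $\compl(\cA,\cB|F_{\cA})=1$ and the inequality trivially. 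Once these boundary cases are dispensed with, the orthogonality argument above runs without further complication.
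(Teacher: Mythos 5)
Your proof takes essentially the same route as the paper: the map $\Phi(F_{\cA}+F_{\cB})=\Pi_{F_{\cA}}(F_{\cB})$, the observation that $\vlambda\in\tS_{F_{\cA}}^{\perp}$ implies $\vlambda\cdot\vy=\vlambda\cdot\Pi_{F_{\cA}}(\vy)$, the identification $\Pi_{F_{\cA}}(F_{\cB})=\sols(\Pi_{F_{\cA}}(\cB);\vlambda)$, and injectivity via recoverability of $F_{\cB}$. Your recovery formula $F_{\cB}=\cB\cap\Pi_{F_{\cA}}^{-1}(\Pi_{F_{\cA}}(F_{\cB}))$ is a slightly cleaner packaging of the paper's two-sided containment argument, but the underlying reasoning is identical.

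One caveat: the final ``degenerate cases'' paragraph is both unnecessary and, in one place, wrong. The main orthogonality argument already runs uniformly for all non-empty $F_{\cA}$, including $F_{\cA}=\cA$ (take $\vlambda=0$ if needed). Moreover, the claim that uniqueness in \autoref{prop:facedecomposition} forces $\compl(\cA,\cB|F_{\cA})=1$ when $F_{\cA}=\cA$ is false: uniqueness says each face $F$ has a unique decomposition, not that each $F_{\cA}$ appears in only one decomposition. For instance, with $\cA=[0,1]\times\{0\}$ and $\cB=\{0\}\times[0,1]$ in $\mathbb{R}^2$, the faces of $\cA+\cB=[0,1]^2$ whose $\cA$-component is all of $\cA$ are the bottom edge, the top edge, and the full square, so $\compl(\cA,\cB|\cA)=3$ (which, consistent with the lemma, is bounded by $\compl(\Pi_{\cA}(\cB))=3$). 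Delete that paragraph and the proof stands.
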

    \begin{proof}
    We will exhibit a one-to-one (injective) map from $\cF({\cA,\cB|F_\cA})$ to $\cF(\Pi_{F_{\cA}}(\cB))$. To simplify notation for this proof, we will just write $\Pi$ instead of $\Pi_{F_{\cA}}$. Fix a face $F=F_{\cA} + F_{\cB}$ of the polytope $\cA+\cB$. Let $\vlambda$ be such that $F=\sols(\cA+\cB; \vlambda)$, $F_{\cA} = \sols(\cA; \vlambda)$ and $F_{\cB}= \sols(\cB; \vlambda)$; note that $\vlambda \in \tS_{F_{\cA}}^{\perp}$.

    \begin{Claim} $\Pi(F_{\cB}) = \Pi(\sols(\cB;\vlambda)) = \sols(\Pi(\cB);\vlambda)$; in particular, $\Pi(F_{\cA})$ is a face of $\Pi(\cB)$.
    \end{Claim}
    \begin{proof} The first equality follows directly by our choice of $\vlambda$ above. Let $\beta = \vlambda \cdot \vx$ for $\vx \in F_{\cB}$. To justify the second equality, first observe that for all $\vx$, we have
    \begin{equation}
        \vlambda \cdot \vx = \vlambda \cdot \Pi(\vx) \qquad\qquad (\text{since } \vlambda \in \tS_{F_{\cA}}^{\perp}). \label{eq:projection}
    \end{equation}
    Then for all $\vx \in \cB$, we have $\vlambda \cdot \Pi(\vx) = \vlambda \cdot x \geq \beta$. Also, if $\vx \in \sols(\cA;\vlambda)$, then $\vlambda \cdot \Pi(\vx) = \beta$. Hence,  $\Pi(\sols(\cA;\vlambda)) \subseteq \sols(\Pi(\cB);\vlambda)$. Next suppose $\vy \in \sols(\Pi(\cB);\vlambda)$, then $\vy = \Pi(\vz)$ for some $\vz \in \cB$. By~\eqref{eq:projection}, we have $\vlambda \cdot \vz = \vlambda \cdot \vy = \beta$. Thus, $\vz \in \sols(\cB;\vlambda)$ and $\vy \in \Pi(S(\cB;\vlambda))$. Hence, $\sols(\Pi(\cB);\vlambda)\subseteq \Pi(\sols(\cA;\vlambda))$.
    \end{proof}
    
    Consider the map $\eta: \cF(\cA,\cB|F_{\cA}) \rightarrow \cF(\Pi(\cB))$, which maps the face $F = F_{\cA} + F_{\cB}$ to the face $\Pi(F_{\cB})$. By the above claim, this map is well-defined. 
    \begin{Claim}
        $\eta$ is one-to-one.
    \end{Claim}
    \begin{proof} Suppose $\eta(F_{\cA}+F_{\cB})=\eta(F_{\cA}+F_{\cB}')$. Then, by the definition of $\eta$, we have $\Pi(F_{\cB}) = \Pi(F_{\cB}')$. Suppose $F_{\cA} + F_{\cB} = \sols(\cA+\cB; \vlambda)$ and $\vlambda \cdot \vx = \beta$ for all $\vx \in F_{\cB}$. In particular, $\forall\ \vz\in\Pi(F_{\cB})$, we have $\vlambda\cdot\vz=\beta$. Then, for all $\vy \in F_{\cB}'$, we have
    \begin{align*}
        \vlambda \cdot  \vy &= \vlambda \cdot \Pi(\vy) && \text{(by~\eqref{eq:projection})}\\
        &= \beta. && \text{(since
        $\Pi(\vy)\in\Pi(F_{\cB}')=\Pi(F_{\cB})$)}
    \end{align*}
    Thus, $\vy \in \sols(\cB; \vlambda) = F_{\cB}$. Hence, $F_{\cB}' \subseteq F_{\cB}$. Similarly, $F_{\cB} \subseteq F_{\cB}'$.
    \end{proof}
    This completes the proof of~\autoref{lm:savingface}.
    \end{proof}
    
    \subsection{Upper bound on the number of faces}
    
    We now formally carry out the inductive argument outlined earlier. The reason our proof works in three dimensions but not in higher dimensions is as follows. In the three-dimensional world, the number of faces of various dimensions (vertices, edges and faces) in a polytope are within small constant factors of each other, but there is no such bound in higher dimensions.
    
    \begin{proof} [Proof of~\autoref{thm:complmult}]
    %Note that if a graph $G$ is the union of two graphs $G_1$ and $G_2$, then
    % \begin{equation} \label{eq:convconv}
    %     \compl_0(\conv(G,\ell))\leq\compl_0(\conv(G_1,\ell))+\compl_0(\conv(G_2,\ell)).
    % \end{equation}
    Let $\paths$ be a convex polytope in three dimensions. Since each vertex in $\paths$ is connected to at least three other vertices by edges,
    \begin{equation} \label{eq:vert}
        3\compl_0(\paths)\leq 2\compl_1(\paths).
    \end{equation}
    Similarly, since each face in the polytope is adjacent to at least three other faces by edges,
    \begin{equation} \label{eq:face}
        3\compl_2(\paths)\leq 2\compl_1(\paths).
    \end{equation}
    Combining Euler's formula, $\compl_0(\paths) - \compl_1(\paths) + \compl_2(\paths) = 2$, with~\eqref{eq:vert} and~\eqref{eq:face}, we obtain
    \begin{equation} \label{eq:euler}
        2\compl_0(\paths) \leq \compl_1(\paths)+\compl_2(\paths) \leq 5\compl_0(\paths).
    \end{equation}
    
    We now set up some notation in order to state the recurrence that serves as a key element to our proof. Let $r\in\{0,1,2,3\}$ denote the dimension of a face. Let $\complmult{3}_r(n,\ell) = \max_{\,G}\,\compl_r(\conv(G,\ell))$ and $\complmult{3}(n,\ell) = \max_{\,G}\,\compl(\conv(G,\ell))$, where $G$ varies over all graphs on $n$ vertices. Let $\complmult{3}_{1,2}(n,\ell)=\complmult{3}_1(n,\ell)+\complmult{3}_2(n,\ell)$. Similarly, for a convex polytope $\paths$, let $\compl_{1,2}(\paths)=\compl_1(\paths)+\compl_2(\paths)$.
    
    We make the following claim.
    
    \begin{Claim} $$\complmult{3}_{1,2}(n,\ell) \leq 5n\cdot\complmult{2}(n)\cdot\complmult{3}_{1,2}(n,\ell/2).$$
    \end{Claim}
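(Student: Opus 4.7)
My plan is to reduce the recurrence to a per-vertex statement about a Minkowski sum and then sum over the choice of middle vertex. Concretely, every $s$-$t$ path with $\ell$ edges bisects at some vertex $v$ into an $s$-$v$ subpath and a $v$-$t$ subpath, each with at most $\ell/2$ edges. Writing $\cA_v = \conv(G_{sv},\ell/2)$ and $\cB_v = \conv(G_{vt},\ell/2)$, the coefficient vector of any such path lies in $\mathcal{P}_v := \cA_v + \cB_v$, and hence $\conv(G,\ell) = \conv\bigl(\bigcup_v \mathcal{P}_v\bigr)$. Since every vertex of $\conv(G,\ell)$ is a vertex of some $\mathcal{P}_v$, we obtain $\compl_0(\conv(G,\ell)) \leq n \cdot \max_v \compl_0(\mathcal{P}_v)$. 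Combined with the 3D Euler inequality~\eqref{eq:euler} in the form $\compl_{1,2}(\paths) \leq 5\compl_0(\paths)$, this reduces the claim to showing $\compl_0(\mathcal{P}_v) \leq \complmult{2}(n) \cdot \complmult{3}_{1,2}(n,\ell/2)$ for each $v$.

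To bound the number of vertices of $\mathcal{P}_v$ I invoke~\autoref{prop:facedecomposition} together with~\autoref{lm:savingface}. Using the dimension identity $\dim(F_\cA + F_\cB) = \dim F_\cA + \dim \Pi_{F_\cA}(F_\cB)$ (which follows from $\dim\Pi_{F_\cA}(F_\cB) = \dim F_\cB - \dim(\tS_{F_\cA}\cap \tS_{F_\cB})$), I enumerate faces of $\mathcal{P}_v$ of dimension $1$ or $2$ by ranging over $F_\cA \in \cF(\cA_v)$. When $F_\cA$ is an edge of $\cA_v$ the subspace $\tS_{F_\cA}^{\perp}$ is two-dimensional, so $\Pi_{F_\cA}(\cB_v)$ is a 2D polygon; crucially, this polygon is the convex hull of projected coefficient vectors of $v$-$t$ paths, which corresponds to a 2-parameter parametric shortest-path problem on the $n$-vertex subgraph $G_{vt}$, so by the remark on $\complmult{2}$ it has at most $\complmult{2}(n)$ vertices. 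Each such edge of $\cA_v$ therefore contributes at most $2\complmult{2}(n)$ faces of $\mathcal{P}_v$ of dimension $1$ or $2$. When $F_\cA$ is a 2-face of $\cA_v$ the projection is a 1D interval and the contribution is $O(1)$.

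The main obstacle is the remaining case $\dim F_\cA = 0$, where $\Pi_{F_\cA}$ is the identity and~\autoref{lm:savingface} only bounds the contribution by $\compl(\cB_v)$, which is essentially the quantity we are trying to control. I resolve this by a symmetric application of the Lemma: every face of $\mathcal{P}_v$ whose $F_\cA$-component is a vertex has $\dim F_\cB \geq 1$, and I charge such faces to their $F_\cB$-component. For each edge $F_\cB$ of $\cB_v$, the projection $\Pi_{F_\cB}(\cA_v)$ is again a 2D polygon whose vertex count is bounded by $\complmult{2}(n)$ (by the same 2-parameter argument applied to $G_{sv}$), so at most $\complmult{2}(n)$ vertex $F_\cA$'s can pair with it; analogous bounds hold when $F_\cB$ is a 2-face. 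Summing all $\cA$-charged and $\cB$-charged contributions and invoking the induction hypothesis $\compl_r(\cA_v),\compl_r(\cB_v) \leq \complmult{3}_{1,2}(n,\ell/2)$ for $r\in\{1,2\}$ yields $\compl_{1,2}(\mathcal{P}_v) = O\bigl(\complmult{2}(n) \cdot \complmult{3}_{1,2}(n,\ell/2)\bigr)$, and one last use of~\eqref{eq:euler} converts this into the required bound on $\compl_0(\mathcal{P}_v)$, finishing the claim. The whole argument is intrinsically three-dimensional because it relies on~\eqref{eq:euler} to pass freely between vertex counts and higher-dimensional face counts.
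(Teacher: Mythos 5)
Your proof follows the same route as the paper: bisect each $s$-$t$ path at a middle vertex $v$, express $\conv(G,\ell)$ as the convex hull of the union of the Minkowski sums $\cA_{sv}+\cB_{vt}$, use the 3D Euler inequality \eqref{eq:euler} to pass between $\compl_0$ and $\compl_{1,2}$, apply \autoref{prop:facedecomposition} together with \autoref{lm:savingface} to charge each nontrivial face of $\cA_{sv}+\cB_{vt}$ to a nontrivial face of $\cA_{sv}$ or of $\cB_{vt}$, and bound the resulting 1D/2D projection by $\complmult{2}(n)$ while invoking the induction hypothesis on $\cA_{sv}$ and $\cB_{vt}$. The only differences are cosmetic: you bound $\compl_0(\cA_{sv}+\cB_{vt})$ per vertex and multiply by $n$ rather than summing over $v$ directly, and you spell out the charging by dimension and track the polygon face counts with an explicit $O(\cdot)$ rather than the paper's cleaner (if slightly loose) constants — but the argument is the paper's argument.
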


    \begin{proof}
    Fix an $n$ and an $\ell$ such that $1\leq\ell\leq n$. Let $G$ be a graph on $n$ vertices that maximizes $\complmult{3}_{1,2}(n,\ell)$. Let $\cA_{sv}=\conv(G_{sv},\ell/2)$ and $\cB_{vt}=\conv(G_{vt},\ell/2)$. Then, $$\conv(G,\ell) = \conv\left(\bigcup_{\substack{v:\,v \text{ bisects}\\ \text{an $s$-$t$ path}}} (\cA_{sv} + \cB_{vt})\right),$$ where the $+$ stands for the Minkowski sum. By the union bound,
    % \[ \complmult{3}_{1,2}(n,\ell)=
    % \compl_{1,2}(\conv(G,\ell))\leq 25\sum_{v\in V(G)}\compl_{1,2}(\conv(G_{sv}, \ell/2) + \conv(G_{vt},\ell/2)). \]
    \begin{align*}
    \compl_0(\conv(G,\ell)) &\leq \sum_{\substack{v:\,v \text{ bisects}\\ \text{an $s$-$t$ path}}}\compl_0(\cA_{sv} + \cB_{vt})\\
    \compl_{1,2}(\conv(G,\ell)) \leq 5\,\compl_0(\conv(G,\ell))&\leq \frac{5}{2}\sum_{\substack{v:\,v \text{ bisects}\\ \text{an $s$-$t$ path}}}\compl_{1,2}(\cA_{sv} + \cB_{vt}). \qquad\qquad \text{(using~\eqref{eq:euler})}
    \end{align*}
    From~\autoref{prop:facedecomposition} and~\autoref{lm:savingface},
    \begin{align*}
        \compl_{1,2}(\conv(G,\ell))&\leq \frac{5}{2} \sum_{\substack{v:\,v \text{ bisects}\\ \text{an $s$-$t$ path}}}\compl_{1,2}(\cA_{sv}+\cB_{vt})\\
        &\leq \frac{5}{2}\sum_{\substack{v:\,v \text{ bisects}\\ \text{an $s$-$t$ path}}}\left(\sum_{\substack{F\,\in\,\cF(\cA_{sv}), \\ \dim(F)\in\{1,2\}}} \compl_{1,2}(\cA_{sv},\cB_{vt}|F) + \sum_{\substack{F\,\in\,\cF(\cB_{vt}), \\ \dim(F)\in\{1,2\}}} \compl_{1,2}(\cB_{vt},\cA_{sv}|F)\right)\\
        &\leq \frac{5}{2}\sum_{\substack{v:\,v \text{ bisects}\\ \text{an $s$-$t$ path}}}\left(\sum_{\substack{F\,\in\,\cF(\cA_{sv}), \\ \dim(F)\in\{1,2\}}} \compl_{1,2}(\Pi_F(\cB_{vt})) + \sum_{\substack{F\,\in\,\cF(\cB_{vt}), \\ \dim(F)\in\{1,2\}}} \compl_{1,2}(\Pi_F(\cA_{sv}))\right).
    \end{align*}
    Recall that $G$ maximizes $\complmult{3}_{1,2}(n,\ell)$. Since both $\Pi_F(\cB_{vt})$ and $\Pi_F(\cA_{sv})$ are projections in one or two dimensions,
    \begin{align*}
    \complmult{3}_{1,2}(n,\ell)&\leq \frac{5}{2}\left(\sum_{\substack{v:\,v \text{ bisects}\\ \text{an $s$-$t$ path}}}\,\overbrace{\complmult{3}_{1,2}(n,\ell/2)}^{\text{Faces from } \cA_{sv}}\cdot\overbrace{\complmult{2}(n)}^{\substack{\text{Faces from}\\ \Pi_F(\cB_{vt})}} + \sum_{\substack{v:\,v \text{ bisects}\\ \text{an $s$-$t$ path}}}\,\overbrace{\complmult{3}_{1,2}(n,\ell/2)}^{\text{Faces from } \cB_{vt}}\cdot\overbrace{\complmult{2}(n)}^{\substack{\text{Faces from}\\ \Pi_F(\cA_{sv})}}\right)\\
    &\leq 5\,\complmult{2}(n)\sum_{\substack{v:\,v \text{ bisects}\\ \text{an $s$-$t$ path}}}\complmult{3}_{1,2}(n,\ell/2).
    \end{align*}
    Since there are at most $n$ choices for the bisecting vertex $v$, summing over all $n$ vertices of the graph establishes our claim.
    \end{proof}
    % Substituting $\ell=n$ and $\compl_d^{\geq 1}(n,1)\leq 1$, we get
    % \begin{equation} \label{eq:euler}
    %     \compl_d^{\geq 1}(n) = \compl_d^{\geq 1}(n,n) \leq \left(2n\cdot\complmult_{d-1}(n)\right)^{\log n}.
    % \end{equation}
    % % Substituting for $\complmult_{d-1}(n)$ from the induction hypothesis, we get the following recurrence.
    % % \begin{align*}
    % %     \xi_d^{\geq 1}(n,\ell)&\leq 2n\cdot\xi_d^{\geq 1}(n,\ell/2)\cdot (d-1)\cdot (d-2)!\,n^{O((\log n)^{d-2})}\\
    % %     &\leq (d-1)!\cdot n^{O((\log n)^{d-2})}\cdot \xi_d^{\geq 1}(n,\ell/2).
    % % \end{align*}
    Using this recurrence, we are now ready to complete our proof. We have $\complmult{2}(n)\leq n^{\log n+O(1)}$ (as in Gusfield's bound). Thus,
    \begin{align*}
        \frac{2}{3}\complmult{3}(n,\ell)\leq\complmult{3}_{1,2}(n,\ell)&\leq \left(5n\cdot n^{\log n+O(1)}\right)\cdot\complmult{3}_{1,2}(n,\ell/2)\\
        &\leq \left(5n\cdot n^{\log n+O(1)}\right)^{\log \ell}\\
        &\leq n^{\log n\log \ell+O(\log \ell)}\\
        \complmult{3}(n)=\complmult{3}(n,n)&\leq n^{(\log n)^2+O(\log n)}.
    \end{align*}
    % \begin{align*}
    %     1&=\sum_{0\leq i\leq d}(-1)^i\compl_d^i(n)\\
    %     1&=\compl_d^0(n)+\sum_{1\leq i\leq d}(-1)^i\compl_d^i(n)\\
    %     %\compl_d^0(n)&=1+\sum_{1\leq i\leq d}(-1)^{i-1}\compl_d^i(n)\\
    %     \compl_d^0(n)&\leq 1+\compl_d^{\geq 1}(n)\\
    %     \complmult_d(n)&\leq 1+2\compl_d^{\geq 1}(n).
    % \end{align*}
    % Substituting $\complmult_{d-1}(n)\leq n^{O((\log n)^{d-2})}$ from the induction hypothesis, and the value of $\compl_d^{\geq 1}(n)$ from~\eqref{eq:euler}, we obtain
    % \begin{align*}
    % \complmult_d(n)&\leq 1+2\left(2n\cdot\complmult_{d-1}(n)\right)^{\log n}\\
    % &\leq 1+2\left(2n\cdot n^{O((\log n)^{d-2})}\right)^{\log n}\\
    % &\leq \left(c_1n\cdot n^{c_2(\log n)^{d-2}}\right)^{\log n},
    % \end{align*}
    % where $c_1$ and $c_2$ are constants. Thus, $\complmult_d(n)\leq n^{O((\log n)^{d-1})}$.
    This completes the proof of~\autoref{thm:complmult}.
    \end{proof}
    
    \section{Alternation-free sequences of paths in graphs}
    
    In this section, we show~\autoref{thm:altfree}.
    
    \thmaltfree*
    
    We construct non-planar and planar graphs with lengthy alternation-free sequences of paths. Our graphs are inspired by earlier examples of non-planar graphs with alternation-free sequences of paths~\cite{carstensenthesis, mulmuleyshah}. For our construction, we introduce a concept of alternation-free sequences of \emph{words} (in the case of planar graphs, these words will be binary strings), where each word corresponds to a path. It so turns out that these words, when arranged in a standard lexicographic order, correspond to an alternation-free sequence of paths. In~\autoref{sec:pathstowords}, we will formally build this connection between paths and words. In~\autoref{sec:words} and~\autoref{sec:binarywords}, we will use this connection to show~\autoref{thm:altfree}.
    \begin{Corollary} [Corollaries of~\autoref{thm:altfree}]
        $($i$)\,\ f(n,n)\geq n^{\log n}$, $($ii$)\,\ f^{\pl}(n,n^2-n) \geq n^{\log n}$.
    \end{Corollary}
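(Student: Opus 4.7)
The corollary is an immediate specialization of~\autoref{thm:altfree} obtained by choosing the recursion depth $k$ to be as large as possible while keeping the number of layers within the stated budget. The plan is simply to substitute $k=\log n$ (or $\lfloor \log n \rfloor$ in the general case) into the two bounds of~\autoref{thm:altfree} and rewrite the resulting expressions.

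For part (i), I would take $k = \log n$ in $f(n, 2^k) \geq n^k$, which gives $f(n, n) = f(n, 2^{\log n}) \geq n^{\log n}$. For part (ii), the same substitution into $f^{\pl}(n, (n-1)2^k) \geq n^k$ yields $f^{\pl}(n, (n-1) n) = f^{\pl}(n, n^2 - n) \geq n^{\log n}$.

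The only routine subtlety is when $n$ is not a power of $2$: in that case I would use $k = \lfloor \log n \rfloor$ and invoke the evident monotonicity of $f$ and $f^{\pl}$ in the layer parameter $m$ (adding more layers of $n$ vertices only enlarges the set of allowed sequences, since paths through earlier layers can be padded trivially). This recovers the claimed bound up to replacing $n^{\log n}$ by $n^{\lfloor \log n \rfloor} = n^{\log n - O(1)}$, which is absorbed into the $\Omega$ notation used throughout the paper.

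I do not expect any genuine obstacle here: once~\autoref{thm:altfree} is in hand, the corollary is purely a matter of choosing $k$ so that $2^k \approx n$ and $(n-1)2^k \approx n^2 - n$, and then reading off the conclusion. All of the actual combinatorial content, including the word-based construction flagged in~\autoref{sec:pathstowords},~\autoref{sec:words} and~\autoref{sec:binarywords}, has already been done in the proof of~\autoref{thm:altfree}.
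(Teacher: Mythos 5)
Your proposal is correct and is exactly the intended derivation: the corollary follows by setting $k=\log n$ in both parts of~\autoref{thm:altfree} (the paper states the corollary without further proof, as the substitution $2^k = n$ is immediate). Your remark about non-powers of two and the slight slack from using $\lfloor \log n \rfloor$ is a reasonable precaution, though in the paper's conventions this corollary is read for $n$ a power of two (or with $\log$ understood as $\lfloor\log\rfloor$), so no further argument is needed.
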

    Viewing the construction behind the proof of~\autoref{thm:altfree} this from the treewidth lens, we have the following connection between pathwidth and alternation-free sequences in certain graphs.
    \begin{Corollary} [Corollary of~\autoref{thm:altfree}] \label{con:end2}
        For every $n,\kappa\geq 4$, there is an a graph on $n\kappa$ vertices of pathwidth $\kappa$ for which there exists an alternation-free sequence of length $n^{\Omega(\log\kappa)}$.
    \end{Corollary}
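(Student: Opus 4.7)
The plan is to instantiate \autoref{thm:altfree}(ii) with two free parameters---a layer width $\kappa' \leq \kappa$ and a depth $k$---optimized against the vertex budget $n\kappa$. Concretely, set $\kappa' := \min(\kappa,\lceil (n\kappa)^{1/4}\rceil)$ and $k := \lfloor \log_2(n\kappa/(\kappa')^2)\rfloor$; then \autoref{thm:altfree}(ii) furnishes a planar subgraph $H$ of the layered graph $G[\kappa', (\kappa'-1)2^k]$ admitting an alternation-free sequence of length $(\kappa')^k$. The graph $H$ has at most $\kappa'(\kappa'-1)2^k \leq n\kappa$ vertices, and because it is a subgraph of a layered graph with $\kappa'$ vertices per layer and edges only between consecutive layers, a standard sliding-window path decomposition yields $\mathrm{pw}(H) \leq \kappa' \leq \kappa$. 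To hit the corollary's requirement of exactly $n\kappa$ vertices and pathwidth exactly $\kappa$, I would attach fresh vertices disjoint from $s$ and $t$, arranged so that $\kappa+1$ of them form a clique; this preserves all $s$-$t$ paths of $H$ (hence the alternation-free sequence) while inflating the pathwidth to $\kappa$.

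The main technical step is the length bound $(\kappa')^k \geq n^{\Omega(\log\kappa)}$. Writing $a=\log n$, $b=\log\kappa$, $x=\log\kappa'$, the bound reduces to $x(a+b-2x)\geq \Omega(ab)$. When $n\leq\kappa^3$ the unconstrained optimum $x=(a+b)/4$ satisfies $x\leq b$ and is therefore admissible, and AM-GM on $a+b\geq 2\sqrt{ab}$ yields $(a+b)^2/8 \geq ab/2$. When $n>\kappa^3$ the optimum is at the boundary $x=b$ (i.e.\ $\kappa'=\kappa$), which gives $b(a-b) \geq (2/3)ab$ since $a>3b$ in this regime. In both cases the length is at least $n^{(1/2)\log\kappa}$, which is the desired $n^{\Omega(\log\kappa)}$.

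The main obstacle is the regime where $\kappa$ is comparable to $n$: the na\"ive choice $\kappa'=\kappa$ forces $k\approx\log(n/\kappa)$, which degenerates to $0$ as $\kappa\to n$, yielding a trivial alternation-free sequence. Shrinking the effective layer width to $(n\kappa)^{1/4}$ trades layer width for recursion depth, and the AM-GM step above is precisely what glues the small-$\kappa$ and large-$\kappa$ regimes into a single $n^{\Omega(\log\kappa)}$ bound.
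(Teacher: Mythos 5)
Your arithmetic is correct and the two-regime AM-GM argument does close, but there are two things to flag.

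First, the pathwidth bound is off. A sliding-window path decomposition of a layered graph must put two consecutive layers in each bag to cover the cross-layer edges, so it gives $\mathrm{pw}(H)\le 2\kappa'-1$, not $\kappa'$. In the regime $n>\kappa^3$, where you set $\kappa'=\kappa$, this bound can exceed $\kappa$; attaching a disjoint $(\kappa+1)$-clique can only \emph{raise} the pathwidth, so it cannot repair this. You need $\kappa'\le\lfloor(\kappa+1)/2\rfloor$ throughout, which only costs a constant in the exponent, but as written the argument does not establish pathwidth $\kappa$.

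Second, and more structurally: you invoke Theorem~\ref{thm:altfree}(ii), the planar version, which pays $\kappa'(\kappa'-1)2^k$ vertices for a sequence of length $(\kappa')^k$. That extra factor of $\kappa'$ is precisely what forces the case split at $(n\kappa)^{1/4}$ and the AM-GM glue. The corollary asserts nothing about planarity, so part (i) suffices: $G[\kappa',2^k]$ has only $\kappa'2^k+2$ vertices, and taking $\kappa'=\lfloor(\kappa+1)/2\rfloor$ with $k=\lfloor\log_2 n\rfloor$ immediately gives at most $n\kappa$ vertices, pathwidth between $\kappa'$ (from the embedded $K_{\kappa',\kappa'}$) and $2\kappa'-1\le\kappa$, and an alternation-free sequence of length $(\kappa')^k\ge(\kappa/2)^{\log n - 1}=n^{\Omega(\log\kappa)}$ for $n,\kappa\ge 4$, in a single regime with no boundary analysis. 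Your route does deliver planarity as a bonus conclusion, which is stronger than asked for, at the price of the optimization you carried out.
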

    
    \subsection{Alternation-free sequences of words} \label{sec:pathstowords}
    
    We first present an alternation-free sequence of paths for non-planar graphs, and then refine it to obtain another for a related planar graph. Consider the graph $G[n,m]$ with vertex set 
    \[ V = \{(i,j): i=0,1,\ldots,n-1, \mbox{ and } j=0,1,\ldots, m-1\} \cup \{s,t\}.\]
    We partition $V \setminus \{s,t\}$ into layers $L_0,L_1, \ldots, L_{m-1}$ of $n$ vertices each, where the $j$-th layer is
    \[ L_j = \{(i,j): i=0,1,\ldots,n-1\}.\]
    There are edges from vertex $s$ to all vertices in $L_0$, and from all vertices in $L_{m-1}$ to $t$. The remaining edges connect vertices in one layer to the vertices in the next. We will have two version of the graph: a non-planar version and a planar version. Let $\Zn = \{0,1,\ldots, n-1\}$, with addition performed modulo $n$. In the non-planar version, we add all edges from a layer to the next. We refer to the resulting graph as $G^{\npl}[n,m]$:
    \[ E(G^{\npl}) = (\{s\} \times L_0) \cup (L_0 \times L_1) \cup (L_1 \times L_2) \cup \cdots
    \cup (L_{m-2} \times L_{m-1}) \cup (L_{m-1} \times \{t\}).\]
    Thus, for $j=0,1,\ldots,m-2$, the subgraph induced by $L_j\cup L_{j+1}$ is a complete bipartite graph. In the planar version, we connect a vertex in layer $j$ to two vertices in layer $j+1$. We refer to the resulting graph as $G^{\pl}[n,m]$:
    \begin{align*}
    E(G^{\pl}) &= \{ ((i,j), (i+b \bmod{n},j+1): b \in \{0,1\},\, \\
    &\qquad i=0,1,\ldots,n-1, \mbox{ and }
    j=0,1,\ldots,m-2\}.
    \end{align*}
    One can imagine that $G^{\pl}$ is drawn on the surface of a cylinder instead of the surface of a plane (the $(n-1)$-th vertex in layer $L_j$ goes around the surface of the cylinder to the $0$-th vertex in layer $L_{j+1}$). See~\autoref{fig:awesomeness} for a depiction of $G^{\pl}$. In $G^{\npl}$, we may encode $s$-$t$ paths by words in $\Zn^m$: the word $\sigma=(\sigma_0,\sigma_1,\ldots,\sigma_{m-1}) \in \Zn^m$ corresponds to the path \[ p_\sigma = (s,(i_0 , 0), (i_1,1), \ldots, (i_{m-1},{m-1}), t),\]
    where $i_0 = \sigma_0$, and $i_{j+1}=i_j+\sigma_{j+1} \bmod{n}$, for $j=0,1,\ldots,m-2$.
    Similarly, we associate words $\tau \in \{0,1\}^m$ with paths $p_\tau$ in $G^{\pl}$. We define alternation-free sequences of words, and observe that the corresponding paths are alternation-free. By showing long alternation-free sequences of words, we establish the existence of long alternation-free sequences of paths.
    
    \begin{Definition}[Word]
        Let $\Zn$ denote the set $\{0,1,2,\ldots,n-1\}$ where addition is performed modulo $n$. Let $\Zn^m$ denote the set of words over $\Zn$ of length $m$. For a word $\sigma \in \Zn^m$ and $i \in \{0,1,\ldots,m-1\}$, let $\sigma[i]$ denote the $i$-th element of $\sigma$; let $\sigma[i:j]$ denote the subword $(\sigma[i],\sigma[i+1], \ldots, \sigma[j-1])$. For $\sigma \in \Zn^k$, let $|\sigma|_1$ denote the sum $($in $\Zn)$ of its elements. That is, $|\sigma|_1 = \sum_{i=0}^{k-1} \sigma[i] \bmod{n}$. Given a word $\sigma \in \Zn^m$ and $j \in \Zn$, let $\sigma \downarrow j$ be the word $\mu \in \Zn^{2m}$ obtained from $\sigma$ by inserting $j$ after each symbol of $\sigma$. That is, if $\mu=\sigma\downarrow j$, then $\mu[2i] = \sigma[i]$ and $\mu[2i+1] = j$, for $i=0,1,\ldots,m-1$. Let $S \in (\Zn^m)^\ell$ be a sequence of words $($each word is in $\Zn^m)$, and $S\downarrow j= (\sigma\downarrow j: \sigma \in S)$ be the sequence obtained after performing such an insertion on every word of $S$. For instance, if $\sigma=(7\ 2\ 6\ 2)$, then $\sigma\downarrow 3 = (7\ 3\ 2\ 3\ 6\ 3\ 2\ 3)$.
    \end{Definition}
    
    \begin{Definition}[Alternation-free sequence of words] \label{def:alternation-free}
        Let $S$ be sequence of $\ell$ words from $\Zn^m$, that is, $S=(\sigma_1,\sigma_2,\ldots,\sigma_{\ell-1})\in(\Zn^m)^\ell$. We say that $S$ has an alternation at $(a,b,c)$ between $(u,v)$, where $0 \leq a < b < c \leq \ell-1$ and $0 \leq u < v \leq m$, if
        \begin{itemize}
            \item $|\sigma_a[0:u]|_1 = |\sigma_b[0:u]|_1 = |\sigma_c[0:u]|_1$;
            \item $v=m$ or $(|\sigma_a[0:v]|_1 = |\sigma_b[0:v]|_1 = |\sigma_c[0:v]|_1)$;
            \item $\sigma_a[u:v] = \sigma_c[u:v] \neq \sigma_b[u:v]$.
        \end{itemize}
        Note that in every such alternation, either $v=m$ or $v-u \geq 2$. If $S$ has no alternation, then we say that $S$ is an alternation-free sequence of words.
    \end{Definition}
    
    \begin{Proposition}[Paths from sequences]
    If $S = (\sigma_i: i=0,1,\ldots,\ell-1) \in (\Zn^m)^\ell$ is an alternation-free sequence of words,
    then $(p_{\sigma_i}: i=0,1,\ldots, \ell-1)$ is an alternation-free sequence of paths in $G^{\npl}_m$. Similarly, if $T = (\tau_i: i=0,1,\ldots,\ell-1) \in (\{0,1\}^m)^\ell$ is an alternation-free sequence of words, then $(p_{\tau_i}: i=0,1,\ldots, \ell-1)$ is an alternation-free sequence of paths in $G^{\pl}_m$.
    \end{Proposition}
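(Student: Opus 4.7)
The plan is to prove the contrapositive: assume the sequence of paths has an alternation and produce an alternation in the corresponding word sequence. The first step, which does the real work, is to establish a precise correspondence between word prefixes and the vertices traversed by the corresponding path. For $\sigma \in \Zn^m$, the path $p_\sigma$ visits vertex $(|\sigma[0:j+1]|_1,\, j)$ at layer $j$, for each $j\in\{0,1,\ldots,m-1\}$. From this I would derive two facts: (a) two paths $p_{\sigma_a}$ and $p_{\sigma_b}$ share a common vertex at layer $j-1$ if and only if $|\sigma_a[0:j]|_1 = |\sigma_b[0:j]|_1$; and (b) if both paths already agree at layers $u-1$ and $v-1$, then an easy induction on the intermediate layers shows that the corresponding sub-paths coincide if and only if $\sigma_a[u:v] = \sigma_b[u:v]$, since knowing the current vertex and the next symbol determines the next vertex.

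Given this correspondence, suppose $(p_{\sigma_i})$ violates alternation-freeness in the sense of \autoref{prop:alt-free}: there are indices $a<b<c$ and vertices $U,V$ appearing in that order on all three of $p_{\sigma_a}, p_{\sigma_b}, p_{\sigma_c}$, with $p_{\sigma_a}[U{:}V] = p_{\sigma_c}[U{:}V] \neq p_{\sigma_b}[U{:}V]$. I would translate $U$ and $V$ into word indices: set $u=0$ if $U=s$, and $u=j+1$ if $U$ lies in layer $j$; symmetrically, set $v=m$ if $V=t$, and $v=j+1$ if $V$ lies in layer $j$. Then $0\leq u<v\leq m$, and the three bullets of \autoref{def:alternation-free} follow directly: the first from $U$ lying on all three paths (trivially so when $u=0$); the second from $V$ lying on all three paths (or from $v=m$); and the third from the equality/inequality pattern of the sub-paths combined with (b). This produces an alternation at $(a,b,c)$ between $(u,v)$, contradicting the hypothesis on $S$.

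The planar statement follows from the same argument with $\Zn$ replaced by $\{0,1\}$: the only property used is that each symbol, applied to the current vertex in a given layer, determines a unique vertex in the next layer, which holds verbatim in $G^{\pl}[n,m]$ (and $|\tau[0:j]|_1$ is still reduced modulo $n$ to identify the vertex in the circular layer). The main obstacle is purely notational, specifically the handling of the boundary cases $U=s$ and $V=t$, and confirming that the degenerate situation $u=v=m$ cannot arise for a genuine alternation (in that case the sub-paths would be a single forced edge into $t$ and could not differ). I do not expect any deeper difficulty beyond this bookkeeping.
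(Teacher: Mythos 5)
Your proposal is correct and amounts to the same proof the paper has in mind: the paper dismisses this proposition as ``Straightforward'' and only flags the need for the $v=m$ disjunct in the second bullet of the word-alternation definition (to cover the pair $(U,t)$), which is exactly the boundary case you identify and handle. Your path-word correspondence, the translation of vertex pairs into index pairs $(u,v)$, and the observation that the degenerate case $u=v$ (the last-layer vertex paired with $t$) forces a single edge and hence cannot yield an alternation are all correct, so this is a faithful expansion of the paper's one-line argument rather than a different route.
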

    
    \begin{proof}
    Straightforward. Note that the case $v=m$ in the second condition of~\autoref{def:alternation-free} is used to verify that there is no alternation involving pairs of vertices of the form $(u,t)$.
    \end{proof}
    
    Thus, we can now focus on creating alternation-free sequences of words.
    
    \begin{figure}
    \begin{center}
    \begin{tikzpicture} [scale=0.97]

    \tikzset{
    pics/.cd,
    disc/.style = {
    code = {
    %\fill [white] ellipse [x radius = 2, y radius = 2/3];
    %\path [left color = teal!50, right color = teal!50, middle color = teal!25] (-2+.05,-1.1) arc (180:360:2-.05 and 2/3-.05*2/3) -- cycle;
    \path [top color = teal!50, bottom color = white] (0,0) ellipse [x radius = 2, y radius = 2/3];
    \path [left color = teal!50, right color = teal!50, middle color = teal!10] (-2,0) -- (-2,-\gap) arc (180:360:2 and 2/3) -- (2,0) arc (360:180:2 and 2/3);
    
    \svertex(a) at (-2,0) [fill=\vertcolor, \vertcolor] {};
    \svertex(d) at (2,0) [fill=\vertcolor, \vertcolor] {};
    \svertex(c) at (-60:2 and 2/3) [fill=\vertcolor, \vertcolor] {};
    \svertex(b) at (-120:2 and 2/3) [fill=\vertcolor, \vertcolor] {};
    \svertex(e) at (60:2 and 2/3) [fill=\vertcolor, \vertcolor] {};
    \svertex(f) at (120:2 and 2/3) [fill=\vertcolor, \vertcolor] {};
    
    \draw[line width=\y cm, black, opacity=0.5] (a) to[out=-75,in=150] ($(b)+(0,-\x)$);
    \draw[line width=\y cm, black, opacity=0.5] (b) to[out=-30,in=175] ($(c)+(0,-\x)$);
    \draw[line width=\y cm, black, opacity=0.5] (c) to[out=-5,in=185] ($(d)+(0,-\x)$);
    \draw[line width=\y cm, black, opacity=0.5] (b) to ($(b)+(0,-\x)$);
    \draw[line width=\y cm, black, opacity=0.5] (c) to ($(c)+(0,-\x)$);
    
    \svertex at ($(b)+(0,-\x)$) [fill=\vertcolor, \vertcolor] {};
    \svertex at ($(c)+(0,-\x)$) [fill=\vertcolor, \vertcolor] {};
    \svertex at ($(d)+(0,-\x)$) [fill=\vertcolor, \vertcolor] {};

    }
    },
    }

    \def \gap {0.6};
    \def \x {0.6};
    \def \y {0.01};
    \def \z {0.2};
    \def \zz {0.15};
    \def \o {0.35};
    \def \vertcolor {teal};
    \def \darkcolor {teal};
    \def \msb {darkgray};
    \def \osb {darkgray};
    \def \lsb {darkgray};
    \def \xpos {6.4};
    \def \brc {1};

    \path (0,0) pic {disc} (0,\x) pic {disc} (0,2*\x) pic {disc} (0,3*\x) pic {disc} (0,4*\x) pic {disc} (0,5*\x) pic {disc} (0,6*\x) pic {disc} (0,7*\x) pic {disc} (0,8*\x) pic {disc} (0,9*\x) pic {disc} (0,10*\x) pic {disc} (0,11*\x) pic {disc} (0,12*\x) pic {disc} (0,13*\x) pic {disc} (0,14*\x) pic {disc} (0,15*\x) pic {disc} (0,16*\x) pic {disc} (0,17*\x) pic {disc} (0,18*\x) pic {disc} (0,19*\x) pic {disc} (0,20*\x) pic {disc} (0,21*\x) pic {disc} (0,22*\x) pic {disc} (0,23*\x) pic {disc} (0,24*\x) pic {disc} (0,25*\x) pic {disc} (0,26*\x) pic {disc} (0,27*\x) pic {disc} (0,28*\x) pic {disc} (0,29*\x) pic {disc} (0,30*\x) pic {disc} (0,31*\x) pic {disc} (0,32*\x) pic {disc} (0,33*\x) pic {disc} (0,34*\x) pic {disc};

    \draw[line width=\z cm, red] (-1,33*\x) to[out=-30,in=175] (1,32*\x);
    \draw[line width=\z cm, red] (1,32*\x)--(1,28*\x);

    \draw[line width=\z cm, red] (1,28*\x) to[out=-10,in=-170] (2,28*\x);
    \draw[line width=\zz cm, red, opacity=\o, densely dashed] (2,28*\x) to[out=170,in=10] (1,28*\x);
    \draw[line width=\zz cm, red, opacity=\o, densely dashed] (1,28*\x) to[out=-175,in=30] (-1,27*\x);
    \draw[line width=\zz cm, red, opacity=\o, densely dashed] (-1,27*\x) to[out=-150,in=75] (-2,25*\x);
    \draw[line width=\z cm, red] (-2,25*\x)--(-2,24*\x);

    \draw[line width=\z cm, red] (-2,24*\x) to[out=-75,in=150] (-1,22*\x);
    \draw[line width=\z cm, red] (-1,22*\x)--(-1,18*\x);

    \draw[line width=\z cm, red] (-1,18*\x) to[out=-30,in=170] (1,17*\x);
    \draw[line width=\z cm, red] (1,17*\x) to[out=-10,in=-170] (2,17*\x);
    \draw[line width=\z cm, red] (2,17*\x)--(2,14*\x);

    \draw[line width=\zz cm, red, opacity=\o, densely dashed] (2,14*\x) to[out=170,in=10] (1,14*\x);
    \draw[line width=\zz cm, red, opacity=\o, densely dashed] (1,14*\x)--(1,10*\x);
    \draw[line width=\zz cm, red, opacity=\o, densely dashed] (1,10*\x) to[out=-175,in=30] (-1,9*\x);
    \draw[line width=\zz cm, red, opacity=\o, densely dashed] (-1,9*\x) to[out=-150,in=75] (-2,7*\x);

    \draw[line width=\z cm, red] (-2,7*\x) to[out=-75,in=150] (-1,5*\x);
    \draw[line width=\z cm, red] (-1,5*\x) to[out=-30,in=175] (1,4*\x);
    \draw[line width=\z cm, red] (1,4*\x)--(1,3*\x);

    \draw[line width=\z cm, red] (1,3*\x) to[out=-10,in=-170] (2,3*\x);
    \draw[line width=\z cm, red] (2,3*\x)--(2,-\x);

    \dvertex(out1) at (-2,-\x) [fill=\darkcolor, \darkcolor] {};
    \dvertex(out2) at (2,-\x) [fill=\darkcolor, \darkcolor] {};

    \foreach \i in {0,5,...,35}
    {
        \begin{scope}[shift={(0,\i*\x)}]{ \dvertex at ($(out1)+(0,\i*\x)$) [fill=\darkcolor, \darkcolor] {}; \dvertex at ($(out2)+(0,\i*\x)$) [fill=\darkcolor, \darkcolor] {}; }\end{scope}
    }

    \foreach \i in {-1,4,...,34}
    {
        \begin{scope}[shift={(0,\i*\x)}]{ \dvertex at (-60:2 and 2/3) [fill=\darkcolor, \darkcolor] {}; \dvertex at (-120:2 and 2/3) [fill=\darkcolor, \darkcolor] {}; }\end{scope}
    }

    \begin{scope}[shift={(0,34*\x)}]{ \dvertex at (60:2 and 2/3) [fill=\darkcolor, \darkcolor] {}; \dvertex at (120:2 and 2/3) [fill=\darkcolor, \darkcolor] {}; }\end{scope}

    \fill[fill=\lsb,rounded corners](\xpos,\x) rectangle +(\x,\x);
    \node[white] at (\xpos+\x/2,\x+\x/2) {\LARGE{$1$}};
    \node at (\xpos-1.8,\x+\x/2) {\LARGE{$(1\ 0\ 0\ 0\ 0) = $}};

    \fill[fill=\osb,rounded corners](\xpos,6*\x) rectangle +(\x,\x);
    \node[white] at (\xpos+\x/2,6*\x+\x/2) {\LARGE{$4$}};
    \node at (\xpos-1.8,6*\x+\x/2) {\LARGE{$(1\ 1\ 1\ 1\ 0) = $}};

    \fill[fill=\lsb,rounded corners](\xpos,11*\x) rectangle +(\x,\x);
    \node[white] at (\xpos+\x/2,11*\x+\x/2) {\LARGE{$1$}};
    \node at (\xpos-1.8,11*\x+\x/2) {\LARGE{$(1\ 0\ 0\ 0\ 0) = $}};

    \fill[fill=\msb,rounded corners](\xpos,16*\x) rectangle +(\x,\x);
    \node[white] at (\xpos+\x/2,16*\x+\x/2) {\LARGE{$2$}};
    \node at (\xpos-1.8,16*\x+\x/2) {\LARGE{$(1\ 1\ 0\ 0\ 0) = $}};

    \fill[fill=\lsb,rounded corners](\xpos,21*\x) rectangle +(\x,\x);
    \node[white] at (\xpos+\x/2,21*\x+\x/2) {\LARGE{$1$}};
    \node at (\xpos-1.8,21*\x+\x/2) {\LARGE{$(1\ 0\ 0\ 0\ 0) = $}};

    \fill[fill=\osb,rounded corners](\xpos,26*\x) rectangle +(\x,\x);
    \node[white] at (\xpos+\x/2,26*\x+\x/2) {\LARGE{$4$}};
    \node at (\xpos-1.8,26*\x+\x/2) {\LARGE{$(1\ 1\ 1\ 1\ 0) = $}};

    \fill[fill=\lsb,rounded corners](\xpos,31*\x) rectangle +(\x,\x);
    \node[white] at (\xpos+\x/2,31*\x+\x/2) {\LARGE{$1$}};
    \node at (\xpos-1.8,31*\x+\x/2) {\LARGE{$(1\ 0\ 0\ 0\ 0) = $}};

    \foreach \i in {-1,4,...,29}
    {
        \draw [decorate,decoration={brace,amplitude=15pt}] (\xpos-\brc-3.1,\i*\x+5*\x-0.1) -- (\xpos-\brc-3.1,\i*\x+0.1);
        \draw [decorate,decoration={brace,amplitude=15pt}] (\xpos+\x+\brc-0.2,\i*\x+0.1) -- (\xpos+\x+\brc-0.2,\i*\x+5*\x-0.1);
    }

    \foreach \i in {0,3,...,21}
        \foreach \j in {0,1,...,5}
            \dvertex (ab\j\i) at (\j*0.6+8.1,\i-\x) [fill=\darkcolor, \darkcolor] {};
        
    \draw[line width=\z cm, red] (ab30)--(ab23)--(ab46)--(ab39)--(ab112)--(ab015)--(ab218)--(ab121);

    \end{tikzpicture} \caption{A path in $G^{\pl}$ (left) and $G^{\npl}$ (right) for $n=6$, along with their corresponding words. $s$ is connected to all vertices in the top layer and $t$ is connected to all vertices in the bottom layer.}\label{fig:awesomeness}
    \end{center}
    \end{figure}
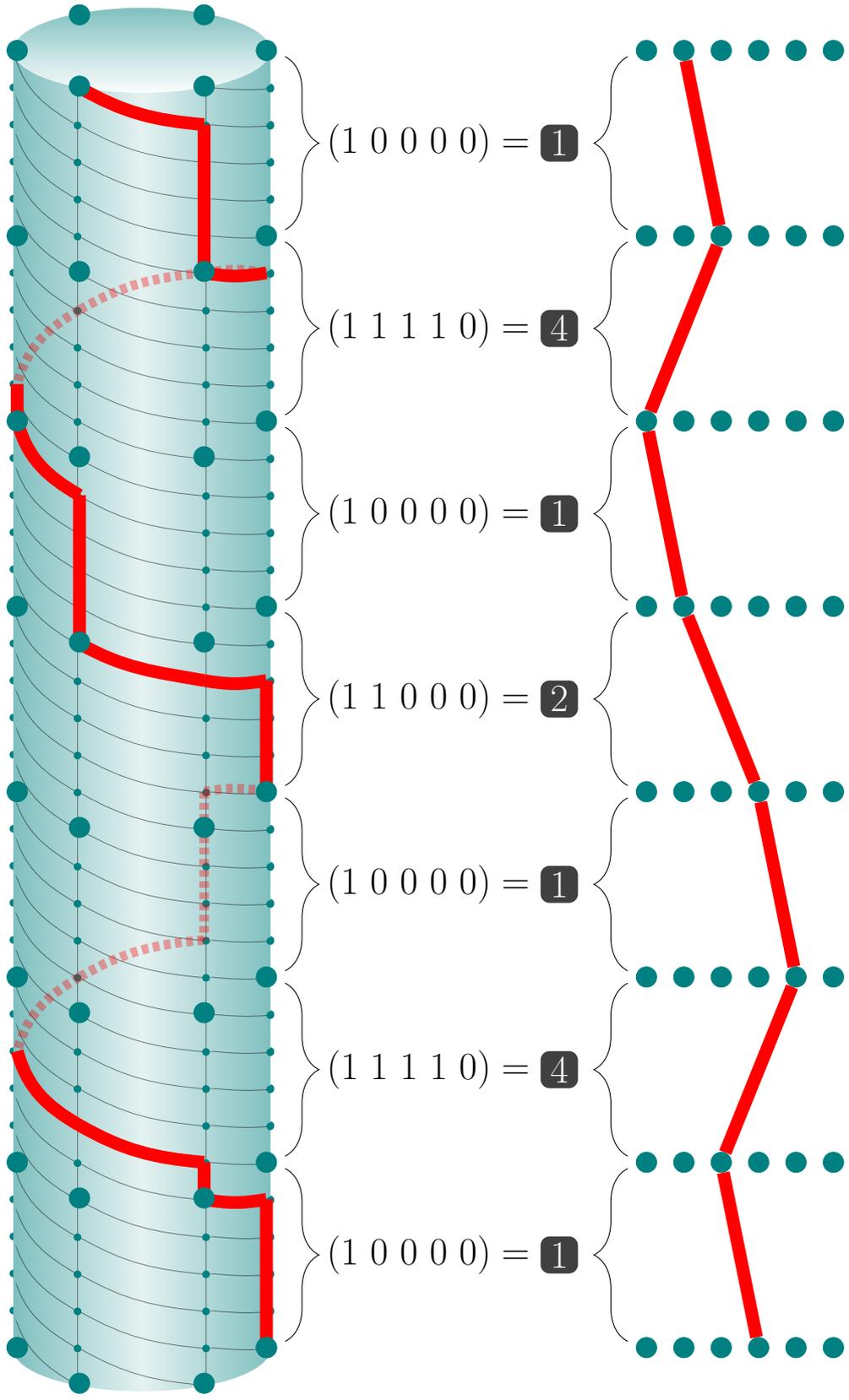
    
    \subsection{Construction of alternation-free sequences of words} \label{sec:words}
    
    In this section, we will construct two alternation-free sequences $X$ and $\hat{X}$ (each of length $n^\ell$) over $\Zn$ and $\{0,1\}$ respectively. We first describe $X$. The $i$-th word ($i=0,1,\ldots,n^{\ell}-1$) of $X$ is given by $X[i] = (b_0) \downarrow b_1 \downarrow \cdots \downarrow b_{\ell-1}$, where $(i)_n=\sum_{j=0}^{\ell-1} b_j n^j$ is the base $n$ representation of $i$. For example, suppose $n=4$ and $i=114$. Then $X[114] = (2) \downarrow 0 \downarrow 3 \downarrow 1=(2\ 1\ 3\ 1\ 0\ 1\ 3\ 1)$ because 114 is equal to 1302 in base 4.
    
    Binary alternation-free sequences can be viewed as a composition of words over $\Zn$, where we map $i \in \Zn$ to the binary word $\hat{\imath} = 1^i0^{n-1-i} \in \{0,1\}^{n-1}$. Thus $\hat{X}[i]$ is constructed exactly like $X[i]$, but it is represented differently (as a binary word of length $(n-1)2^{\ell-1}$ bits). Continuing with the example in the previous paragraph, we have $\hat{X}[114]=(110\ 100\ 111\ 100\ 000\ 100\ 111\ 100)$. Now we will show that $X$ and $\hat{X}$ are alternation-free.
    
    \begin{Lemma} \label{lm:doubling}
    Suppose $S \in (\Zn^m)^{\ell}$ is an alternation-free sequence of $\ell$ words in $\Zn^m$. Then,
    \begin{enumerate}
    \item[$($a$)$] for all $j\in\Zn$, $S \downarrow j$ is an alternation-free sequence of $\ell$ words in $\Zn^{2m}$;
    \item[$($b$)$] $T= (S\downarrow 0) \circ (S \downarrow 1) \circ \cdots \circ (S \downarrow (n-1))$ is an alternation-free sequence of $n \ell$ words, where each word is in $\Zn^{2m}$.\footnote{Here $\circ$ represents concatenation of sequences.}
   \end{enumerate}
    \end{Lemma}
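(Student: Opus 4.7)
The plan is to assume an alternation exists in the enlarged sequence and derive a contradiction: for (a), an alternation in $S$ itself; for (b), an alternation in some $S \downarrow k$ which is then ruled out by (a). The only structural fact I will use is that $\mu := \sigma \downarrow j$ places the symbols of $\sigma$ at the even positions of $\mu$ and the constant $j$ at the odd positions, giving the identity $|\mu[0:u]|_1 = |\sigma[0:\lceil u/2 \rceil]|_1 + \lfloor u/2 \rfloor \cdot j$ in $\Zn$, so that the $j$-offsets cancel in any three-way prefix-sum equality.

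For part (a), suppose $S \downarrow j$ has an alternation at $(a,b,c)$ between $(u,v)$. I would exhibit an alternation in $S$ at the induced range $(u',v') := (\lceil u/2 \rceil, \lceil v/2 \rceil)$, with the convention $v' = m$ when $v = 2m$. Using the identity above, the prefix-sum equalities at $u$ and $v$ transfer verbatim to equalities at $u'$ and $v'$. For the subword condition, odd positions of $\mu[u:v]$ always carry $j$, so any disagreement between $\mu_a[u:v]$ and $\mu_b[u:v]$ must occur at a $\sigma$-slot, yielding the required equality and inequality on the three words $\sigma_x[u':v']$. This produces an alternation in $S$ whenever $v' - u' \geq 2$ or $v' = m$, contradicting the hypothesis. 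The sub-cases in which the induced range degenerates to a single index (namely $v - u \in \{2,3\}$ for particular parity choices) have $\mu[u:v]$ containing exactly one $\sigma$-slot, and the two joint prefix-sum equalities at $u$ and $v$ then force this single $\sigma$-symbol to agree across all three words, directly contradicting $\mu_a[u:v] \neq \mu_b[u:v]$; such sub-cases thus cannot arise in the first place.

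For part (b), suppose $T$ has an alternation at $(a,b,c)$ between $(u,v)$, and let $k_a \leq k_b \leq k_c$ denote the blocks containing $\tau_a, \tau_b, \tau_c$ (the ordering follows from the block layout of $T$ and $a < b < c$). The interval $[u,v)$ always contains an odd position: if $v - u \geq 2$ this is automatic, while if $v - u = 1$ then the clause ``$v = m$ or $v - u \geq 2$'' (applied to words of length $2m$) forces $v = 2m$, making $u = 2m - 1$ odd. At such an odd position $p$, $\tau_a[p] = k_a$ and $\tau_c[p] = k_c$, while $\tau_a[u:v] = \tau_c[u:v]$ forces $k_a = k_c$; sandwiching then yields $k_a = k_b = k_c$, so all three words lie in the single block $S \downarrow k_a$, which pulls the alternation back to that block and contradicts part (a).

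The main obstacle I expect is the parity bookkeeping in part (a): the four parities of $(u,v)$ must each be checked to verify that $(u',v')$ yields a genuine alternation in $S$, and in the degenerate short-range sub-cases one must notice that the subword inequality is incompatible with the two prefix-sum equalities, rather than producing an alternation in $S$ directly. Once part (a) is in hand, part (b) follows cleanly from the odd-position observation above.
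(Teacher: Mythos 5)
Your proof is correct and follows essentially the same route as the paper: part (a) pulls the alternation back to $S$ via the index map $u\mapsto\lceil u/2\rceil$ (the paper leaves the prefix-sum and degenerate-range bookkeeping implicit with "it is easy to see"), and part (b) uses the constant symbol at odd positions to collapse all three words into a single block $S\downarrow j$ before invoking (a). Your part (b) is slightly more streamlined than the paper's — you show directly that an odd position always lies in $[u,v)$ and hence $k_a=k_c$, whereas the paper splits into the two cases ``$\sigma_a,\sigma_c$ agree/disagree on odd positions'' and handles the latter by a separate contradiction — but the underlying mechanism is identical.
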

    \begin{proof}
    For part (a), note that if $S\downarrow j$ has an alternation at $(a,b,c)$ $(0 \leq a < b < c \leq \ell-1)$ between $(u,v)$ $(0 \leq u < v \leq 2m)$, then it is easy to see that $S$ itself has an alternation at $(a,b,c)$, between $(\ceil{u/2},\ceil{v/2})$. Since $S$ is alternation-free, so is $S\downarrow j$.

    For part (b), we use part (a). Suppose $T$ has an alternation at $(a,b,c)$ $(0 \leq a < b < c \leq n\ell-1)$ between $(u,v)$ $(0 \leq u < v \leq 2m)$. If $\sigma_a$ and $\sigma_c$ have the same symbol in their odd positions\footnote{Recall that for every $\sigma\in S\downarrow j$, $\sigma$ has $j$ in all its odd positions.} then $\sigma_a$, $\sigma_b$ and $\sigma_c$ all come from a common segment of $T$ of the form $S \downarrow j$. By part (a), the sequence $S \downarrow j$ is alternation-free. So $T$ has no alternation at $(a,b,c)$ between $(u,v)$.
    
    On the other hand, suppose $\sigma_a$ and $\sigma_c$ have different symbols in their odd positions. Since $\sigma_a[v-1] = \sigma_c[v-1]$, we conclude that $v$ is odd. In particular, $v\neq 2m$ and thus $v-u \geq 2$ (as observed at the end of~\autoref{def:alternation-free}). This means that the interval $\{u,u+1,\ldots,v-1\}$ includes an odd number. Hence $\sigma_a[u:v] \neq \sigma_c[u:v]$, and there is no alternation at $(a,b,c)$ between $(u,v)$. This completes the proof.
    \end{proof}
    \begin{Theorem} For all $\ell \geq 1$,  there is an alternation-free sequence $T$ of $n^\ell$ words in $\Zn^{2^{\ell}}$.
    \end{Theorem}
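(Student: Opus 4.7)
The plan is to prove the statement by straightforward induction on $\ell$, using \autoref{lm:doubling}(b) as the doubling step. The idea is that \autoref{lm:doubling}(b) turns a length-$\ell$ alternation-free sequence of words in $\Zn^m$ into a length-$n\ell$ alternation-free sequence of words in $\Zn^{2m}$; iterating this $\ell$ times starting from a single word of length $1$ produces exactly $n^\ell$ words of length $2^\ell$.

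More concretely, I would set up the induction on a parameter $k \ge 0$ with the statement ``there is an alternation-free sequence of $n^k$ words in $\Zn^{2^k}$''. For the base $k=0$, take the one-element sequence $S_0 = ((0))$, which is trivially alternation-free since no triple of indices $(a,b,c)$ with $a<b<c$ exists. Given an alternation-free sequence $S_k \in (\Zn^{2^k})^{n^k}$, I would form
\[ S_{k+1} = (S_k \downarrow 0) \circ (S_k \downarrow 1) \circ \cdots \circ (S_k \downarrow (n-1)).\]
By \autoref{lm:doubling}(b), $S_{k+1}$ is an alternation-free sequence of $n \cdot n^k = n^{k+1}$ words, each of length $2 \cdot 2^k = 2^{k+1}$. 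This advances the induction. Setting $k = \ell$ then gives the theorem.

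There really is no substantive obstacle here: all the heavy lifting has already been done in \autoref{lm:doubling}, whose part (a) handles a single insertion and whose part (b) upgrades to the full concatenation over all $n$ insertion symbols. The only thing worth double-checking is the arithmetic of lengths and cardinalities at each induction step, and that \autoref{lm:doubling} applies uniformly (in particular that the alternation-free property of $S_k$ is preserved regardless of the choice of $k$ and the symbol inserted). Unwinding the recursion also recovers the explicit description of the sequence $X$ given in the paragraph preceding the lemma: the $i$-th word is $(b_0) \downarrow b_1 \downarrow \cdots \downarrow b_{\ell-1}$, where $\sum_{j} b_j n^j$ is the base-$n$ expansion of $i$, which provides a concrete witness in addition to the inductive existence proof.
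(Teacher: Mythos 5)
Your proof is correct and follows essentially the same route as the paper: induction using \autoref{lm:doubling}(b) as the doubling step. The only difference is cosmetic --- you start the induction at $k=0$ with a single word of length $1$, whereas the paper starts at $\ell=1$ with the $n$ one-symbol words $(0,1,\ldots,n-1)$; your bookkeeping is in fact slightly cleaner, since the paper's base case words lie in $\Zn^{1}$ rather than $\Zn^{2^1}$ as the theorem statement would suggest, while your $S_\ell$ genuinely has words in $\Zn^{2^\ell}$.
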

    \begin{proof}
    We will use \autoref{lm:doubling} and induction on $\ell$. For $\ell=1$, the alternation-free sequence is simply $T=(0,1,2,\ldots,n-1)$, which we think of as a sequence of $n$ words, where each word has one symbol. %Note that $T$ has the desired from, if we take $S$ to be the sequence of one empty word. 
    Suppose $\ell >1$. Let $S$ be sequence of $n^\ell$ words in $\Zn^{2^{\ell-1}}$. Consider the sequence \[T= (S\downarrow 0) \circ (S \downarrow 1) \circ \cdots \circ (S \downarrow (n-1)).\] By \autoref{lm:doubling}, $T$ is an alternation-free sequence of $n\cdot n^{\ell}=n^{\ell+1}$ words in $\Zn^{2\cdot 2^{\ell-1}} = \Zn^{2^{\ell}}$. %Clearly, $T$ has the desired form.
    \end{proof}
    Note that $T$ is equal to the $X$ that we had described earlier.
    
    \subsection{Construction of alternation-free sequences of binary words} \label{sec:binarywords}
    
    Consider the following \emph{unary} encoding, where we map $i \in \Zn$ to the binary word $\hat{\imath} = 1^i0^{n-1-i} \in \{0,1\}^{n-1}$. Let $\Znhat=\{\hat{0},\hat{1},\ldots,\widehat{n-1}\}$. Thus, words in $\Znhat^m$ over this alphabet consist of $m$ symbols, each of which is a binary word of $n-1$ bits. We view such a word as a binary string of length $m(n-1)$ by concatenating the $m$ symbols. Now we will show that the resulting sequence of binary strings is alternation-free.
    \begin{Lemma} \label{lm:binarydoubling}
    Suppose $\hat{S} \in (\Znhat^m)^{\ell}$ is alternation-free sequence of $\ell$ binary words of length $m(n-1)$ each. Then,
    \begin{enumerate}
        \item[$($a$)$] for all $\hat{\jmath}\in\Znhat$, $\hat{S} \downarrow \hat{\jmath}$ is an alternation-free sequence of $\ell$ words in $\{0,1\}^{2m(n-1)}$;
        \item[$($b$)$] $\hat{T}= (\hat{S}\downarrow \hat{0}) \circ (\hat{S} \downarrow \hat{1}) \circ \cdots \circ (\hat{S} \downarrow \widehat{n-1})$ is an alternation-free sequence of $n \ell$ words in $\{0,1\}^{2m(n-1)}$.
   \end{enumerate}
    \end{Lemma}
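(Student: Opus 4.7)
I plan to mimic the two-part proof of~\autoref{lm:doubling} verbatim, with the single adjustment that each symbol of $\Znhat$ is now a fixed $(n-1)$-bit block, so alternation positions $u,v$ range over bit positions in $[0,2m(n-1)]$ rather than over symbol positions in $[0,2m]$. The structural feature to exploit is that the $\downarrow\hat{\jmath}$ operation inserts the \emph{same} $(n-1)$-bit block $\hat{\jmath}$ after every data block of every word in $\hat{S}$; in particular, each inserted block contributes the same fixed count of ones to the running prefix sum across all three words of any putative alternation.

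For part~(a), suppose $\hat{S}\downarrow\hat{\jmath}$ has an alternation at $(a,b,c)$ between bit positions $(u,v)$. Since the inserted $\hat{\jmath}$-blocks are identical across $\hat{\sigma}_a,\hat{\sigma}_b,\hat{\sigma}_c$, one can round $u$ down and $v$ up to the nearest multiple of $2(n-1)$ (the length of a data-plus-$\hat{\jmath}$ unit) to obtain positions $u',v'$ at data-block boundaries for which all three alternation conditions still hold; dividing by $2(n-1)$ recovers an alternation in $\hat{S}$ at the corresponding symbol positions, contradicting the alternation-freeness of $\hat{S}$.

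For part~(b), suppose $\hat{T}$ has an alternation at $(a,b,c)$ between $(u,v)$. If $\hat{\sigma}_a$ and $\hat{\sigma}_c$ lie in a common copy $\hat{S}\downarrow\hat{\jmath}$ within $\hat{T}$, then so does $\hat{\sigma}_b$ (because each copy is a contiguous sub-sequence of $\hat{T}$), and part~(a) concludes. Otherwise $\hat{\sigma}_a$ and $\hat{\sigma}_c$ lie in distinct copies $\hat{S}\downarrow\hat{\jmath}_1$ and $\hat{S}\downarrow\hat{\jmath}_2$ with $\hat{\jmath}_1\neq\hat{\jmath}_2$, so every even-indexed $(n-1)$-bit block of $\hat{\sigma}_a$ differs from that of $\hat{\sigma}_c$ at some fixed bit position within the block. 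Mirroring the parity argument of~\autoref{lm:doubling}, the equality $\hat{\sigma}_a[u:v]=\hat{\sigma}_c[u:v]$ forces $[u,v)$ to avoid every such differing bit, and this constraint, combined with the prefix-equality requirements at $u$ and at $v$ (or with $v=2m(n-1)$), yields the desired contradiction.

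The main obstacle will be the bit-level bookkeeping in part~(b): the clean odd/even parity argument of~\autoref{lm:doubling} becomes a more involved case analysis on where $u$ and $v$ land relative to the $\hat{\jmath}$-blocks and on which specific bit of $\hat{\jmath}_1$ versus $\hat{\jmath}_2$ disagrees, and the boundary case $v=2m(n-1)$ may need a separate argument based on the total count of ones contributed by the inserted $\hat{\jmath}$ blocks. Once this bookkeeping is settled, the overall structure of the proof is essentially identical to that of~\autoref{lm:doubling}.
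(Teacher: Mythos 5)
The high-level plan (reuse the structure of the proof of Lemma~\ref{lm:doubling} at the bit level) is indeed the paper's approach, but the specific device you propose for part~(a) does not work, and the hard parts of part~(b) are left unresolved.

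For part~(a), you cannot ``round $u$ down and $v$ up to the nearest multiple of $2(n-1)$.'' The bits stripped off or appended by the rounding lie in the \emph{data} blocks, where $\hat\sigma_a,\hat\sigma_b,\hat\sigma_c$ genuinely differ; so after rounding, the required equal-prefix-sum conditions at $u'$ and $v'$ (and the condition $\hat\sigma_a[u':v']=\hat\sigma_c[u':v']$) need not hold, and the rounded pair no longer witnesses an alternation. The paper instead uses minimality of $v-u$ to conclude that $u$ and $v-1$ both fall inside even-numbered (data) blocks, and then \emph{collapses} the odd blocks while keeping the within-block offset: $u=q(n-1)+r$ with $q$ even is mapped to bit position $(q/2)(n-1)+r$ in $\hat S$. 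Because the odd ($\hat\jmath$) blocks contribute identically to all three prefix sums, this map sends an alternation in $\hat S\downarrow\hat\jmath$ to one in $\hat S$. That is a collapse, not a rounding, and the distinction is essential.

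For part~(b), ``forces $[u,v)$ to avoid every such differing bit'' is vague and elides the two specific properties of the unary code $\hat\imath=1^i0^{n-1-i}$ that make the argument go through: (i) along the sequence $(\hat 0,\hat 1,\ldots,\widehat{n-1})$, each bit position is monotone non-decreasing; combined with ``no full odd block lies in $[u,v)$,'' this forces $u$ and $v-1$ to sit in the \emph{same} even block; and (ii) if two unary codewords have equal numbers of ones on a common subinterval of positions, they agree bit-by-bit on that subinterval, which then forces $\hat\sigma_a[u:v]=\hat\sigma_b[u:v]=\hat\sigma_c[u:v]$ and kills the alternation. Without isolating these two facts, the ``case analysis'' you defer is exactly where the proof lives. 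Also note a sign slip: with the paper's convention $\mu[2i]=\sigma[i]$, $\mu[2i+1]=j$, the inserted $\hat\jmath$ blocks are the \emph{odd}-indexed ones, not the even-indexed ones as you write.
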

    \begin{proof}
    For part (a), consider $\hat{S} \downarrow \hat{\jmath}$. A word in this sequence consists of blocks of $n-1$ symbols, where each block can be thought of as an element of $\Znhat$. In particular, all the odd numbered blocks of $\hat{S} \downarrow \hat{\jmath}$ comprise of the word $\hat{\jmath}$. Since the symbols from these odd blocks make the same contribution to the prefix sums of all words, we can suppress them and conclude that $\hat{S} \downarrow \hat{\jmath}$ is alternation-free because $\hat{S}$ is known to be alternation-free. We now make this idea more precise.
    
    Suppose $\hat{S} \downarrow \hat{\jmath} = (\sigma_i: i=0,1,\ldots, \ell-1)$ has an alternation at $(a,b,c)$ ($0 \leq a < b < c \leq \ell-1)$ between $(u,v)$ $(0 \leq u < v \leq 2m(n-1))$. First, let us handle the boundary case $v=2m(n-1)$. Then $u$ cannot be a location in the last block, for it is an odd numbered block and the entire block is identical in all words in $\hat{S}\downarrow \hat{\jmath}$. Let $u=q(n-1) + r$, where $r = u \bmod{n-1}$ and $q < 2m-1$. We conclude that $\hat{S}$ has an alternation at $(a,b,c)$ between $(\ceil{q/2}(n-1)+r,m(n-1))$, contradicting the fact that $\hat{S}$ is alternation-free.
    
    So we may assume that $v < 2m(n-1)$ and $v-u\geq 2$. We may also assume that $(u,v)$ has been chosen so that $v-u$ is minimal. This implies that $\sigma_a[u]=\sigma_c[u]\neq \sigma_b[u]$, and similarly that $\sigma_a[v-1]=\sigma_c[v-1]\neq \sigma_b[v-1]$. In particular, both $u$ and $v-1$ are indices in even numbered blocks. Let $u=q(n-1) + r$, $v-1 = q'(n-1) + r'$, where $r = u \bmod{n-1}$ and $r'=v-1 \bmod{n-1}$. Then, $q$ and $q'$ are even. We conclude that $\hat{S}$ has an alternation at $(a,b,c)$ between $((q/2)(n-1) + r, (q'/2)(n-1) + r')$, contradicting the fact that $\hat{S}$ is alternation-free. This establishes part (a).
    
    For part (b), suppose $\hat{T}$ has an alternation at $(a,b,c)$ ($0 \leq a < b < c \leq n\ell-1)$ between $(u,v)$ $(0 \leq u < v \leq 2m(n-1))$; assume that $(u,v)$ has been chosen so that $v-u$ is minimal. Let $\sigma_a\in\hat{S} \downarrow\hat{\jmath_a}$, $\sigma_c\in\hat{S} \downarrow\hat{\jmath_c}$ for some $\hat{\jmath_a}, \hat{\jmath_c}\in\Znhat$. If $\hat{\jmath_a}=\hat{\jmath_c}$, then part (a) gives us the necessary contradiction. So we may assume that $\hat{\jmath_a}\neq\hat{\jmath_c}$. Note that the contents of the odd numbered blocks are \emph{monotonically non-decreasing} in $\Znhat$. Thus if an entire odd numbered block lies in the range $\{u,u+1,\ldots,v-1\}$, then $\sigma_a[u:v] \neq \sigma_c[u:v]$, and there is no alternation. So we may assume that no entire odd numbered block lies in the range $\{u,u+1,\ldots,v-1\}$.
    
    To complete the proof, we now use two crucial facts about our encoding. First, in the sequence of words $(\hat{0}, \hat{1}, \ldots, \widehat{n-1})$, the bit at position $i$ (for $i=0,1,\ldots,n-1$) starts at $0$ and flips to $1$, never to flip back to $0$ again. This implies that both $u$ and $v-1$ lie in even numbered blocks. Since no entire odd numbered block lies in the range $\{u,u+1,\ldots,v-1\}$, this means that $u$ and $v-1$ lie within the same even numbered block. Second, our encoding has the property that if for two words $\hat{\jmath_1}, \hat{\jmath_2} \in \Znhat$, we have $|\hat{\jmath_1}[u:v]|_1 = |\hat{\jmath_2}[u:v]|_1$, then $\hat{\jmath_1}[u:v] = \hat{\jmath_2}[u:v]$. This implies that $\sigma_a[u:v] = \sigma_b[u:v] = \sigma_c[u:v]$; so there is no alternation. This establishes part (b) and completes the proof.
    \end{proof}
    Note that $\hat{T}$ is equal to the $\hat{X}$ that we had described earlier.
    
    \section{Conclusion}
    
    Our ultimate goal is to understand how parametric shortest path complexity changes with the topology of the graph. One of the reasons Nikolova's conjecture was open is that planar graphs have a small (linear) number of edges, which leads one to (falsely) believe that there are not enough edge weights to assign for the graph to have high parametric complexity. Thus, the number of edges is not the right measure to characterize parametric complexity.
    
    Since graphs with large treewidth have superpolynomial parametric complexity and graphs with constant treewidth have polynomial parametric complexity, treewidth seems to be the right measure. However, an unexplored gap still remains. The following conjecture, in particular, is interesting.
    
    \begin{Conjecture} \label{con:con1}
        For every sufficiently large $n, k$, there is an $n$-vertex graph of treewidth $k$ with parametric complexity $n^{\Omega(\log k)}$.
    \end{Conjecture}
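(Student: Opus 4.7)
Proof plan. The strategy is to build the graph in two stages: first an inner gadget obtained by running Lemma~\ref{lm:mainlemma} at small width, then an outer recursion that scales the vertex count up to $n$ without growing the treewidth further. For the inner gadget, instantiate Lemma~\ref{lm:mainlemma} with slope parameter $n_0=\Theta(k/\log k)$ and recursion depth $m_0=\log k$, so that the accumulated width $B+m_0 n_0=O(k)$ throughout; this yields a planar gadget $H_0$ on $\operatorname{poly}(k)$ vertices with treewidth $O(k)$ and parametric complexity $n_0^{m_0}=k^{\Omega(\log k)}$.

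For the outer stage, build a Mulmuley--Shah-style recursion on top of $H_0$ in which each outer level $\ell$ composes three copies of the level-$(\ell{-}1)$ graph via a link gadget. The inductive goal at each outer level is to multiply the parametric complexity by $\Theta(k)$ while only tripling the vertex count, using a re-tuning of the constants~\eqref{eq:setKL}--\eqref{eq:setDR} whose analog of Claims~\ref{cl:LM} and~\ref{cl:LR} guarantees that every new candidate path shows up on the lower envelope. After $L=\lfloor\log_3(n/|H_0|)\rfloor$ outer levels we would obtain an $n$-vertex graph of treewidth $O(k)$ whose parametric complexity is at least $k^L\cdot k^{\Omega(\log k)}=n^{\Omega(\log k)}$, using the identity $k^{\log n}=n^{\log k}$.

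The main obstacle is the design of the outer-level link gadget. In Mulmuley--Shah's construction the link grows the number of boundary vertices from $B$ to $B+n$, and it is precisely this growth that forces the width of Lemma~\ref{lm:mainlemma} to scale as $\Theta(mn)$. For the outer recursion to preserve width $O(k)$ across all $L$ levels, the link must offer $\Theta(k)$ distinct routing choices while \emph{reusing} the existing $O(k)$ boundary slots rather than allocating fresh ones. A plausible line of attack is to identify the $k$ boundary slots with elements of $\mathbb{Z}_k$ and let each link edge implement a group rotation, echoing the cyclic planar graph $G^{\pl}[n,m]$ used in Section~\ref{sec:words}; successive outer levels would then compose rotations rather than occupy fresh slots. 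The hardest part will be verifying that such a cyclic link gadget, together with a matching weight assignment in the spirit of~\autoref{lm:planarize}, yields $\Theta(k)$ new pieces per level in the \emph{parametric} shortest-path sense, not merely $\Theta(k)$ new alternation-free paths---which by Kuchlbauer's example~\cite{martina} is strictly weaker.
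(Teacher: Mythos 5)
You are offering a \emph{proof plan} for what the paper explicitly presents as an \emph{open conjecture}. Conjecture~\ref{con:con1} is stated in the Conclusion, where the authors offer only two pieces of \emph{supporting evidence}: (i) a graph on $n$ vertices of pathwidth $k$ with parametric complexity $n^{\Omega(\log k-\log\log n)}$, and (ii) an $nk$-vertex graph of pathwidth $k$ with an alternation-free sequence of length $n^{\Omega(\log k)}$. Neither closes the gap, and the paper says only that ``we might be very close to resolving'' the conjecture. So there is no proof in the paper to compare your attempt against, and your plan does not close the gap either — you correctly identify the crux but leave it unresolved.

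Two concrete problems. First, the treewidth estimate for the inner gadget $H_0$ is too optimistic. You argue $\tw(H_0)=O(k)$ because the accumulated layer width $B+m_0 n_0=O(k)$. But the treewidth is not governed by the number of boundary vertices; it is governed by the number of \emph{line segments} crossing a vertical cut inside $\linkgadget^{\pl}(B,n_0)$, which is $B(n_0+1)$. The planarized linking gadget is an arrangement of $\Theta(Bn_0)$ pairwise-crossing pseudolines, and such an arrangement has pathwidth $\Theta(Bn_0)$. At the deepest level of the recursion $B=\Theta(m_0n_0)$, so $\tw(H_0)=\Theta(m_0n_0^2)=\Theta\bigl((\log k)\cdot(k/\log k)^2\bigr)=\Theta(k^2/\log k)$, not $O(k)$. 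If you instead enforce $m_0n_0^2=O(k)$ and re-optimize, you land exactly on the $n^{\Omega(\log k-\log\log n)}$ bound — i.e.\ the paper's evidence (i), not the conjecture.

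Second, the outer recursion is where the conjecture actually lives, and it is not constructed; you say so yourself. The cyclic ``wrap-around'' idea is attractive, but it runs into two obstructions that are precisely why the paper leaves this open. (a) The correctness of $\linkgadget^{\pl}$ (\autoref{lm:planarize}, \autoref{cl:prob}) relies on slopes $r\in\{0,\dots,n\}$ being nonnegative and the variance argument $\var[r_\mathbf{i}]\geq 2n^{-2}$ certifying that any deviation from a straight line incurs a large penalty. A $\mathbb{Z}_k$-rotation gadget gives up this monotone, straight-line-in-a-strip structure, so the argument does not port over — you would need a genuinely new gadget with its own feasibility proof. (b) Even if you could produce $\Theta(k)$ \emph{alternation-free} routing choices per level without growing the width (as in \autoref{con:end2}), the Kuchlbauer counterexample means this does not certify $\Theta(k)$ new \emph{pieces}; you need an explicit weight assignment and a replacement for \autoref{cl:LM} and \autoref{cl:LR}, which is exactly the open part. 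In short, your plan correctly locates the difficulty but does not supply the new gadget or its analysis, so it does not prove \autoref{con:con1}.
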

    
    This conjecture seems plausible for two reasons: (i) there is a graph on $n$ vertices of pathwidth $k$ having parametric complexity $n^{\Omega(\log k-\log\log n)}$; (ii) there is an a graph on $nk$ vertices of pathwidth $k$ for which there exists an alternation-free sequence of length $n^{\Omega(\log k)}$ (\autoref{con:end2}). These results suggest that we might be very close to resolving~\autoref{con:con1}.
    
    We used alternation-free sequences as a combinatorial way to view parametric shortest paths. Although Kuchlbauer's counterexample~\cite[Example 3.11]{martina} shows that there are infeasible alternation-free sequences, the following question is interesting: what is the worst-case minimum number of paths that we need to delete from an infeasible alternation-free sequence of paths so that the sequence becomes feasible?
    
    \begin{Conjecture} \label{con:con2}
        There exists a universal constant $\gamma$ $(0<\gamma<1)$ such that, for every graph $G$, if the length of the longest alternation-free sequence of paths in $G$ is $L$, then the parametric complexity of $G$ is at least $L^{\gamma}$.
    \end{Conjecture}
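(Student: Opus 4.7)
The plan is to recast \autoref{con:con2} as a subsequence-extraction problem. Fix a graph $G$ together with a longest alternation-free sequence of $s$-$t$ paths $\sigma=(p_1,\ldots,p_L)$ in $G$. It would suffice to produce a linear weight assignment $w_e(\lambda)=a_e\lambda+b_e$ on $E(G)$ together with a subsequence $\sigma'\subseteq\sigma$ of size $|\sigma'|\geq L^{\gamma}$ such that each path in $\sigma'$ appears on the lower envelope of the $s$-$t$ costs. Realizability of a candidate $\sigma'$ is a linear feasibility program in the $2|E(G)|$ unknowns $(a_e,b_e)$ and the breakpoint parameters $\lambda_{p}$ for $p\in\sigma'$: one imposes $\Cost(p)(\lambda_p)\leq\Cost(q)(\lambda_p)$ for every $s$-$t$ path $q$ of $G$, with the $\lambda_p$ appearing in the same order as the paths in $\sigma'$.

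The key structural tool is the laminar decomposition of $\sigma$ induced by the alternation-free property itself (\autoref{prop:alt-free}). For every pair of vertices $(u,v)$ and every fixed $u$-to-$v$ subpath $\pi$, the set of indices $i$ with $p_i[u{:}v]=\pi$ forms a contiguous block of $[L]$. Varying $(u,v,\pi)$ produces a laminar family on $[L]$, and hence a rooted tree $\mathcal{T}$ of nested blocks whose leaves correspond to individual paths of $\sigma$. I would mimic the inductive construction of \autoref{lm:mainlemma} along $\mathcal{T}$: at each internal node, combine the weight assignments produced for the children by scaling them with carefully chosen constants $K_L, K_R$ and splicing them through a linking gadget, arranging that the breakpoint witnessed at each surviving leaf is preserved after composition. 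If the loss per composition step is only a constant factor and $\mathcal{T}$ has depth $O(\log L)$, then a feasible subsequence of size $L^{\Omega(1)}$ survives, yielding some $\gamma>0$.

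The main obstacle, and the reason this is only a conjecture, is Kuchlbauer's $10$-path counterexample~\cite[Example 3.11]{martina}: the alternation-free property does not imply global feasibility, and some paths genuinely must be discarded. The inductive composition above succeeds only if infeasibility is localized to individual nodes of $\mathcal{T}$, whereas Kuchlbauer's obstruction appears to have a genuinely global character that is invisible at any single pair $(u,v)$. The hard part will be to quantify how many paths such a global infeasibility can force one to discard; a complete proof probably needs a structural classification of the obstructions to feasibility, most naturally phrased via the convex-hull viewpoint of \autoref{sec:complmult}, followed by an averaging argument showing that every alternation-free sequence of length $L$ contains at most $O(L^{1-\delta})$ independent ``infeasibility certificates''. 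A natural first milestone is to establish the conjecture for graphs of bounded treewidth, for which Correa \emph{et al.}~\cite{jannik} already settle the series-parallel case with $\gamma=1$, and where the recursion tree $\mathcal{T}$ can plausibly be read off from a tree decomposition.
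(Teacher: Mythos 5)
This statement is \autoref{con:con2}, which the paper states as an open \emph{conjecture} and does not prove, so there is no reference proof to compare against. Your write-up is candid about this: you sketch a research program rather than a proof, and you correctly identify Kuchlbauer's $10$-path example as the reason the alternation-free property alone cannot be decisive. That framing is sound, and your observation that a proof would likely need to quantify and localize the obstructions to feasibility --- not just show that each piece is individually realizable --- is the right way to think about the difficulty.

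Two technical claims in your sketch are, however, incorrect as stated. First, the family of index blocks $\{i : p_i[u{:}v]=\pi\}$ obtained by ranging over $(u,v,\pi)$ is \emph{not} laminar. \autoref{prop:alt-free} only guarantees that each such block is a contiguous interval of $[L]$; two intervals of $[L]$ can cross (e.g.\ $\{1,2,3\}$ and $\{2,3,4\}$), and nothing in the alternation-free property forbids this. So the rooted tree $\mathcal{T}$ you want to induct along does not exist canonically; you would need to extract a laminar subfamily (standard interval tricks give one, but then the depth bound and the ``per-node loss'' argument both need to be re-derived for the chosen subfamily). Second, the realizability test you describe is not a \emph{linear} feasibility program: the constraint $\Cost(p)(\lambda_p)\leq\Cost(q)(\lambda_p)$ expands to $\lambda_p\sum_{e\in p}a_e+\sum_{e\in p}b_e \leq \lambda_p\sum_{e\in q}a_e+\sum_{e\in q}b_e$, which is bilinear in the unknowns $(a_e,\lambda_p)$, and there may be exponentially many paths $q$ to range over. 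This does not kill the approach (one can fix the $\lambda_p$ in advance, as the paper's own construction does with the intervals $\interval(j,m)$), but it means ``realizability'' is not a single LP and the subsequence cannot simply be found by LP duality or a separation oracle as your phrasing suggests. Both issues would need to be repaired before the sketch could be pushed further; neither, of course, affects the standing of the conjecture itself, which remains open.
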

    
    If this conjecture is true, then it would imply that alternation-free sequences bound parametric shortest paths from both below and above. More generally, the structure of a graph would completely determine its parametric complexity. However, all known methods for constructing graphs with large parametric complexity do not use the fact that the shortest paths form an alternation-free sequence, and thus it might require considerable insight to resolve~\autoref{con:con2}.
    
    The total number of different paths in a directed acyclic graph on $n$ vertices can be as high as $\exp(n)$. The original problem on parametric shortest paths considers linear edge weights in one variable, which yields shortest path complexity $n^{\Theta(\log n)}$. We generalized the edge weights to polynomials in one variable (\autoref{thm:complpoly}), and to linear forms in three variables (\autoref{thm:complmult}). In both cases, the upper bounds are only slightly higher than those for univariate linear edge weights, and nowhere near $\exp(n)$. It is thus natural to consider a further generalization to multivariate polynomial edge weights.
    
    \begin{Conjecture} \label{con:con3}
        Let $G$ be an $n$-vertex graph whose edge weights are polynomials in $k$ variables of degree at most $d$. Then the parametric complexity of $G$ is at most $n^{(\log n)^\varepsilon}$, where $\varepsilon=\poly(k,d)$.
    \end{Conjecture}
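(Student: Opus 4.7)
The plan is to combine two techniques from the paper: linearization via monomial lifting, and the Minkowski-sum polytope method of \autoref{sec:complmult}. For the first step, set $k' = \binom{k+d}{d}$ and treat each monomial $\lambda_1^{i_1}\cdots\lambda_k^{i_k}$ of total degree at most $d$ as an independent coordinate $\mu$. An edge weight $w_e(\vlambda)$ then becomes a linear form $\vec{\mu}\mapsto\va_e\cdot\vec{\mu}$ in $\mathbb{R}^{k'}$, and the cost of any $s$-$t$ path $P$ becomes $\va_P\cdot\vec{\mu}$. Although the physical parameters $\vec{\mu}$ actually lie on a Veronese-type variety inside $\mathbb{R}^{k'}$, every minimizer of $\va_P\cdot\vec{\mu}$ is still a vertex of $\conv(G)\subseteq\mathbb{R}^{k'}$; hence the parametric complexity of $G$ is bounded by $\complmult{k'}(n)$, and the task reduces to extending \autoref{thm:complmult} to all dimensions.

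For the second step, I would generalize the inductive bisection argument of \autoref{sec:complmult} to arbitrary dimension $k'$. The Minkowski-sum decomposition $\conv(G,\ell)\subseteq\conv\bigcup_v(\cA_{sv}+\cB_{vt})$, together with \autoref{prop:facedecomposition} and \autoref{lm:savingface}, carries over verbatim. The new ingredient is a recursion on dimension itself: for a face $F_{\cA}$ of $\cA_{sv}$ with $\dim(F_{\cA})=r\geq 1$, the projection $\Pi_{F_{\cA}}(\cB_{vt})$ lives in $\tS_{F_{\cA}}^{\perp}$, which has dimension $k'-r\leq k'-1$. Its vertices correspond to minimizers of $\va_P\cdot\vec{\mu}$ as $\vec{\mu}$ ranges over a $(k'-r)$-dimensional subspace, i.e., to optimal paths in a $(k'-r)$-parameter instance on $G_{vt}$; hence $\compl_0(\Pi_{F_{\cA}}(\cB_{vt}))\leq\complmult{k'-r}(n)$. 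Combining this bound across bisecting vertices $v$, across all choices of $F_{\cA}$ (and symmetrically for $F_{\cB}$), and across the two recursions on $\ell$ and on $k'$, an unrolling should yield $\complmult{k'}(n)\leq n^{(\log n)^{k'-1}+O(1)}$. Since $k'=\binom{k+d}{d}$ is polynomial in $k,d$ whenever one of them is bounded, this recovers the conjectured $n^{(\log n)^\varepsilon}$ with $\varepsilon=\poly(k,d)$ in those regimes.

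The main obstacle is replacing the Euler-formula step that gave the authors $\compl_{1,2}\leq 5\compl_0$ in three dimensions. In dimension $k'\geq 4$ no such constant-factor relationship between $f_0,f_1,\ldots,f_{k'-1}$ holds, and the Upper Bound Theorem only provides $\compl_r(\paths)\leq O(\compl_0(\paths)^{\lfloor k'/2\rfloor})$. A naive application of this bound at every recursion step would push the final exponent from $\poly(k,d)$ up to something exponential in $k'$, breaking the conjectured bound. A workable approach seems to be to carry the full $f$-vector as an inductive invariant, recursing on $k'$ from the outside and on $\ell$ on the inside so that the polynomial blowups in the $f$-vector accumulate only once overall rather than at every bisection level; alternatively, one might exploit that the polytopes $\cA_{sv}$ are not arbitrary but arise as Minkowski sums of edge-cost vectors (giving them zonotope-like structure), whose $f$-vectors are much better behaved than those of generic polytopes. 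Finally, for the truly general case of unbounded $k$ and $d$, the $\binom{k+d}{d}$ blowup from monomial linearization itself becomes the bottleneck, and some argument that does not explicitly linearize, perhaps working directly with Davenport-Schinzel-like bounds for multivariate polynomials as in \autoref{thm:complpoly}, would be required.
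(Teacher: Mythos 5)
The statement you were asked to prove is \autoref{con:con3}, which the paper states as an \emph{open conjecture} in its Conclusion section; there is no proof in the paper to compare against. Indeed, the authors explicitly flag the obstacle in the remarks after \autoref{thm:complmult}: ``our proof method fails when $k>3$.'' So the exercise here is not to reproduce a known argument but to assess whether your outline could in principle close the gap.

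Your first step — linearizing via the monomial (Veronese) lifting to $k'=\binom{k+d}{d}$ coordinates and observing that minimizers of $\va_P\cdot\vec{\mu}$ over the Veronese variety are still vertices of $\conv(G)$ — is sound and reduces the problem to bounding $\complmult{k'}(n)$. Your second step correctly identifies, and does not resolve, the two real obstructions. First, the only place the three-dimensional proof uses $k'\leq 3$ is the Euler-formula inequality $\compl_{1,2}(\paths)\leq 5\compl_0(\paths)$, which lets the recursion be closed in terms of a single quantity; in dimension $k'\geq 4$ the Upper Bound Theorem only gives $\compl_r(\paths)\leq O(\compl_0(\paths)^{\lfloor k'/2\rfloor})$, and naively substituting that into the bisection recursion compounds a polynomial blowup at every level, which destroys the target exponent. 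Your proposed fixes (carrying the whole $f$-vector as an invariant, or exploiting that the polytopes $\cA_{sv},\cB_{vt}$ are Minkowski sums of segments and hence zonotopes with $f$-vectors governed by hyperplane-arrangement bounds rather than by the Upper Bound Theorem) are both reasonable research directions, but neither is carried out, and it is not clear that either avoids a super-polynomial accumulation. Second, even granting $\complmult{k'}(n)\leq n^{(\log n)^{k'-1+O(1)}}$, the exponent becomes $\binom{k+d}{d}-1$, which is polynomial in $k,d$ only when one of them is bounded; for the conjecture's stated range this is itself a gap, as you acknowledge. In short: the structure of your plan is consistent with the paper's techniques and the obstacles you name are exactly the ones that make this a conjecture rather than a theorem, but the proposal does not constitute a proof.
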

    
    Finally, it is interesting to explore other function families, and to see if there exists a sequence of graphs $\{G_n\}$, where $G_n$ has $n$ vertices, with a well-defined set of edge weight functions on $\mathbb{R}$ such that the parametric shortest path complexity of $G_n$ is $\exp(n)$.
    
    \subsection*{Acknowledgments}
	
	We would like to thank Jannik Matuschke for introducing us to the problem and for several subsequent discussions, Tulasimohan Molli for helping us with the initial analysis of alternation-free sequences in planar graphs, Martina Kuchlbauer for sharing with us her example of infeasible alternation-free sequences, and Suhail Sherif for helping us simulate the problem using a computer program. We are grateful to Vaishali Surianarayanan for suggesting that we consider multivariate linear forms, and to Hariharan Narayanan for helpful discussions on convex polytopes. Finally we would like to thank the anonymous reviewers of this paper for their suggestions and comments, which helped rectify a few minors errors and improve the overall presentation of the paper. %\tbd{Reviewer 2 would like some more related work to be cited.}
    
    %\begin{shadequote}[r]{Neha Sangwan putri Ashok Kumar}
        %Please, let's go for lunch.
    %\end{shadequote}
    
	\bibliographystyle{alpha}
	\bibliography{ThirdPaper.bib}
	
	\appendix
	
	\section{The PRAM lower bounds}
	\label{sec:mulmuley-shah-discussion}
	Mulmuley \& Shah's~\cite{mulmuleyshah} Theorem 1.4 states the following.
    \begin{Claim}
    The Shortest Path Problem cannot be computed in $o(\log n)$ steps on an unbounded fan-in PRAM without bit operations using $\poly(n)$ processors, even if the weights on the edges are restricted to have bit-lengths $O((\log n)^2)$.
    \end{Claim}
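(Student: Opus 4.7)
The plan is to derive this claim from Mulmuley \& Shah's parametric complexity lower bound $\compl(n, O((\log n)^2)) = n^{\Omega(\log n)}$ combined with a ``sign-trace'' bound on what an unbounded fan-in PRAM without bit operations can compute. At a high level: we exhibit a hard family of instances coming from the parametric construction, argue that a fast PRAM can only traverse few distinct computational regions as the parameter varies, and then quantitatively match the two bounds.

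First, I would fix the family of weighted graphs $\{G_n\}$ from the parametric construction for which $\Cost(\lambda)$ has $n^{\Omega(\log n)}$ pieces in its lower envelope, with integer edge weights of the form $a_e \lambda + b_e$ whose coefficients have bit length $O((\log n)^2)$. A correct shortest-path algorithm, when run on the integer-weighted instance obtained by substituting any specific $\lambda = \lambda_0 \in \mathbb{Z}$, must output $\Cost(\lambda_0)$. Hence, as $\lambda_0$ varies, the algorithm is forced to ``track'' the piecewise linear function $\Cost$ in the sense that distinct pieces must be witnessed by distinct internal behaviours.

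Second, I would invoke the sign-trace analysis for a PRAM without bit operations. When the inputs are affine in a single scalar $\lambda$, an induction on the step count shows that after $T$ steps with $P$ processors each register holds a rational function in $\lambda$ whose numerator and denominator have degree at most $2^{O(T)}$; moreover, the sequence of comparison outcomes made by the machine partitions the $\lambda$-axis into at most $2^{O(T)} \cdot P^{O(T)}$ intervals on which the entire computational trace (control flow and symbolic register contents) is constant. On each such interval the output is a fixed rational function of $\lambda$, so the output viewed as a function of $\lambda$ has at most $2^{O(T)} \cdot P^{O(T)}$ breakpoints.

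Third, matching the two bounds: the PRAM must produce at least $n^{\Omega(\log n)}$ breakpoints in $\lambda$, so $T \log(PT) = \Omega((\log n)^2)$. For $P = n^{O(1)}$ this forces $T = \Omega(\log n)$, contradicting the $o(\log n)$ assumption and establishing the claim. The main obstacle is making the sign-trace bound precise for a PRAM model that allows concurrent reads, concurrent writes, and unbounded fan-in arithmetic in a single step; the delicate point is to verify that unbounded fan-in does not blow up the degree of the rational functions beyond $2^{O(T)}$ over $T$ steps, which requires carefully tracking the algebraic form of each memory cell's contents and arguing that gate fan-in contributes only polynomially in $P$ per step, not exponentially. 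This is the technical heart of Mulmuley's algebraic PRAM lower-bound framework, and the rest of the argument is essentially a counting matchup against the parametric complexity bound.
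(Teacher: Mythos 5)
The paper does not prove this Claim at all: it is quoted verbatim from Mulmuley \& Shah's Theorem~1.4 and cited as a known result. Immediately after stating it, the paper derives it (and the planar analogue, \autoref{thm:corollarymulmuley}) from the general transfer principle of \autoref{thm:mulmuleyshahunboundedPRAM}, and then explicitly flags that this transfer principle is formally established in the literature only for \emph{bounded} fan-in PRAMs (Mulmuley's Theorem~3.3); the unbounded fan-in version is asserted by Mulmuley \& Shah, but ``no formal justification of this latter claim seems to be available in the literature.'' Your proposal reproduces the standard Mulmuley-style outline (hard parametric family, sign-trace/cell-probe-region count for a PRAM without bit operations, then a counting matchup), and you correctly identify the delicate point as the unbounded fan-in analysis --- but you leave that step as ``the technical heart of Mulmuley's framework'' without filling it in. That is precisely the gap the paper itself acknowledges, so the proposal is an outline that restates, rather than closes, the open issue.

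Two secondary points. First, your interval-count bound $2^{O(T)}P^{O(T)}$ is stated too tightly. The degree of the rational functions held in registers after $t$ steps is up to $2^{O(t)}$, and within each existing trace-region the $P$ comparisons made at step $t$ can each contribute up to $2^{O(t)}$ breakpoints, so the recursion is $L_{t+1}\leq L_t\,(1+P\,2^{O(t)})$, giving $L_T\leq P^{O(T)}\,2^{O(T^2)}$. This $T^2$ is exactly why the general transfer principle reads ``$o(\sqrt{\log\Phi})$ steps with $2^{\sqrt{\log\Phi}}$ processors'' rather than a plain $\log\Phi$ bound. For $P=n^{O(1)}$ and $T=o(\log n)$ the conclusion $T=\Omega(\log n)$ still follows, so the end result is unaffected, but the bound as you wrote it is not the one the framework actually yields. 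Second, for this statement you should be careful to invoke Mulmuley \& Shah's general-graph parametric family (with $O((\log n)^2)$-bit weights), not the planar family constructed in this paper; you do phrase it that way, but it is worth keeping the distinction explicit since the planar family in this paper is tailored to a different bit-length bound.
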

    A more precise statement of their result (see also Theorem 4.2.1 of Pradyut Shah's PhD thesis~\cite{pradyutshah}) is the following: There exist constants $\alpha  > 0$ and $\epsilon > 0$, and an explicitly described family of weighted graphs $G_n$ ($G_n$ has $n$ vertices and weights that are $O((\log n)^2)$ bits long), such that for infinitely many $n$, every algorithm on an unbounded fan-in PRAM without bit operations with at most $n^\alpha$ processors requires at least $\epsilon \log n$ steps to compute the shortest $s$-$t$ path in $G_n$. (Their proof yields a constant $\alpha < 1$.)
    
    Our proof of~\autoref{thm:corollarymulmuley}, like Mulmuley \& Shah's proof of the corresponding theorem~\cite[Theorem 1.4]{mulmuleyshah}, is based on the following (see~\cite[Theorem 1.1]{mulmuleyshah}).
    \begin{Theorem}
    \label{thm:mulmuleyshahunboundedPRAM}
    Let $\Phi(n,\beta(n))$ be the parametric complexity of any homogeneous optimization problem where $n$ denotes the input cardinality and $\beta(n)$ the bit-size of the parameters. Then the decision version of the problem cannot be solved in the PRAM model without bit operations in $o(\sqrt{\log \Phi(n,\beta(n))})$ time using $2^{\sqrt{\log \Phi(n, \beta(n))}}$ processors even if we restrict every numeric parameter in the input to size $O(\beta(n))$.
    \end{Theorem}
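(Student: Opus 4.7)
My plan is to follow the Mulmuley flip argument. The core idea is that a bit-operation-free PRAM can be modelled as a symbolic computation over the input parameters, and the combinatorial complexity of the sign pattern it realizes grows only moderately with the running time $t$ and the number of processors $p$. I would fix a candidate PRAM algorithm $\cA$ using $p$ processors and running in time $t$ on inputs whose numeric parameters are restricted to $O(\beta(n))$ bits, and treat these parameters as formal variables $\vx$. At each step, every cell either stores a rational function of $\vx$ inherited from previous steps, or a new value obtained by applying $+,-,\times,\div$ to two such functions; all branching occurs via sign tests applied to rational expressions built so far.

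The technical heart is a partition bound, to be proved by induction on $t$: after $t$ steps the parameter space $\mathbb{R}^k$ (where $k$ is the number of varying parameters) decomposes into at most $R(p,t):=(c\,p)^{c\,t}$ semi-algebraic regions, on each of which $\cA$ follows a fixed execution path and each memory cell holds a single rational function of controlled algebraic complexity. The inductive step must argue that the $p$ simultaneous sign tests of one parallel step collectively refine the current partition by at most a polynomial-in-$p$ factor, rather than by the naive $2^{p}$ one would get by treating the tests as independent binary branches. This refinement is obtained by charging each sign test to the cell whose sign it reads and by showing that on any fixed current region only a bounded number of tests actually cut it in two.

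Next, I would convert $R(p,t)$ into a lower bound via parametric complexity. Because the optimization problem is homogeneous, the decision version (``is the optimum at most $\mu$?'') flips its answer at each of the $\Phi(n,\beta(n))$ breakpoints of the parametric optimum, even if the input parameters are restricted to bit-size $O(\beta(n))$. Thus any correct algorithm must induce a partition of the restricted parameter space that distinguishes at least $\Phi(n,\beta(n))$ sign-classes in the output cell, forcing $R(p,t)\geq \Phi(n,\beta(n))$. Substituting $p=2^{\sqrt{\log \Phi}}$ and $t=o(\sqrt{\log \Phi})$ yields $(c\,p)^{c\,t}=2^{O(t\sqrt{\log \Phi})}=2^{o(\log \Phi)}<\Phi$, a contradiction.

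The main obstacle is the partition bound in the inductive step: the naive accounting of the $p$ parallel sign tests gives a factor of $2^p$ per step, which would overwhelm the argument. Obtaining the polynomial-in-$p$ refinement requires the careful algebraic-geometry bookkeeping that forms the core of the Mulmuley--Shah flip analysis, relying crucially on the absence of bit operations so that every cell remains a rational function of $\vx$ (rather than an arbitrary bit-extraction). Once the partition bound is in hand, the conversion to parametric complexity and the final substitution are clean counting and algebra exercises.
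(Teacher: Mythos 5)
The paper does not prove this theorem: it is stated in Appendix~A as a citation of Mulmuley~\cite[Theorem 3.3]{mulmuley} and of Mulmuley \& Shah~\cite{mulmuleyshah}, and the paper in fact explicitly flags that the \emph{unbounded fan-in} version of the result has, to the authors' knowledge, no formal proof in the literature --- only an informal justification in Shah's thesis. So there is no ``paper proof'' for your attempt to match; you are attempting to reconstruct Mulmuley's parametric-complexity lower bound from scratch, which is a much larger task than the paper itself undertakes.

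Your sketch does capture the right high-level shape of the flip argument (model the PRAM symbolically, bound the number of semi-algebraic regions realized after $t$ steps with $p$ processors, then argue that distinguishing $\Phi$ breakpoints forces at least $\Phi$ regions). But the central claim --- the partition bound $R(p,t)=(c\,p)^{c\,t}$, with only a polynomial-in-$p$ refinement per parallel step rather than the naive $2^p$ --- is exactly the substantive content of Mulmuley's theorem, and your proposal does not actually establish it. The phrase ``charging each sign test to the cell whose sign it reads and showing only a bounded number of tests cut the region'' is an assertion, not an argument; Mulmuley's proof instead works via a carefully constructed parametric curve on which the algorithm is analyzed, bounding the number of sign changes of low-degree rational functions along that curve. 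Moreover, this accounting is where the bounded-fan-in versus unbounded-fan-in distinction bites: with unbounded fan-in, a single step can aggregate the results of many sign tests, and the bookkeeping that keeps the per-step refinement polynomial is precisely the piece the present paper says is not rigorously written down anywhere. Your proposal neither resolves that gap nor signals awareness of it, so as written the argument is incomplete at its most essential step. Additionally, the homogeneity-to-breakpoints conversion in your third paragraph needs more care: it is not enough that the optimum function has $\Phi$ pieces; one must exhibit a one-parameter family of \emph{inputs of bounded bit-size} on which the decision answer alternates $\Omega(\Phi)$ times, which is a constructive step (and the reason the theorem's statement carries the bit-size restriction).
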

    A version of~\autoref{thm:mulmuleyshahunboundedPRAM} for \emph{bounded fan-in} PRAMs is established in Mulmuley~\cite[Theorem 3.3]{mulmuley}; Mulmuley \& Shah \cite{mulmuleyshah} state that this theorem is also applicable to unbounded fan-in PRAMs. Unfortunately, no formal justification of this latter claim seems to be available in the literature (see Shah~\cite[Page 36]{pradyutshah} for an informal justification).
    
    \section{The weighted graph matching problem} \label{sec:weighted-graph-matching}
	
	In this section, we will see that the shortest path problem reduces to the \textsc{Weighted Graph Matching} problem. More precisely, given a directed acyclic graph $G$ with non-negative edge weights and two special vertices $s$ and $t$, we will show that a minimum weight perfect matching in $G'$ (a slight modification $G$) can be used to compute an $s$-$t$ shortest path in $G$. We now describe how to construct $G'$ from $G$.
	
	For every vertex $v\in V(G)\setminus\{s,t\}$, replace $v$ by two vertices $v_{\operatorname{in}}$ and $v_{\operatorname{out}}$ and add a $0$-weight edge $(v_{\operatorname{in}},v_{\operatorname{out}})$ between them. This edge is the only outgoing edge from $v_{\operatorname{in}}$ and the only incoming edge to $v_{\operatorname{out}}$. The in-neighbours of $v_{\operatorname{in}}$ are the in-neighbours of $v$, and the out-neighbours of $v_{\operatorname{out}}$ are the out-neighbours of $v$. Call this new graph $G'$.
	
	Let $M$ be a minimum weight perfect matching in $G'$ (note that $G'$ always has a perfect matching). Let $v_{\operatorname{in}}^1$ be the partner of $s$ in $M$. This means that the edge $(v_{\operatorname{in}}^1,v_{\operatorname{out}}^1)$ is not in $M$. Now, let $v_{\operatorname{in}}^2$ be the partner of $v_{\operatorname{out}}^1$ in $M$. This means that the edge $(v_{\operatorname{in}}^2,v_{\operatorname{out}}^2)$ is not in $M$. Carrying this argument forward, we obtain that the edge $(v_{\operatorname{in}}^r,v_{\operatorname{out}}^r)$ is not in $M$ (for some $r$), where $v_{\operatorname{out}}^r$ is the partner of $t$ in $M$.
	
	It is easy to check that the path $(s,v^1,v^2,\ldots,v^r,t)$ is an $s$-$t$ shortest path in $G$ (otherwise, a path of lower weight can be used to obtain a matching of lower weight than $M$). In fact, the cost of this path is precisely the weight of $M$, as all the other edges of $M$ are of weight $0$.
	
	\section{Thin grids: an application of alternation-free sequences} \label{sec:thinvsthick}
    
    In this section, we will see that alternation-free sequences can be used to derive upper bounds on the parametric shortest path complexity for a subclass of planar graphs known as grid graphs. %This section should be treated as a ``warm up" to our main result, which concerns the more generic class of planar graphs.
    
    \begin{Definition}
        The $p\times q$ directed grid graph, denoted by $\Grid_{p,q}$, is defined as follows.
        \begin{enumerate}
            \item[$($a$)$] $V(\Grid_{p,q})=\{(i,j) : 1\leq i\leq p, 1\leq j\leq q\}$.
            \item[$($b$)$] $((i_1,j_1),(i_2,j_2))\in E(\Grid_{p,q})$\footnote{The ordering of $(i_1,j_1)$ and $(i_2,j_2)$ is important since this is a directed graph.} if and only if $(i_1=i_2$ and $j_2=j_1+1)$ or $(j_1=j_2$ and $i_2=i_1+1)$.
        \end{enumerate}
        In other words, the vertices of $\Grid_{p,q}$ form a 2D lattice, and a vertex is connected to the vertex immediately to its right and the vertex immediately above it.
    \end{Definition}
    Let $\compl^{\gr}(p,q,\beta)$ be the parametric shortest path complexity of $\Grid_{p,q}$ where the bit lengths of the coefficients in the weights of the edges are bounded by $\beta$. The planar graphs that we construct as part of our main result (\autoref{thm:mainresult}) can be remodeled into grid graphs at the expense of a small (polynomial factor) blow-up in size of the graph. Thus, $\compl^{\gr}(n,n,O((\log n)^3))\geq n^{\Omega(\log n)}$. This settles the parametric complexity for \emph{square} grids. We ask the same question for thin \emph{rectangular} grids. These are the graphs $\Grid_{p,q}$ with $p\ll q$. Note that $\compl^{\gr}(1,n,\infty)\leq 1$ and $\compl^{\gr}(2,n,\infty)\leq n$ trivially. The problem becomes nontrivial for $3\times n$ grids. We have the following result.
    
    \begin{Theorem} \label{thm:thingrids}
        $\compl^{\gr}(3,n,\infty) \leq 5n$.
    \end{Theorem}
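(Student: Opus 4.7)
The plan is to parameterize each $s$-$t$ path of $\Grid_{3,n}$ (taking $s=(1,1)$ and $t=(3,n)$) by the pair of columns $1\le a\le b\le n$ at which it performs its first and second up-moves; there are $\binom{n+1}{2}$ such paths in total. The key structural fact is that the cost decomposes additively: writing $r_{i,j}$ for the $(i,j)\to(i,j+1)$ right-edge weight, $u^{(1)}_j$ and $u^{(2)}_j$ for the two up-edge weights at column $j$, and $S_2(k)=\sum_{j<k}r_{2,j}$, and bundling the row-$1$ prefix, the first up-edge, and $-S_2(a)$ into $F(a)(\lambda)$ and the second up-edge, the row-$3$ suffix, and $S_2(b)$ into $G(b)(\lambda)$, the row-$2$ contribution $S_2(b)-S_2(a)$ telescopes, yielding
\[
\Cost((a,b))(\lambda)=F(a)(\lambda)+G(b)(\lambda),
\]
with $F,G$ affine in $\lambda$. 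The task then becomes to bound the pieces of $\Cost(\lambda)=\min_{a\le b}[F(a)+G(b)](\lambda)$.

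My plan is to case-split on the unconstrained minimizers. Let $F^*=\min_aF(a)$ and $G^*=\min_bG(b)$ be the individual lower envelopes (each with at most $n$ pieces), attained by step functions $a^*(\lambda),b^*(\lambda)$ with at most $n-1$ jumps each. Let $U=\{\lambda:a^*(\lambda)\le b^*(\lambda)\}$; its boundary contains at most $2(n-1)$ points. On $U$ the unconstrained optimum is feasible, so $\Cost=F^*+G^*$, contributing at most $2n-1$ pieces. On $U^c$ the constraint $a\le b$ is active, and I would show that the trajectory of the constrained optimum adds at most $O(n)$ further pieces. Combining the $2n-1$ pieces from $U$, the $O(n)$ pieces from $U^c$, and the at most $2(n-1)$ boundary transitions (which merge continuously with neighbours because $a^*=b^*$ along $\partial U$ forces the feasible and infeasible envelopes to share the diagonal path $(a^*,a^*)$) gives at most $5n$ pieces.

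The main obstacle will be the tight count on $U^c$. A tempting but false shortcut is that the constrained optimum must lie on the diagonal $a=b$ when the constraint is active: a five-vertex example with $F=(100,0,1,1,-1)$ and $G(b)(\lambda)=G_b+b\lambda$ where $G_b=(-1,1,1,0,100)$ produces the off-diagonal pair $(2,4)$ as constrained optimum on a nontrivial $\lambda$-interval. The valid argument must invoke the alternation-free property of \autoref{prop:alt-free} applied to the vertex pairs $(s,(2,j))$ and $((2,j),t)$ for each column $j$, which force the $a$-subsequence of shortest paths sharing a common $b$-value (and symmetrically the $b$-subsequence for fixed $a$) to be alternation-free over the appropriate subranges. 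Coupled with the ``compatible pair'' characterization of the constrained optimum on $U^c$, namely $F(a^\dagger)(\lambda)=\min_{a\le b^\dagger}F(a)(\lambda)$ and $G(b^\dagger)(\lambda)=\min_{b\ge a^\dagger}G(b)(\lambda)$, and the pointwise monotonicity in $c$ of the partial envelopes $F^*_c=\min_{a\le c}F(a)$ (non-increasing) and $G^*_c=\min_{b\ge c}G(b)$ (non-decreasing), this should cap off-diagonal contributions at $O(n)$ and complete the proof.
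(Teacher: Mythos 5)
Your decomposition $\Cost(\lambda)=\min_{a\le b}\bigl[F(a)(\lambda)+G(b)(\lambda)\bigr]$ is correct, and it corresponds to the same middle-row parameterization the paper uses (there, entry vertex $v_i$ and exit vertex $v_j$). But from there your route diverges sharply from the paper's: the paper never touches envelopes, projections, or constrained argmins; instead it bounds the length of any \emph{alternation-free sequence of paths} in $\Grid_{3,n}$ by assigning a middle-row vertex to each path in the sequence and showing each vertex can be assigned at most five times (one ``first appearance'' plus at most two changes of its incoming edge plus at most two changes of its outgoing edge, the latter two bounds coming from the alternation-free property). Your analytical route and the paper's combinatorial route are genuinely different strategies.

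The trouble is that your argument is not a proof. The entire weight rests on the claim that the constrained regime $U^c$ contributes $O(n)$ pieces, and you explicitly write ``I would show\dots'' and ``this should cap off-diagonal contributions\dots'' without doing so. You have correctly diagnosed (via your own five-vertex counterexample) that the naive ``constrained optimum sits on the diagonal $a=b$'' shortcut fails, and you gesture at the compatible-pair characterization $F(a^\dagger)=\min_{a\le b^\dagger}F(a)$, $G(b^\dagger)=\min_{b\ge a^\dagger}G(b)$ plus monotonicity of the partial envelopes $F^*_c$, $G^*_c$ plus an appeal to \autoref{prop:alt-free} on $(s,(2,j))$ and $((2,j),t)$ --- but none of this is assembled into an actual count of breakpoints. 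That assembly is precisely the hard part: the compatible-pair set is not a simple one-dimensional trajectory, the partial envelopes $\tilde F(c)=\min_{a\le c}F(a)$ each have up to $c$ pieces, and $\min_c[\tilde F(c)+G(c)]$ does not obviously have few pieces without further structure. Additionally, the claim that the boundary transitions ``merge continuously with neighbours'' is not justified: at a boundary point of $U$ both $a^*$ and $b^*$ can jump simultaneously (and by more than one index), so $a^*=b^*$ on $\partial U$ is not automatic and the feasible/infeasible envelopes need not share a diagonal path there. Even granting the merges, your piece count $(2n-1)+O(n)+2(n-1)$ only hits $5n$ if the hidden $O(n)$ constant is essentially $1$, which is exactly the part left unproven. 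As written, this is a plausible but incomplete plan with the central lemma missing.
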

    
    \begin{proof} Our proof is via on an upper bound on the maximum length of an alternation-free sequence of paths in $\Grid_{3,n}$. Now $\Grid_{3,n}$ has $3$ rows and $n$ columns; let the vertices in its middle row be $\{v_1,v_2,\ldots,v_n\}$, arranged in increasing order of their distance from $s$. Our proof strategy is as follows. Given an alternation-free sequence of paths $\paths$ in $\Grid_{3,n}$, we will assign one $v_i$ to each path in $\paths$ (different paths may be assigned the same $v_i$). Then we will show that each $v_i$ can be assigned to at most $5$ paths in $\paths$, thus proving an upper bound of $5n$ on the length of $\paths$.
    
    Since every path from $s$ to $t$ must pass through the middle row, an $s$-$t$ path may be defined by the two vertices it uses to enter and leave the middle row. More formally, for $1\leq i\leq j\leq n$, let $P(i,j)$ be the path from $s$ to $t$ in which $v_i$ is the first vertex of the middle row that lies on $P(i,j)$ and $v_j$ is the last vertex of the middle row that lies on $P(i,j)$. Using this notation, let the alternation-free sequence be $\paths=(P(i_1,j_1),P(i_2,j_2),\ldots,P(i_T,j_T))$. We will prove that $T\leq 5n$.
    
    We now describe how we assign a middle row vertex to each path in $\paths$. For this, we will compare the $k$-th path of $\paths$ with all earlier paths of $\paths$ as follows. For each $k\in\{1,2,\ldots,T\}$, consider the maximum $r$ ($1\leq r\leq k-1$) such that $[i_r,j_r]\cap[i_k,j_k]\neq\emptyset$. Three cases arise.
    
    \begin{enumerate}
        \item[(a)] If no such $r$ exists, then assign $v_{i_k}$ to $P(i_k,j_k)$.
        \item[(b)] If $i_r\neq i_k$, then assign $v_\ell$ to $P(i_k,j_k)$, where $\ell=\max\{i_r,i_k\}$.
        \item[(c)] If $(i_r=i_k$ and $j_r\neq j_k)$, then assign $v_\ell$ to $P(i_k,j_k)$, where $\ell=\min\{j_r,j_k\}$.
    \end{enumerate}
    
    First, note that these are the only possible cases. If case (a) is false (that is, an $r$ does exist), then at least one out of cases (b) or (c) is true, since all the paths in $\paths$ are distinct.
    
    The crucial observation now is that, in $P(i_k,j_k)$, either the vertex $v_\ell$ appears for the first time in $\paths$ (case (a))\footnote{That is, $v_\ell$ is not part of $P(i_r,j_r)$, for all $1\leq r\leq k-1$.}, or the incoming edge to $v_\ell$ has changed since its \emph{most recent} occurrence in $\paths$ (case (b)), or the outgoing edge from $v_\ell$ has changed since its \emph{most recent} occurrence in $\paths$ (case (c)). Fix a middle row vertex $v_m$. Clearly, $v_m$ can appear for the first time in $\paths$ at most once. Also, once $v_m$ has appeared in $\paths$, the incoming and outgoing edges of $v_m$ in later paths of $\paths$ can each change at most two times (see~\autoref{cl:middlelayer} below). Thus, $v_m$ can be assigned to at most $5$ different paths in $\paths$. Summing over all choices of $v_m$, we get $|\paths|=T\leq 5n$.
    \end{proof}
    
    \begin{Claim} \label{cl:middlelayer}
    Let $v_m$ and $\paths$ be as defined in the proof of~\autoref{thm:thingrids}. Then the incoming and outgoing edges of $v_m$ in $\paths$ can each change at most two times.
    \end{Claim}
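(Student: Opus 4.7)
The plan is to apply the alternation-free property (\autoref{prop:alt-free}) with carefully chosen vertex pairs. First, I would restrict attention to the subsequence $\paths_m$ of $\paths$ consisting of those paths that actually pass through $v_m$; these are exactly the paths $P(i,j)$ with $i\leq m\leq j$. In such a path, the incoming edge of $v_m$ is the horizontal edge $v_{m-1}\to v_m$ when $i<m$ (call this type $A$) and the vertical edge $(1,m)\to v_m$ when $i=m$ (type $B$); the outgoing edge depends analogously on whether $j=m$ or $j>m$.

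The key observation is that the subpath from $s$ to $v_m$ inside $P(i,j)$ is determined entirely by the entry column $i$: it runs along the bottom row to column $i$, turns up to the middle row, and proceeds rightwards to $v_m$. Consequently, if two paths in $\paths_m$ occurring at positions $a<c$ in $\paths$ share the same entry column, the alternation-free property applied to $u=s, v=v_m$ forces every intermediate path in $\paths_m$ to share that entry column as well. Thus the sequence of entry columns along $\paths_m$ splits into contiguous runs of equal values, with different runs carrying pairwise distinct values.

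From here the conclusion for the incoming edge is immediate: since $i=m$ can appear in at most one run (by distinctness of the run values), the type-$B$ labels form a single contiguous block, so the in-type sequence has the shape $A^{*} B^{*} A^{*}$ and exhibits at most two changes. For the outgoing edge, I would apply the same argument with $u=v_m, v=t$: the subpath from $v_m$ to $t$ inside $P(i,j)$ is determined entirely by the exit column $j$, so by the same reasoning the sequence of exit columns along $\paths_m$ is again a disjoint union of runs with distinct values, and the value $j=m$ (type $D$) occupies at most one contiguous block, yielding at most two changes.

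The main obstacle to overcome is choosing a vertex pair for which \autoref{prop:alt-free} can legitimately be invoked. Naive choices such as $u=v_{m-1}, v=v_m$ fail because $v_{m-1}$ need not appear on type-$B$ paths, so the hypothesis of the proposition is violated. The trick is to use the universal endpoints $s$ and $t$, which by definition lie on every path in $\paths_m$; this lets the alternation-free property be applied freely, and the single-parameter nature of the subpaths ($i$ on one side, $j$ on the other) then forces the desired run structure with essentially no further calculation.
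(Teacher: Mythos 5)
Your argument is correct and essentially the same as the paper's: both rely on the observation that the $s$-to-$v_m$ (resp.\ $v_m$-to-$t$) subpath of any path through $v_m$ is uniquely determined by its middle-row entry (resp.\ exit) column, and both invoke~\autoref{prop:alt-free} with the vertex pair $(s,v_m)$ (resp.\ $(v_m,t)$). The paper phrases the conclusion as a direct contradiction (three changes of the incoming edge forces a subpattern $x, v_{m-1}, x$, with identical $s$-to-$v_m$ subpaths at the two ends but a different one in the middle, violating alternation-freeness), whereas you package the same fact as a run structure on entry columns yielding the $A^{*}B^{*}A^{*}$ shape; both deliver the same bound.
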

    \begin{proof}
    We will show that the incoming edge to $v_m$ in $\paths$ can change at most two times. Let $\pred_r(v_m)$ be the predecessor of $v_m$ on the path $P(i_r,j_r)$. Since $v_m$ has in-degree $2$ (for $m>1$), $\pred_r(v_m)$ is either $v_{m-1}$ or $x$, for some vertex $x$ in the first row of $\Grid_{3,n}$. Note that the edge $(x,v_m)$ fixes the $s$-$v_m$ subpath, and changing the incoming edge of $v_m$ in a later path in $\paths$ amounts to abandoning that $s$-$v_m$ subpath. Therefore, the edge $(x,v_m)$ does not occur in any subsequent path in $\paths$. Let us now make this argument formal.
    
    Suppose there exist four paths $P(i_a,j_a), P(i_b,j_b), P(i_c,j_c), P(i_d,j_d)$ in $\paths$ with $a<b<c<d$ such that $\pred_a(v_m)=\pred_c(v_m)=v_{m-1}$ and $\pred_b(v_m)=\pred_d(v_m)=x$. This means that the incoming edge to $v_m$ has changed \emph{three times}. Since there is a unique path from $s$ to $x$ in $\Grid_{3,n}$, we have $P(i_b,j_b)[s,v_m]=P(i_d,j_d)[s,v_m]\neq P(i_c,j_c)[s,v_m]$, implying that $\paths$ is not alternation-free, which is a contradiction.
    
    It can also be shown that the outgoing edge from $v_m$ in $\paths$ can change at most two times. We skip the proof because it is along similar lines.
    %Let $\suc_r(v_m)$ be the successor of $v_m$ on the path $P(i_r,j_r)$. Since $v_m$ has out-degree $2$ (for $m<n$), $\suc_r(v_m)$ is either $v_{m+1}$ or $y$, for some vertex $y$ in the third row of $\Grid_{3,n}$. Note that the edge $(v_m,y)$ fixes the $v_m$-$t$ subpath, and changing the outgoing edge of $v_m$ in a later path in $\paths$ amounts to abandoning that $v_m$-$t$ subpath. Therefore, the edge $(v_m,y)$ does not occur in any path after that in $\paths$. Let us now make this formal.
    %Suppose there exist four paths $P(i_a,j_a), P(i_b,j_b), P(i_c,j_c), P(i_d,j_d)$ in $\paths$ with $a<b<c<d$ such that $\suc_a(v_m)=\suc_c(v_m)=v_{m+1}$ and $\suc_b(v_m)=\suc_d(v_m)=y$. This means that the outgoing edge of $v_m$ has changed \emph{three times}. Since there is a unique path from $y$ to $t$ in $\Grid_{3,n}$, we have $P(i_b,j_b)[v_m,t]=P(i_d,j_d)[v_m,t]\neq P(i_c,j_c)[v_m,t]$, implying that $\paths$ is not alternation-free, which is a contradiction. 
    \end{proof}
    
    It is not known if $\compl^{\gr}(4,n,\infty) \leq O(n)$. However, a simple induction on the grid size shows the following generalization of~\autoref{thm:thingrids}.
    
    \begin{Theorem} \label{thm:lasttheorem}
        For $3\leq p\leq q$, we have $\compl^{\gr}(p,q,\infty)\leq O(q(\log q)^{p-3})$.
    \end{Theorem}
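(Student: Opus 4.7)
The plan is to induct on $p$, with the base case $p = 3$ supplied by \autoref{thm:thingrids} (which gives $\compl^{\gr}(3, q, \infty) \le 5q$). Writing $A(p', q') := \compl^{\gr}(p', q', \infty)$, assume as the inductive hypothesis that $A(p-1, q') \le C_{p-1}\, q' (\log q')^{p-4}$ for all $q'$. For the inductive step, fix the middle column $c^* = \lceil q/2\rceil$, and for each parametric shortest path $P$ in $\Grid_{p,q}$ record the pair $(r_1(P), r_2(P))$ of rows at which $P$ first enters and last exits column $c^*$, where $1 \le r_1(P) \le r_2(P) \le p$. Every such $P$ then decomposes uniquely as a left sub-path from $(1,1)$ to $(r_1, c^*)$ inside $\Grid_{r_1, c^*}$, followed by the forced vertical segment up column $c^*$ from $(r_1, c^*)$ to $(r_2, c^*)$, followed by a right sub-path from $(r_2, c^*)$ to $(p, q)$ inside a sub-grid of dimensions $(p - r_2 + 1) \times (q - c^* + 1)$.

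I would next apply \autoref{prop:alt-free} to the vertex pairs $(s, (r_1, c^*))$ and $((r_2, c^*), t)$ to conclude that within the sub-sequence of parametric shortest paths sharing a fixed pair $(r_1, r_2)$, each distinct left sub-path occupies a contiguous block of indices, and likewise for the distinct right sub-paths. Since every transition between two consecutive distinct paths in the sub-sequence must alter the left or the right sub-path, a Gusfield-style counting gives sub-sequence length at most $A(r_1, c^*) + A(p - r_2 + 1, q - c^* + 1) - 1$. Summing over the $\binom{p+1}{2}$ pairs $(r_1, r_2)$, the only pairs preserving $p$ rows on one side are $(r_1, r_2) = (1, 1)$ and $(r_1, r_2) = (p, p)$, each contributing at most $A(p, q/2)$; all remaining $O(p^2)$ pairs have at most $p-1$ rows on both sides and so, by the inductive hypothesis, together contribute at most $O(p^2)\cdot A(p-1, q/2)$. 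This yields the recurrence
\[
A(p, q) \;\le\; 2\,A(p, q/2) \;+\; c_p\, A(p-1, q/2), \qquad c_p = O(p^2).
\]

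Finally, I would unroll this recurrence $\log q$ times. The recursive part contributes $2^{\log q}\, A(p, 1) = q$ (using $A(p, 1) = 1$), while the non-recursive part, after substituting the inductive bound and reindexing by $j = \log q - i - 1$, reduces to $O(c_p C_{p-1}\, q)\cdot \sum_{j=0}^{\log q} j^{p-4} = O(q\, (\log q)^{p-3})$. Together these give $A(p, q) = O(q\, (\log q)^{p-3})$, as desired. The main obstacle will be justifying the ``left plus right minus one'' count within each $(r_1, r_2)$-sub-sequence: one must verify that, once a particular left sub-path stops appearing it cannot reappear later (and likewise for right sub-paths), which is a direct but delicate consequence of \autoref{prop:alt-free} applied to the endpoint pairs above.
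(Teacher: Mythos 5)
Your proposal is correct and follows essentially the same route as the paper: induct on $p$ with base case from the $3\times q$ result, split the grid at the middle column $\lceil q/2\rceil$, partition the alternation-free sequence by how paths traverse that column, bound each class by a Gusfield-style ``left $+$ right'' count using the alternation-free property for subpaths, and unroll the resulting recurrence. The one cosmetic difference is the granularity of the partition: the paper groups paths only by $r_1$, the first (lowest) middle-column vertex on the path, yielding $p$ classes with bound $\compl^{\gr}(i,\lceil q/2\rceil,\infty)+\compl^{\gr}(p-i+1,q-\lfloor q/2\rfloor+1,\infty)$ per class, whereas you group by the pair $(r_1,r_2)$, yielding $O(p^2)$ classes; since $p$ is constant relative to $q$ this only changes the constant $c_p$, and both produce $\compl^{\gr}(p,q,\infty)\le 2\,\compl^{\gr}(p,\Theta(q/2),\infty)+O(q(\log q)^{p-4})$ and hence the stated bound. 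You are right that the ``left plus right minus one'' count is the delicate step; the paper asserts the per-class bound without spelling it out, and your sketch via contiguous blocks from \autoref{prop:alt-free} is the correct way to fill that in.
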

    
    \begin{proof}
    The proof is by induction on $p$. For the base case ($p=3$),~\autoref{thm:thingrids} implies that $\compl^{\gr}(3,q,\infty)\leq O(q)$. For the inductive case, fix a value of $p$ (where $4\leq p\leq q$), and assume that $\compl^{\gr}(p',q,\infty)\leq O(q(\log q)^{p'-3})$ for all $3\leq p'<p$. Now $\Grid_{p,q}$ has $p$ rows and $q$ columns; let the vertices in its $\ceil{\frac{q}{2}}$-th column be $\{u_1,u_2,\ldots,u_p\}$, arranged in increasing order of their distance from $s$. Our proof strategy is as follows. Let $\paths$ be the longest alternation-free sequence of paths in $\Grid_{p,q}$. We will partition $\paths$ into $p$ alternation-free subsequences\footnote{A subsequence of an alternation-free sequence is also alternation-free.} $\paths_1, \paths_2, \ldots, \paths_p$, and provide an upper bound for each. The sum of these $p$ upper bounds is clearly an upper bound on $|\paths|$.
    
    Let $\paths=\paths_1\cupdot\paths_2\cupdot\cdots\cupdot\paths_p$, where the sequence of paths in each $\paths_i$ respects its original ordering in $\paths$. The partitions are defined as follows. For each path $P\in\paths$, we have $P\in\paths_i$ if and only if $u_i$ is the first vertex of the $\ceil{\frac{q}{2}}$-th column that lies on $P$. For each $\paths_i$, we have \[|\paths_i|\leq \compl^{\gr}\left(i,\ceil{\frac{q}{2}},\infty\right)+\compl^{\gr}\left(p-i+1,q-\floor{\frac{q}{2}}+1,\infty\right).\] We are now ready to provide an upper bound for $\compl^{\gr}(p,q,\infty)$.
    \begin{align*}
        \compl^{\gr}(p,q,\infty)=|\paths|=\sum_{i=1}^{p}|\paths_i|
        &\leq \sum_{i=1}^{p}\left(\compl^{\gr}\left(i,\ceil{\frac{q}{2}},\infty\right)+\compl^{\gr}\left(p-i+1,q-\floor{\frac{q}{2}}+1,\infty\right)\right)\\
        &\leq 2\sum_{i=1}^{p}\compl^{\gr}\left(i,q-\floor{\frac{q}{2}}+1,\infty\right)\\
        &\leq \underset{\mathrm{Term\ I}}{\underbrace{2\,\compl^{\gr}\left(p,q-\floor{\frac{q}{2}}+1,\infty\right)}}+\underset{\mathrm{Term\ II}}{\underbrace{2\sum_{i=1}^{p-1}\compl^{\gr}\left(i,q-\floor{\frac{q}{2}}+1,\infty\right)}}.
    \end{align*}
    $\mathrm{Term\ II}$ can be solved by invoking the induction hypothesis and $\mathrm{Term\ I}$ becomes part of the recurrence.
    \begin{align*}
        \compl^{\gr}(p,q,\infty)&\leq 2\,\compl^{\gr}(p,r,\infty)+2\sum_{i=1}^{p-1}\compl^{\gr}(i,r,\infty) \qquad(\text{where } r=q-\floor{q/2}+1)\\
        &\leq 2\,\compl^{\gr}(p,r,\infty)+2c_1\sum_{i=1}^{p-1}r(\log r)^{i-3} \quad\ \,(\text{applying induction; here } c_1 \text{ is a constant})\\
        &\leq 2\,\compl^{\gr}(p,r,\infty)+2c_1r(c_2(\log r)^{p-4}) \quad(\text{the constant } c_2 \text{ handles lower order terms})\\
        &\leq 2\,\compl^{\gr}(p,r,\infty)+O(q(\log q)^{p-4}).
    \end{align*}
    Since $r$ is roughly $q/2$, evaluating this final recurrence gives $\compl^{\gr}(p,q,\infty)\leq O(q(\log q)^{p-3})$.
    \end{proof}
    \paragraph{Remark:} \autoref{thm:lasttheorem} only helps for small values of $p$, that is, when $p\leq \frac{(\log q)^2}{\log \log q}$. For large $p$, the generalized upper bound of Gusfield gives a much better upper bound.
\end{document}